\def\@tocline#1#2#3#4#5#6#7{\relax
  \ifnum #1>\c@tocdepth 
  \else
    \par \addpenalty\@secpenalty\addvspace{#2}%
    \begingroup \hyphenpenalty\@M
    \@ifempty{#4}{%
      \@tempdima\csname r@tocindent\number#1\endcsname\relax
    }{%
      \@tempdima#4\relax
    }%
    \parindent\z@ \leftskip#3\relax \advance\leftskip\@tempdima\relax
    \rightskip\@pnumwidth plus4em \parfillskip-\@pnumwidth
    #5\leavevmode\hskip-\@tempdima
      \ifcase #1
       \or\or \hskip 2em \or \hskip 2em \else \hskip 3em \fi%
      #6\nobreak\relax
    \dotfill\hbox to\@pnumwidth{\@tocpagenum{#7}}\par
    \nobreak
    \endgroup
  \fi}
 \providecommand{\F}{\mathbb{F}}
\title[AG codes \& near-optimal list decoding]{Optimal rate list decoding over bounded alphabets using algebraic-geometric codes}
\author[V. Guruswami]{Venkatesan Guruswami}
\address{Computer Science Department \\  Carnegie Mellon University \\ Pittsburgh, USA.}
\email{venkatg@cs.cmu.edu}
\thanks{Extended abstracts announcing these results were presented at the 2012 and 2013 ACM Symposia on Theory of Computing (STOC)~\cite{GX-stoc12,GX-stoc13}. This is a merged and significantly revised version of these conference papers, that accounts for the explicit subspace designs that were constructed in \cite{GK-combinatorica} subsequent to \cite{GX-stoc13}, makes some simplifications and improvements in the construction of h.s.e sets in Section~\ref{sec:hse} compared to \cite{GX-stoc12}, and reorganizes the material and flow substantially.  \\
~ \\
The research of V. Guruswami was supported in part by a Packard Fellowship and NSF grants CCF-0963975, CCF-1422045, and CCF-1814603. Some of this work was done during visits by the author to Nanyang Technological University.}
\author[C. Xing]{Chaoping Xing}
\address{School of Electronic Information and Electrical Engineering\\ Shanghai Jiao Tong University; and\\
Division of Mathematical Sciences \\ School of Physical \&  Mathematical Sciences \\ Nanyang Technological University \\ Singapore.}
\email{xingcp@ntu.edu.sg}
\thanks{The research of C. Xing was supported  by the National Research Foundation, Prime Minister's Office, Singapore under its Strategic Capability Research Centres Funding Initiative; and the Singapore MoE Tier 1 grants RG25/16 and RG21/18. }
\newtheorem{lemma}{Lemma}[section]
\newtheorem{theorem}[lemma]{Theorem}
\newtheorem{cor}[lemma]{Corollary}
\newtheorem{fact}[lemma]{Fact}
\newtheorem{obs}[lemma]{Observation}
\newtheorem{claim}[lemma]{Claim}
\newtheorem{defn}{Definition}
\newcommand{\dims}{\kappa}
\theoremstyle{remark}
\newtheorem{rmk}{Remark}
\newcommand{\eps}{\varepsilon}
\renewcommand{\epsilon}{\varepsilon}
\renewcommand{\le}{\leqslant}
\renewcommand{\ge}{\geqslant}
\def\eqdef{\stackrel{\mathrm{def}}{=}}
\def\period{\Delta}
\def\proj{\mathrm{proj}}
\def\hse{\mathsf{HSE}}
\def\degree{\lambda}
\newcommand{\mv}[1]{{\mathbf{#1}}}
\def\ZZ{\mathbb{Z}}
\def\PP{\mathbb{P}}
\def \mL {\mathcal{L}}
\def \mP {\mathcal{P}}
\def\cL{{\mathcal L}}
\def\cl{{\mathcal L}}
\def\Pin{{P_{\infty}}}
\def\g{{\mathfrak{g}}}
\def\Fm{{\F_{q^m}}}
\newcommand{\Ga}{\alpha}
\newcommand{\Gb}{\beta}
\newcommand{\Gg}{\gamma}     
\newcommand{\Ge}{\epsilon}
\newcommand{\Gk}{\kappa}
\newcommand{\Gs}{\sigma}
\newcommand{\s}{\sigma}
\def\FH{\mathsf{FH}}
\def\FGS{\mathsf{FGS}}
\def \bi {{\bf i}}
\def \by {{\bf y}}
\def\Supp {{\rm Supp }}
\def\Aut {{\rm Aut }}
\def\FD{{\mathrm{F}}}
\newcommand{\sd}{subspace design}
\newcommand{\sds}{subspace designs}
\newcommand{\dm}{\Lambda}
\def\rs{\mathsf{RS}}
\begin{document}

\maketitle
\thispagestyle{empty}

\enlargethispage{1.2cm}

\begin{abstract}

We construct two classes of algebraic code families which are efficiently list decodable with small output list size from a fraction $1-R-\eps$ of adversarial errors where $R$ is the rate of the code, for any desired positive constant $\eps$. The alphabet size depends only on $\eps$ and is nearly-optimal.

The first class of codes are obtained by \emph{folding} algebraic-geometric codes using automorphisms of the underlying function field. The second class of codes are obtained by restricting evaluation points of an algebraic-geometric code to rational points from a \emph{subfield}. In both cases, we develop a linear-algebraic approach to perform list decoding, which pins down the candidate messages to a subspace with a nice ``periodic" structure.

To prune this subspace and obtain a good bound on the list-size, we pick subcodes of these codes by pre-coding into certain \emph{subspace-evasive} sets which are guaranteed to have small intersection with the sort of periodic subspaces that arise in our list decoding. We develop two approaches for constructing such subspace-evasive sets. The first is a Monte Carlo construction of \emph{hierearchical subspace-evasive} (h.s.e) sets which leads to excellent list-size but is not explicit. The second approach exploits a further \emph{ultra-periodicity} of our subspaces and uses a novel construct called \emph{subspace designs}, which were subsequently constructed explicitly and also found further applications in pseudorandomness.

To get a family of codes over a fixed alphabet size, we instantiate our approach with algebraic-geometric codes based on the Garcia-Stichtenoth tower of function fields. Combining this with pruning via h.s.e sets yields codes list-decodable up to a $1-R-\eps$ error fraction with list size bounded by $O(1/\eps)$, matching the existential bound for random codes up to constant factors. Further, the alphabet size can be made $\exp(\tilde{O}(1/\eps^2))$ which is not much worse than the lower bound of $\exp(\Omega(1/\eps))$. The parameters we achieve are thus quite close to the  existential bounds in all three aspects---error-correction radius, alphabet size, and list-size--- simultaneously. This construction is, however, Monte Carlo and the claimed list decoding property only holds with high probability. Once the code is (efficiently) sampled, the encoding/decoding algorithms are deterministic with a running time $O_\eps(N^c)$ for an absolute constant $c$, where $N$ is the code's block length.

Using subspace designs instead for the pruning, our approach yields a deterministic construction of an algebraic code family of rate $R$ with efficient list decoding from $1-R-\eps$ fraction of errors over an {alphabet of constant size} $\exp(\tilde{O}(1/\eps^2))$.  The list size bound is upper bounded by a very slowly growing function of the block length $N$; in particular, it is at most $O(\log^{(r)} N)$ (the $r$'th iterated logarithm) for any fixed integer $r$. The explicit construction avoids the shortcoming of the Monte Carlo sampling at the expense of a worse list size.


\end{abstract}

\newpage
\enlargethispage{2.5cm}
\tableofcontents

\newpage
\section{Introduction}

An error-correcting code $C$ of block length $N$ over a finite alphabet $\Sigma$ maps a set $\mathcal{M}$ of messages into codewords in $\Sigma^N$. The rate of the code $C$, denoted $R$, equals $\frac{1}{N} \log_{|\Sigma|} |\mathcal{M}|$. In this work, we will be interested in codes for adversarial noise, where the channel can arbitrarily corrupt any subset of up to $\tau N$ symbols of the codeword. The goal will be to correct such errors and recover the original message/codeword efficiently. It is easy to see that information-theoretically, we need to receive at least $RN$ symbols correctly in order to recover the message (since $|\mathcal{M}| = |\Sigma|^{RN}$), so we must have $\tau \le 1-R$.

Perhaps surprisingly, in a model called list decoding, recovery up to this information-theoretic limit becomes possible. Let us say that a code $C \subseteq \Sigma^N$ is $(\tau,\ell)$-list decodable if for every received word $\mv{y} \in \Sigma^N$, there are at most $\ell$ codewords $\mv{c} \in C$ such that $\mv{y}$ and $\mv{c}$ differ in at most $\tau N$ positions. Such a code allows, in principle, the correction of a fraction $\tau$ of errors, outputting at most $\ell$ candidate codewords one of which is the originally transmitted codeword.

The probabilistic method shows that a random code of rate $R$ over an
alphabet of size $\exp(O(1/\eps))$ is with high probability
$(1-R-\eps,O(1/\eps))$-list decodable~\cite{elias91}. However, it is
not known how to construct or even randomly sample such a code for
which the associated algorithmic task of list decoding (i.e., given
$\mv{y} \in \Sigma^N$, find the list of codewords within fractional
radius $1-R-\eps$) can be performed efficiently. This work takes a big
step in that direction, giving a randomized construction of such
efficiently list-decodable codes over a slightly worse alphabet size
of $\exp(\tilde{O}(1/\eps^2))$. We note that the alphabet size needs
to be at least $\exp(\Omega(1/\eps))$ in order to list decode from a
fraction $1-R-\eps$ of errors, \footnote{ The best trade-off between rate $R$ and list decoding radius $\tau$ is the Gilbert-Varshamov bound, i.e., $R\le 1-H_q(\tau)$, where $ H_q(\tau)$ is the $q$-ary entropy function $x\log_q(q-1)-x\log_qx-(1-x)\log_q(1-x)$. The function $1-H_q(\tau)$ is equal to $1-\tau-\Ge$
if the alphabet size is at least $\exp(\Omega(1/\eps))$.} so this is close to optimal. For the
list-size needed as a function of $\eps$ for decoding a $1-R-\eps$
fraction of errors, the best lower bound is only $\Omega(\log
(1/\eps))$~\cite{GN14}, but as mentioned above, even random coding
arguments only achieve a list-size of $O(1/\eps)$, which our
construction matches up to constant factors. We also give a fully {\em
  deterministic} construction with a list-size that is very slowly
growing as a function of the block length.

We now review some of the key results on algebraic list decoding
leading up to this work. A more technical comparison with related work
appears in Section~\ref{sec:intro-related-work}.
The work of Sudan~\cite{sudan} used bivariate polynomial interpolation to give the first efficient list decoding algorithm for Reed-Solomon codes, which for rates $R$ below $1/3$ corrected a fraction of errors exceeding the $(1-R)/2$ bound achievable by unique decoding. Guruswami and Sudan~\cite{GS99} introduced multiplicities in the interpolation step and gave an efficient list decoding algorithm that could correct an error-fraction $1-\sqrt{R}$. The multiplicities also offered an avenue to incorporate ``soft" information about varying reliability of different symbols, which was developed by Koetter and Vardy~\cite{KV} to give an influential algebraic soft-decision decoder for Reed-Solomon codes. The $1-\sqrt{R}$ bound remained the largest known efficiently list-decodable error-fraction for any value of rate $R$ till Parvaresh and Vardy~\cite{PV-focs05} gave a variant of Reed-Solomon codes list-decodable up to error fraction $1-O(R \log (1/R))$ which beats the $1-\sqrt{R}$ bound for low-rates.

Building on the Parvaresh-Vardy work together with further new algebraic ideas, Guruswami
and Rudra~\cite{GR-FRS} gave the first
construction of codes that achieved the optimal trade-off between rate
and list-decoding radius, i.e., enabled list decoding up to a fraction
$1-R-\eps$ of worst-case errors with rate $R$. They showed that a variant of Reed-Solomon
(RS) codes called {\em folded} RS codes admit such a list decoder. For
a decoding radius of $1-R-\eps$, the code was based on bundling
together disjoint windows of $m =\Theta(1/\eps^2)$ consecutive symbols
of the RS codeword into a single symbol over a larger alphabet. As a
result, the alphabet size of the construction was
$N^{\Omega(1/\eps^2)}$. It was also shown in \cite{GR-FRS} that ideas based on code concatenation and expander
codes can be used to bring down the alphabet size to
$\exp(\tilde{O}(1/\eps^4))$ which is independent of the block length. However, the resulting codes lose some important and powerful features
such as list recovery and soft decoding of the folded RS codes. Also,
the decoding time complexity as well as proven bound on worst-case
output list size for folded RS codes were $N^{\Omega(1/\eps)}$.\footnote{The list size for decoding folded RS codes was shown to be bounded by a constant depending only on $\eps$ in subsequent work~\cite{KRS-pc}.}

Our main final result statement is the following, offering two constructions, one randomized and one deterministic, of variants of algebraic-geometric (AG) codes that are list-decodable with optimal rate. These appear as Theorems~\ref{thm:final-monte-carlo} and \ref{thm:cap-achieving-gs-deterministic} in the final section of the paper.

\begin{theorem}[Main]
\label{thm:main-intro}
For any $R \in (0,1)$ and positive constant $\eps \in (0,1)$, there is
\begin{enumerate}
\item[{\rm (i)}] a Monte Carlo construction of a family of codes of rate at least
  $R$ over an alphabet size $\exp(O(\log(1/\eps)/\eps^2))$ that are
  encodable and $(1-R-\eps, O(1/(R\eps))$-list decodable in
  $O_\eps(N^c)$ time\footnote{We use the $O_\eps(\cdot)$ notation to hide constant factors that depend on $\eps$.}, where $N$ is the block length of the code and
  $c$ is an absolute positive constant.
\item[{\rm (ii)}] a deterministic construction of a family of codes of rate at
  least $R$ over an alphabet size $\exp(O(\log^2(1/\eps)/\eps^2))$ that
  are encodable and $(1-R-\eps,L(N))$-list decodable in $O_\eps(N^c)$
  time, for a list size that satisfies $L(N) = o(\log^{(r)} N)$ (the
  $r$'th iterated logarithm) for any fixed integer $r$.
\end{enumerate}
\end{theorem}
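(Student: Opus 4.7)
The plan is to obtain both parts of the theorem through a common linear-algebraic list-decoding paradigm applied to two different algebraic-geometric (AG) code constructions, with the two parts differing in the ``folding'' mechanism and, crucially, in how the decoder's output subspace is pruned down to a small list. In both cases I would start from an asymptotically optimal AG tower of function fields over $\F_{q^m}$ (the Garcia--Stichtenoth tower is the natural choice since its genus-to-number-of-rational-places ratio approaches the Drinfeld--Vl\u{a}du\c{t} bound), choose a suitable divisor $D$ so that $\dim \mL(D)$ gives the target rate $R$, and evaluate at rational places to form the base AG code.

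For part (i), the plan is to fold the AG code using an automorphism $\sigma$ of the underlying function field of order $\ge m$, grouping $m$ consecutive evaluations $f(P), f(\sigma P), \ldots, f(\sigma^{m-1} P)$ of a message function $f\in\mL(D)$ into one symbol over $\F_{q^m}^m$. The list-decoder solves an interpolation problem: find a low-degree algebraic relation $Q(Y_0,Y_1,\ldots,Y_s)$ with coefficients in a slightly enlarged Riemann--Roch space that vanishes on the received word; parameter counting via Riemann--Roch guarantees a nonzero $Q$. A standard argument shows every $f$ within a $1-R-\eps$ fraction of the received word satisfies $Q(f,\sigma f,\ldots,\sigma^{s-1} f)=0$, and linearizing via the high-order Frobenius/automorphism structure pins $f$ down to an affine $\F_q$-subspace of $\mL(D)$ whose structure is ``periodic'': its projection onto each chunk of coordinates of $f$ is determined, up to a low-dimensional affine freedom, by previous chunks. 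One then pre-codes messages into a random hierarchically subspace-evasive (h.s.e.) set inside $\mL(D)$; a probabilistic argument shows such a set of nearly full rate intersects every periodic subspace of the given form in $O(1/(R\eps))$ points, yielding the Monte Carlo construction with list-size $O(1/(R\eps))$. Choosing $s, m = \Theta(1/\eps)$ and $q$ a fixed constant gives alphabet size $q^m = \exp(O(\log(1/\eps)/\eps^2))$.

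For part (ii), the plan is to avoid folding entirely by evaluating $\mL(D)$-functions defined over $\Fm$ only at $\F_q$-rational places (i.e., places of the subfield), viewing each evaluation as an $m$-dimensional vector over $\F_q$. The analogous interpolation and linearization step now places candidate messages in a periodic $\F_q$-subspace of bounded intersection dimension with each of a collection of canonical subspaces $H_1,\ldots,H_t$ of $\Fm$. The right pruning tool here is an explicit subspace design: a family of subspaces such that any low-dimensional $W\subset\Fm$ has small total intersection dimension with the family. Pre-coding the message coefficients to lie in the direct product of a subspace design and invoking the explicit constructions of \cite{GK-combinatorica} yields a deterministic code with a list-size $L(N)$ that grows like the inverse Ackermann-style bound $o(\log^{(r)} N)$ for every fixed $r$, and alphabet size $\exp(O(\log^2(1/\eps)/\eps^2))$ (the extra $\log(1/\eps)$ factor being the overhead of the explicit subspace designs over the random h.s.e. sets).

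The main obstacle in executing this plan is controlling the interaction between the algebraic geometry and the linear-algebraic decoder: one needs the function-field automorphism $\sigma$ in part (i) to act cleanly on $\mL(D)$ (so that $\sigma f \in \mL(D')$ for a comparable divisor $D'$), and one needs the $\F_q$-rational places in part (ii) to be numerous enough relative to the genus to preserve the rate. In the GS tower these facts are not automatic and require a delicate analysis of the tower's defining equations, the behavior of places under the tower's Frobenius-type automorphisms, and careful bookkeeping of the Riemann--Roch parameters so that the interpolation step has a valid solution while the linearized subspace still has the low dimension required for the pruning step to succeed. The second delicate point is quantifying the intersection properties of h.s.e.\ sets and subspace designs tightly enough to get the claimed list-size bounds; this is where the probabilistic argument (for (i)) and the appeal to \cite{GK-combinatorica} (for (ii)) do the real work.
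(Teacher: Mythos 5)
Your high-level architecture matches the paper's: fold (or subfield-restrict) AG codes from the Garcia--Stichtenoth tower, interpolate a linear polynomial $Q$, pin candidates to a periodic affine subspace, and prune with h.s.e.\ sets (Monte Carlo) or subspace designs (deterministic). However, there are two concrete gaps. First, in part (ii) a \emph{single} layer of subspace designs cannot work: a subspace design in $\F_q^m$ has cardinality at most roughly $q^{\Omega(\eta m/r)}$, while the code dimension $k$ is far larger than $q^m$ (that is the whole point of passing from Reed--Solomon to AG codes), so you cannot take a direct product of $k$ subspaces of $\F_q^m$ forming one design. The paper resolves this with \emph{cascaded} subspace designs: $O(\log^* k)$ levels at exponentially growing block lengths $m_0 \ll m_1 \ll \cdots$, which requires the stronger fact that the decoder's output subspace is \emph{ultra-periodic} (periodic at every scale that is a multiple of $m$). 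Each level roughly squares the residual dimension $r_\iota/\eta_\iota$, and that is precisely where the $L(N) = o(\log^{(r)} N)$ bound comes from; your proposal asserts this bound without any mechanism that could produce it (a one-level design, if it existed, would give a \emph{constant} list size).

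Second, your parameter choice for part (i) is internally inconsistent: with $m=\Theta(1/\eps)$ and $q$ a fixed constant, $q^m=\exp(O(1/\eps))$, not $\exp(O(\log(1/\eps)/\eps^2))$, and more importantly the decoding radius $\frac{s}{s+1}\bigl(1-\frac{k}{N(m-s+1)}\bigr) - O\bigl(\frac{g}{N}\bigr)$ only reaches $1-R-\eps$ if $m=\Theta(s/\eps)=\Theta(1/\eps^2)$ and the genus loss $\Theta(N/\sqrt{q})$ is controlled by taking $q=\Theta(1/\eps^2)$; both are needed to land on the stated alphabet size. Two further points the paper relies on that you elide: (a) the linearization is carried out on the \emph{local power-series expansion} of $f$ at the fully ramified place $P_\infty$ (there is no analogue of ``coefficients of a polynomial'' in the GS tower, and one must re-index messages by the top $k$ expansion coefficients via a surjectivity argument for the evaluation map on $\cL((k+2g-1)P_\infty)$); and (b) to get list size $O(1/(R\eps))$ rather than something growing with $N$, the h.s.e.\ analysis must exploit that only $q^{O(Nm)}$ candidate subspaces can arise (one per interpolation polynomial) and must scale the period up to $\Theta(\log N)$ blocks so that $k \le q^{\zeta\Delta/2}$ while keeping $q$ constant.
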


The first part of Theorem \ref{thm:main-intro} is achieved through folded algebraic-geometric codes. To fold  algebraic-geometric codes, we first find suitable automorphisms of the ground function field. The list of possible candidate messages output by the list decoder has exponential size, but is contained in a well structured subspace. To prune down the list size, we only encode messages that belong to so-called hierarchical subspace-evasive sets, which are chosen to have small intersection with the structured subspaces arising in the decoding. To make use of subspace-evasive sets efficiently, we have to: (i) give an efficient pseudorandom construction of these sets;  and (ii) encode the messages to  subspace-evasive sets efficiently. We refer to Section \ref{sec:tech} for details.

The second part of Theorem \ref{thm:main-intro} is obtained through usual algebraic-geometric codes with evaluation points over subfields. As in the first part,  the list of possible candidate messages belongs to a subspace that is  well structured, specifically with a property we called ultra-periodicity (Definition~\ref{def:ultra-periodic}). The approach based on hierarchical subspace-evasive sets in the first part leads to excellent list size; however, we only know randomized constructions of hierarchical subspace-evasive sets. To obtain a deterministic list decoding, we prune down the list of possible solutions through subspace designs (see Section \ref{sec:tech} for details).

We note that our Monte Carlo construction gives codes that are quite
close to the existential bounds in three aspects simultaneously ---
the trade-off between error fraction $1-R-\eps$ and rate $R$, the
list-size as a function of $\eps$, and the alphabet size of the code
family (again as a function of $\eps$).  Even though these codes are
not fully explicit, they are ``functionally explicit" in the sense
that once the code is (efficiently) sampled, with high probability the
polynomial time encoding and decoding algorithms deliver the claimed
error-correction guarantees for {\em all} allowed error pattern. The
explicit construction avoids this shortcoming at the expense of a
slightly worse list size.

\subsection{Prior and related work}
\label{sec:intro-related-work}

Let us recap a bit more formally the construction of folded RS codes from \cite{GR-FRS}. One begins with the Reed-Solomon encoding of a polynomial $f \in \F_q[X]$ of degree $< k$ consisting of the evaluation of $f$ on a subset of field elements ordered as $1,\gamma,\dots,\gamma^{N-1}$ for some primitive element $\gamma \in \F_q$ and $N< q$. For an integer ``folding" parameter $m \ge 1$ that divides $N$, the folded RS codeword is defined over alphabet $\F_q^m$ and consists of $n/m$  blocks, with the $j$'th block consisting of the $m$-tuple $(f(\gamma^{(j-1)m}),f(\gamma^{(j-1)m+1}),\ldots,f(\gamma^{jm-1}))$.
The algorithm in \cite{GR-FRS} for list decoding these codes  was based on the algebraic identity $\overline{f(\gamma X)} = \overline{f(X)}^q$ in the residue field $\F_q[X]/(X^{q-1}-\gamma)$ where $\overline{f}$ denotes the residue $f \bmod {(X^{q-1}-\gamma)}$. This identity is used to solve for $f$ from an equation of the form $Q(X,f(X),f(\gamma X),\dots,f(\gamma^{s-1} X)) = 0$ for some low-degree nonzero multivariate polynomial $Q$. The high degree $q> n$ of this identity, coupled with $s \approx 1/\eps$, led to the large bounds on list-size and decoding complexity in \cite{GR-FRS}.

One possible approach to reduce $q$ (as a function of the code length) in this construction would be to work with algebraic-geometric codes based on function fields $K$ over $\F_q$ with more rational points. However, an automorphism $\sigma$ of $K$ that can play the role of the automorphism $f(X) \mapsto f(\gamma X)$ of $\F_q(X)$ is only known (or even possible) for very special function fields. This approach was used in \cite{Gur-cyclo-ANT} to construct list-decodable codes based on cyclotomic function fields using as $\sigma$ certain Frobenius automorphisms. These codes improved the alphabet size to polylogarithmic in $N$, but the bound on list-size and decoding complexity remained $N^{\Omega(1/\eps)}$.

Subsequently, a linear-algebraic approach to list decoding folded RS codes was discovered in \cite{salil-NOW,Gur-ccc11}. Here, in the interpolation stage, which is common to all list decoding algorithms for algebraic codes~\cite{sudan,GS99,PV-focs05,GR-FRS}, one finds a {\em linear} multivariate polynomial $Q(X,Y_1,\dots,Y_s)$ whose total degree in the $Y_i$'s is $1$. The simple but key observation driving the linear-algebraic approach is that the equation $Q(X,f(X),\dots,f(\gamma^{s-1} X)) = 0$ now becomes a linear system in the coefficients of $f$. Further, it is shown that the solution space has dimension less than $s$, which again gives a list-size upper bound of $q^{s-1}$.
Finally, since the list of candidate messages fall in an affine space, it was noted in \cite{Gur-ccc11} that one can bring down the list size by carefully ``pre-coding" the message polynomials so that their $k$ coefficients belong to a ``subspace-evasive set" (which has small intersection with every $s$-dimensional subspace of $\F_q^k$).
This idea was used in \cite{GW-tit13} to give a {\em randomized} construction of $(1-R-\eps,O(1/\eps))$-list decodable codes of rate $R$.
However, the alphabet size and runtime of the decoding algorithm both remained $N^{\Omega(1/\eps)}$.
Similar results were also shown in \cite{GW-tit13,Kopparty15} for univariate multiplicity codes, where the encoding of a polynomial $f$ consists of the evaluations of $f$ and its first $m-1$ derivatives at distinct field elements.

Concurrently with the conference version of part of this work reported in \cite{GX-stoc12}, Dvir and
Lovett \cite{DL-stoc12} gave an elegant construction of explicit subspace evasive sets
based on certain algebraic varieties. Furthermore, Ben-Aroya and  Shinkar \cite{AS14} improved the result of \cite{DL-stoc12} slightly by using an elementary  construction. Their results yield an
explicit version of the codes from \cite{Gur-ccc11}, albeit with a
worse list size bound of $(1/\eps)^{O(1/\eps)}$. This work and
\cite{DL-stoc12,AS14} are incomparable in terms of results. The big advantage of
\cite{DL-stoc12,AS14} is the deterministic construction of the code. The
benefits in our work are: (i) both constructions in the present paper give codes over an alphabet size that is a constant independent of $N$,
whereas in \cite{DL-stoc12} the $N^{\Omega(1/\eps^2)}$ alphabet size of
folded RS codes is inherited; (ii)
our first Monte Carlo construction ensures list-decodability with a list-size of $O(1/\eps)$ that is much
better and in fact matches the full random construction up to constant factors,\footnote{As
  mentioned above, the bound in \cite{DL-stoc12} is $(1/\eps)^{O(1/\eps)}$
  and it seems very difficult to get a sub-exponential dependence on
  $1/\eps$ with the algebraic approach relying on Bezout's theorem to
  construct subspace-evasive sets.} and (iii) our second construction gives a deterministic algorithm as well with almost constant list size (and constant alphabet size). Another important feature is that both our work and \cite{DL-stoc12,AS14} achieve a decoding complexity of $O_\eps(N^c)$ with exponent independent of $\eps$.

Our paper presents two class of codes: folded algebraic-geometric
codes and usual algebraic-geometric codes with evaluation points over
subfields. For both the classes of codes, we can apply hierarchical
subspace-evasive sets as well as subspace design to prune down the
list size by taking certain subcodes. This is because of the
``periodic" structure of the subspace in which the candidate messages
are pinned down by the linear-algebraic list decoder is similar in
both cases. Thus, we can obtain both randomized and deterministic
algorithms from each of the two classes of codes. In total, we have
four combinations of constructions.  To illustrate both algebraic
approaches, we decide to focus on two combinations, i.e., (i) folded
algebraic-geometric codes with hierarchical subspace-evasive sets; and
(ii) usual algebraic-geometric codes with evaluation points over
subfields with subspace designs. These are listed in
Figure~\ref{fig:comparison-table}. We note that the other two
combinations are also possible, as the pruning of the subspace of
solutions is ``black-box" with respect to its periodic structure.

In the table presented in Figure~\ref{fig:comparison-table}, we list  previous results and those in this paper.  The major improvement of this work is to bring down the alphabet size to constant, while at the same time ensuring small list size and low decoding complexity where the exponent of the polynomial run time does not depend on $\eps$. Our folded algebraic-geometric subcodes achieve a list size matching the fully random constructions up to constant factors, together with alphabet size not much worse than the lower bound $\exp(\Omega(1/\Ge))$. On the last line, our  algebraic-geometric subcodes give a deterministic list decoding with almost constant list size and optimal decoding radius.

\subsection{Subsequent works and open questions}
The challenge of decoding up to radius approaching the optimal bound $(1-R)$ with rate $R$ along with good list and alphabet size is, for the most part, solved by our work. There are still some goals that have not been met. One is to get a fully deterministic construction with constant list-size and alphabet size (as a function of $\eps$), and construction/decoding complexity $O_\eps(N^c)$. This has been almost achieved by Kopparty, Ron-Zewi,  Saraf, and Wootters~\cite{KRS-pc}. They prove that the list-size for list-decoding folded Reed-Solomon codes is itself, without any pruning by subspace evasive sets, bounded by a constant. They then combine it with several other tools from algebraic coding theory and pseudorandomness to construct codes of rate $R$ list-decodable up to a $(1-R-\eps)$ error fraction with
constant list and alphabet size (depending only on $\eps$) and decoding complexity $O_\eps(N^c)$ (in fact the exponent $c$ can be made arbitrarily close to $1$).
An exciting recent result by Guo and Ron-Zewi~\cite{GuoRonZewi20} achieves both constant list-size and alphabet within our framework, via improved subspace evasive sets for the ultra-periodic subspaces output by the list decoder.

Another challenge is to construct a $(1-R-\eps,L)$-list decodable code of rate $R$ (for list size $L$ bounded by a polynomial in the block length), over an alphabet of size $\exp(O(1/\eps))$, which is the asymptotically optimal size. All known constructions over a constant-sized alphabet known so far have alphabet size at least $\exp(\Omega(1/\eps^2))$. Finally, the various algebraic and expander-based techiques that have led to progress on list decoding only work over large alphabets. The challenge of efficient optimal rate list decoding over say the binary alphabet, even for the simpler model of erasures, remains wide open. The best known constructions are obtained via concatenation, and are list-decodable up to the so-called Blokh-Zyablov bound~\cite{GR-BlokhZyablov}.

\begin{center}
\begin{figure}
\begin{tabular}{|c|c|c|c|c|c|}
\hline
Code & Construction & Alphabet size & List size & Decoding time &  Reference \\
\hline
Folded RS/derivative  & Explicit & $N^{O(1/\eps^2)}$ & $N^{O(1/\eps)}$ & $N^{O(1/\eps)}$ & \cite{GR-FRS,GW-tit13} \\
Folded RS subcode & Randomized &  $N^{O(1/\eps^2)}$ &$O(1/\eps)$ & $N^{O(1/\eps)}$ & \cite{GW-tit13} \\
{\bf Folded RS subcode} & Explicit & $N^{O(1/\eps^2)}$ & $(1/\eps)^{O(1/\eps)}$ & $N^{O(1)} 2^{1/\eps^{O(1)}}$ & \cite{DL-stoc12} \\
Folded cyclotomic & Explicit$^{\ast}$ & $(\log N)^{O(1/\eps^2)}$ & $N^{O(1/\eps^2)}$ & $N^{O(1/\eps^2)}$ & \cite{Gur-cyclo-ANT} \\
{\bf Folded AG subcode} & Randomized & $\exp(\tilde{O}(1/\eps^2))$ & $O(1/\eps)$ & $N^{O(1)}  2^{1/\eps^{O(1)}}$ &Thm. \ref{thm:main-intro}(i) \\
{\bf AG subcode} & Explicit & $\exp(\tilde{O}(1/\eps^2))$ & $2^{2^{2^{(\log^* N)^2}}}$ & $N^{O(1)} (1/\eps)^{O(1)}$ & Thm. \ref{thm:main-intro}(ii)  \\
\hline
\end{tabular}
\caption{{\small $N$ in the above table stands for the length of codes. Parameters of various constructions of codes that enable list decoding $(1-R-\eps)$ fraction of errors, with rate $R$. The last two lines are from this work. ``Explicit" means the code can be constructed in deterministic polynomial time (the $^{\ast}$ for folded cyclotomic is because of requirement of an irreducible polynomial of high degree, which can be sampled and then checked (for a "Las Vegas" construction)). The rows with first column in boldface are not dominated by other constructions. The last line gives the first deterministic construction of algebraic codes for efficient optimal rate list decoding over constant-sized alphabets.}}
\label{fig:comparison-table}
\end{figure}
\end{center}

\subsection{Organization} The paper is organized as follows. In Section \ref{sec:tech}, we describe the detailed techniques of our paper including algebraic approaches and pseudorandomness. Following  the section on techniques, in Section \ref{sec:periodic} we introduce periodic and ultra-periodic subspaces, give definitions and basic properties. In Section \ref{sec:FF}, we recall some basic results on function fields and algebraic-geometric codes.
To illustrate our ideas in an algebraically simpler (and perhaps more
practical) setting, in Section \ref{sec:hermitian} we give a
construction based on a tower of Hermitian field
extensions~\cite{Shen93}. This is capable of giving a similar result to our best ones based on the Garcia-Stichtenoth tower, albeit with
alphabet size and list-size upper bound polylogarithmic in the code
length.
In Section \ref{sec:hse} we first introduce hierarchical
subspace-evasive sets, then show that random sets are hierarchical
subspace-evasive with high probability. We also present a pseudorandom
construction of hierarchical subspace-evasive sets, which also allow for efficient encoding and efficient computation of intersection with periodic subspaces.

Folded algebraic-geometric codes from the Garcia-Stichtenoth tower are studied
in Section \ref{sec:FGS}. The list size, decoding radius and decoding
algorithm via local expansion are also discussed in this
section. Section \ref{sec:PL} is devoted to the discussion of pruning down
the list size for folded codes from both the Hermitian and the
Garcia-Stichtenoth towers using hierarchical subspace-evasive
sets. The second class of our codes, namely usual algebraic-geometric
codes with evaluation points over subfields is presented in Section
\ref{sec:lin-rs}. In this section, we first discuss list decoding for
the simpler Reed-Solomon case, and then generalize it to list decoding
of arbitrary algebraic geometric codes and finally instantiate the approach with the codes
from the Garcia-Stichtenoth tower. In Section \ref{sec:sd},
we introduce subspace designs and cascaded subspace designs, and discuss parameters of random and
explicit constructions of those. In the last section, the explicit construction
of subcodes of RS and AG subcodes based on subspace designs is
presented.

\section{Our techniques}\label{sec:tech}
We describe some of the main new ingredients that go into our work. We need both new algebraic insights and constructions, as well as ideas in pseudorandomness relating to (variants of) subspace-evasive sets. We describe these in turn below.

\subsection{Algebraic ideas}\label{subsec:AI}

It is shown in \cite{GS99} that one can list decode the usual algebraic-geometric codes up to the Johnson bound. On the other hand, one has not found list decoding algorithms of the usual algebraic-geometric codes beyond the Johnson bound. Thus, to list decode the algebraic-geometric codes beyond the Johnson bound, it is natural to consider some  variants of usual algebraic-geometric codes as one does for Reed-Solomon codes \cite{GR-FRS}. In this work, we present two new variants of algebraic-geometric codes--folded algebraic geometric codes and usual algebraic geometric codes with evaluation points over subfields. We describe these in turn.

\subsubsection{Folding AG codes}

The first approach is to use suitable automorphisms of function fields to fold the code. This approach was used for Reed-Solomon codes in \cite{GR-FRS} and for cyclotomic function field in \cite{Gur-cyclo-ANT}, though this was done using the original approach in \cite{GR-FRS} where the messages to be list decoded were pinned down to the roots of a higher degree polynomial over a large residue field.
As mentioned earlier, effecting this ``non-linear" approach in \cite{GR-FRS,Gur-cyclo-ANT} with automorphisms of more general function fields seems intricate at best. In this work we employ the linear-algebraic list decoding method of \cite{GW-tit13}. However, the correct generalization of the linear-algebraic list decoding approach to the function field case is also not obvious. One of the main algebraic insights in this work is noting that a possible way to generalize the linear-algebraic approach to codes based on algebraic function fields is to rely on the {\em local power series expansion} of functions from the message space at a suitable rational point.  (The case for Reed-Solomon codes being the expansion around $0$, which is a finite polynomial form.)
%
%

Working with a suitable automorphism which has a ``diagonal" action on
the local expansion lets us extend the linear-algebraic decoding
method to AG codes (here by  a ``diagonal" action, we mean that this action gives rise to equations on coefficients of a polynomial that are diagonal). Implementing this for specific AG codes requires
an explicit specification of a basis for an associated message
(Riemann-Roch) space, and the efficient computation of the local
expansion of the basis elements at a special rational point on the
curve. We show how to do this for two towers of function fields: the
Hermitian tower~\cite{Shen93} and the asymptotically optimal
Garcia-Stichtenoth tower~\cite{GS95,GS96}. The former tower is quite
simple to handle --- it has an easily written down explicit basis, and
we show how to compute the local expansion of functions around the
point with all zero coordinates. However, the Hermitian tower does not
have bounded ratio of the genus to number of rational points, and so
does not give constant alphabet codes (we can get codes over an
alphabet size that is polylogarithmic in the block length
though). Explicit basis for Riemann-Roch spaces of the
Garcia-Stichtenoth tower were constructed in \cite{SAKSD01}. Regarding
local expansions, one major difference is that we work with local
expansion of functions at the point at infinity, which is fully
``ramified" in the tower. For both these towers, we find and work with
a nice automorphism that acts diagonally on the local expansion, and
use it for folding the codes and decoding them by solving a linear
system.

\subsubsection{Restricting evaluation points to a subfield}

The second approach is to work with ``normal" algebraic-geometric codes, based on evaluating functions from a Riemann-Roch space at some rational places, except we use a constant field extension of the function field for the function space, but restrict to evaluating at rational places over the original base field.
Let us give a brief idea why restricting evaluation points to a subfield enables correcting more errors.
The idea behind list decoding results for folded RS (or derivative) codes in \cite{GR-FRS,GW-tit13} is that the encoding of a message polynomial $f \in \F_Q[X]$ includes the values of $f$ and closely related polynomials at the evaluation points. Given a string not too far from the encoding of $f$, one can use this property together with the ``interpolation method" to find an algebraic condition that $f$ (and its closely related polynomials) must satisfy, eg. $A_0(X) + A_1(X) f(X) + A_2(X) f^q(X) + \cdots + A_s(X) f^{q^{s-1}}(X)\equiv 0\pmod{x^{q-1}-\Gg}$ in the case of folded Reed-Solomon codes~\cite{GR-FRS} (here  $\Gg$ is a primitive element of $\F_q$, and the $A_0,A_1,\dots,A_s$ are low-degree polynomials found by the decoder). The solutions $f(X)$ to this equation form an affine space, which can be efficiently found (and later pruned for list size reduction when we pre-code messages into a subspace-evasive set).

For Reed-Solomon codes as in Definition \ref{def:RS}, the encoding only includes the values of $f$ at $\Ga_1,\Ga_2,\dots,\Ga_n$. But since $\Ga_i \in \F_q$, we have $f(\Ga_i)^q = f^\sigma(\Ga_i)$ where $f^\sigma$ is the polynomial obtained by the action of the Frobenius automorphism that maps $y \mapsto y^q$ on $f$ (formally, $f^\sigma(X) = \sum_{j=0}^{k-1} f_j^q X^j$ if $f(X) = \sum_{j=0}^{k-1} f_j X^j$). Thus the decoder can ``manufacture" the values of $f^\sigma$ (and similarly $f^{\sigma^2}, f^{\s^3}$, etc.) at the $\Ga_i$. Applying the above approach then enables finding a relation $A_0(X) + A_1(X) f(X) + A_2(X) f^\s(X) + \cdots + A_s(X) f^{\s^{s-1}}(X)=0$, which is again an $\F_q$-linear condition on $f$ that can be used to solve for $f$. We remark here that this approach can also be applied effectively to linearized polynomials, and can be used to construct variants of Gabidulin codes that are list-decodable up to the optimal $1-R$ fraction of errors (where $R$ is the rate) in the rank metric~\cite{GWX16}.

To extend this idea to algebraic-geometric codes, we work with constant extensions $\F_{q^m} \cdot F$ of algebraic function fields $F/\F_q$. The messages belong to a Riemann-Roch space over $\Fm$, but they are encoded via their evaluations at $\F_q$-rational points. For decoding, we recover the message function $f$ in terms of the coefficients of its local expansion at some rational point $P$. (The Reed-Solomon setting is a special case when $F =\F_q(X)$, and $P$ is $0$, i.e., the zero of $X$.)
To get the best trade-offs, we use AG codes based on a tower of function fields due to Garcia and Stichtenoth~\cite{GS95,GS96} which achieve the optimal trade-off between the number of $\F_q$-rational points and the genus. For this case, we recover messages in terms of their local expansion around the point at infinity $\Pin$ which is also used to define the Riemann-Roch space of messages. So we treat this setting separately (Section \ref{subsec:decoding-gs}), after describing the framework for general AG codes first.

\subsection{Pseudorandomness}
The above algebraic ideas enable us to pin down the messages into a
structured subspace of dimension linear in the message length. The specific structure of the subspace is a certain ``periodicity" --- there is a subspace $W \subset \F_q^m$ such that once $f_0,f_1,\dots,f_{i-1}$ (the first $i$ coefficients of the message polynomial) are fixed, $f_i$ belongs to a coset of $W$. We now describe our ideas to prune this list, by restricting (or ``pre-coding") the message polynomials to belong to carefully constructed pseudorandom subsets that have small intersection with any periodic subspace.

\subsubsection{Hierarchical subspace-evasive sets}
 The first approach  follows along the lines of \cite{GW-tit13} and we only encode messages in a
{\em subspace-evasive set} which has small intersection with low-dimensional
subspaces. Implementing this in our case, however, leads to several
problems. First, since the subspace we like to avoid intersecting much
has large dimension, the list size bound will be linear in the code
length and not a constant  like in our final result (by ``a constant", we mean that the list size is independent of the code length and dependent on $\Ge$). More severely, we
cannot go over the elements of this subspace to prune the list as that
would take exponential time. To solve the latter problem, we observe
that the subspace has a special ``periodic" structure, and exploit
this to show the existence of large ``hierarchically subspace evasive"
(h.s.e) subsets which have small intersection with the projection of
the subspace on certain prefixes.  Isolating the periodic property of
the subspaces, and formulating the right notion of evasiveness w.r.t
to such subspaces, is an important aspect of this work. 

We also give a construction of good h.s.e sets using limited wise independent sample spaces, in a manner enabling the efficient iterative computation of the final list of intersecting elements. Further our construction allows for efficient indexing into the h.s.e set which leads to
 an efficient encoding algorithm for our code). As a further ingredient, we note that the number of possible subspaces that arise in the decoding is much smaller than the total number of possibilities. Using this together with an added trick in the h.s.e set construction, we are able to reduce the list size to a constant.

\subsubsection{Subspace designs}

The approach based on h.s.e sets leads to excellent list size; however, we only know randomized constructions of h.s.e sets with the required properties. Our second approach to prune the subspace of possible solutions is based on {\em subspace designs} and leads to deterministic subcode constructions.
More precisely speaking, the coefficients $f_0,f_1,\dots,f_{k-1}$ of the message polynomial (which belong to the extension field $\F_{q^m}$) are pinned down by the linear-algebraic list decoder to a periodic subspace with the property that there is an $\F_q$-subspace $W \subset \F_{q^m}$ such that once $f_0,f_1,\dots,f_{i-1}$ are fixed, $f_i$ belongs to a coset of $W$.
Our idea then is to restrict $f_i$ to belong to a subspace $H_i$ where $H_1,H_2,\dots,H_k$ are a collection of subspaces in $\F_q^m$ such that for any $s$-dimensional subspace $W \subset \F_q^m$, only a small number of them have non-trivial intersection with $W$. More precisely, we require that $\sum_{i=1}^k \dim(W \cap H_i)$ is small. We call such a collection $\{H_i\}_{i=1}^k$ as a {\em subspace design} in $\F_q^m$. We feel that the concept of subspace designs is interesting in its own right, and view the introduction of this notion in Section \ref{sec:sd} as a key contribution in this work. Indeed, subsequent work by Forbes and Guruswami~\cite{FG-random15} highlighted the central role played by subspace designs in
``linear-algebraic pseudorandomness'' and in particular how they lead to rank condensers and dimension expanders.

A simple probabilistic argument shows that, with high probability, any   $q^{\Omega(\eps m)}$  subspaces of dimension $(1-\eps) m$ that are randomly chosen have small total intersection with every $s$-dimensional $W$. This construction can also be derandomized, though the construction complexity of the resulting codes becomes quasi-polynomial with this approach for the parameter choices needed in the construction.

Fortunately, in a follow-on to \cite{GX-stoc13}, Guruswami and
Kopparty gave explicit constructions of subspace designs with
parameters nearly matching the random
constructions~\cite{GK-combinatorica}.  One can pre-code with this
subspace design to get explicit list-decodable sub-codes of
\emph{Reed-Solomon codes} whose evaluation points are in a subfield
(Section~\ref{subsec:put-together-rs}). However, this construction
inherits the large field size of Reed-Solomon codes.


For explicit subcodes of algebraic-geometric codes using subspace
designs we need additional ideas. The dimension $k$ in the case of AG
codes is much larger than the alphabet size $q^m$ (in fact that is the
whole point of generalizing to AG codes). So we cannot have a subspace
design in $\F_q^m$ with $k$ subspaces. We therefore use several
``layers" of subspace designs in a cascaded fashion (Section
\ref{subsec:csd}) --- the first one in $\F_q^m$, the next one in
$\F_q^{m_1}$ for $m_1 \gg q^{\sqrt{m}}$, the third one in $\F_q^{m_2}$
for $m_2 \gg q^{\sqrt{m_1}}$ and so on.  Since the $m_i$'s increase
exponentially, we only need about $\log^* k$ levels of subspace
designs. Each level incurs about a factor $1/\eps$ increase in the
dimension of the ``periodic subspace" ($W$ when we begin) at the
corresponding scale.  With a careful technical argument and choice of
parameters, we are able to obtain the bounds of Theorem
\ref{thm:main-intro}(ii).

\section{Periodic subspaces}
\label{sec:periodic}
In this section we formalize a certain ``periodic" property of affine subspaces that will arise in our list decoding application.


We begin with some notation.
For a vector $\mv{y}=(y_1,y_2,\dots,y_m)^T \in \F_q^m$ and positive
integers $t_1 \le t_2 \le m$, we  denote by
$\proj_{[t_1,t_2]}(\mv{y}) \in \F_q^{t_2-t_1+1}$ its projection onto
coordinates $t_1$ through $t_2$, i.e.,
$\proj_{[t_1,t_2]}(\mv{y})=(y_{t_1},y_{t_1+1},\dots,y_{t_2})^T$. When
$t_1=1$, we use $\proj_t(\mv{y})$ to denote
$\proj_{[1,t]}(\mv{y})$. By default, we treat vectors as column vectors. These notions are extended to subsets of
strings in the obvious way: $\proj_{[t_1,t_2]}(S) = \{
\proj_{[t_1,t_2]}(\mv{x}) \mid \mv{x} \in S\}$.

For an affine space $H$, its {\em underlying subspace} is the subspace $S$ such that $H$ is a coset of $S$.

\begin{defn}[Periodic (affine) subspaces]
	\label{def:periodic-subspaces}
	For positive integers $r,b,\period$ with $r < \period$ and $\dims := b\period$, an affine subspace $H \subset \F_q^\dims$ is said to be $(r,\period,b)$-periodic if there exists a matrix $B \in \F_q^{\period \times \period}$ whose kernel $\mathrm{ker}(B)$ has dimension at most $r$, and vectors $\mv{a}_\ell \in \F_q^\period$ and matrices $A_\ell \in \F_q^{\period \times (\ell-1)\period}$ for $1 \le \ell \le b$, such that every $\mv{x} \in H$ satisfies the following equations
	for $\ell=1,2,\dots,b$:
		\begin{equation}
	\label{eqn:proj-period}
		\mv{a}_\ell +  A_{\ell} \cdot \proj_{(\ell-1)\period}(\mv{x})  + B  \cdot \proj_{[(\ell-1)\period+1,\ell \period]}(\mv{x})  = 0  \ .
		\end{equation}
In other words, the projections of the subspace onto blocks of
contiguous $\period$ symbols, conditioned on any prefix, always belong
to an affine shift of the subspace $W := \mathrm{ker}(B)$ of dimension
at most $r$.

For dimensions $\dims$ not necessarily divisible by $\period$, we say that an affine subspace $H \subseteq \F_q^{\dims}$ is $(r,\Delta)$-periodic if there is exists a $(r,\period,b)$-periodic subspace $H' \subseteq \F_q^{b\period}$ for $b = \lceil \frac{\dims}{\period} \rceil$ such that $H = \proj_{[1,\dims]}(H')$.

We will call $W$ the \emph{recurring subspace} of the periodic subspace $H$.


\end{defn}

\begin{defn}[Representing periodic affine subspaces]
  \label{def:periodic-rep}
  The matrices $A_i$ and vectors $\mv{a_i}$, $i=1,2,\dots,b$, and the matrix $B$, or equivalently the system of equations \eqref{eqn:proj-period}, can be used to specify the $(r,\period,b)$-periodic subspace $H$, and this is the representation of periodic subspaces that will naturally arise in our list decoders.

\end{defn}

The motivation for the above definition will be clear when we present
our linear-algebraic list decoders, which will pin down the messages
that must be output within an $(s-1,m,k)$-periodic (affine)
subspace. (Here $q^m$ will be the alphabet size of the code, $k$ its
dimension, and $s$ will be a parameter of the algorithm that governs
how close the decoding performance approaches the Singleton bound.)

The following properties of periodic affine spaces follow directly from the definition.
\begin{claim}
\label{clm:periodic-subspace}
Let $H$ be an $(r,\period,b)$-periodic affine subspace. Then for each  $j=1,2,\dots,b$,
\begin{enumerate}
\item the projection of $H$ to the first $j$ blocks of $\period$ coordinates, $\proj_{j\period}(H) = \{ \proj_{j\period}(\mv{x})  \mid \mv{x}\in H \}$, has dimension at most $j r$. (In particular $H$ has dimension at most $b r$.)
\item for each $\mv{a} \in \F_q^{(j-1)\period}$, there are at most $q^r$ extensions $\mv{y} \in \proj_{j\period}(H)$ such that $\proj_{(j-1)\period}(\mv{y})=\mv{a}$.
\end{enumerate}
\end{claim}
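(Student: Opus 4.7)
The plan is to prove Part (2) directly from the definition and then derive Part (1) by induction on $j$ using the fiber size bound from Part (2).

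For Part (2), fix $j$ and a prefix $\mv{a} \in \F_q^{(j-1)\period}$. By Definition~\ref{def:periodic-subspaces} there exists a subspace $W \subseteq \F_q^\period$ with $\dim W \le r$ and a vector $\mv{v}_{\mv{a}}$ such that every $\mv{x} \in H$ with $\proj_{(j-1)\period}(\mv{x}) = \mv{a}$ satisfies $\proj_{[(j-1)\period+1, j\period]}(\mv{x}) \in W + \mv{v}_{\mv{a}}$. Since the coset $W + \mv{v}_{\mv{a}}$ has size $q^{\dim W} \le q^r$, there are at most $q^r$ possibilities for the next block. The elements $\mv{y} \in \proj_{j\period}(H)$ extending $\mv{a}$ are in bijection with these possibilities (their first $(j-1)\period$ coordinates coincide with $\mv{a}$ and the remaining $\period$ coordinates range over the set above), so there are at most $q^r$ such $\mv{y}$.

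For Part (1), I will proceed by induction on $j$. Observe first that $\proj_{j\period}(H)$ is the image of the affine subspace $H$ under a coordinate projection, which is a composition of a linear map with a translation, hence is itself a (possibly empty) affine subspace of $\F_q^{j\period}$; in particular its dimension equals $\log_q |\proj_{j\period}(H)|$ whenever it is nonempty. The base case $j=1$ is just Part (2) applied with $\mv{a}$ being the empty prefix: $\proj_\period(H) \subseteq W + \mv{v}_\emptyset$, so $|\proj_\period(H)| \le q^r$ and the dimension is at most $r$. For the inductive step, suppose $|\proj_{(j-1)\period}(H)| \le q^{(j-1)r}$. Partitioning $\proj_{j\period}(H)$ by the value of the first $(j-1)\period$ coordinates gives
\[ |\proj_{j\period}(H)| \;=\; \sum_{\mv{a} \in \proj_{(j-1)\period}(H)} \bigl|\{\mv{y} \in \proj_{j\period}(H) : \proj_{(j-1)\period}(\mv{y}) = \mv{a}\}\bigr| \;\le\; q^{(j-1)r} \cdot q^r \;=\; q^{jr}, \]
where the inequality uses Part (2) on each fiber and the inductive hypothesis on the number of prefixes. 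Converting back from size to dimension, $\dim \proj_{j\period}(H) \le jr$, completing the induction. Taking $j = b$ in particular gives $\dim H \le br$.

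There is no real obstacle here: the claim amounts to unpacking the definition and doing a routine fiber-counting induction. The only mild care needed is to observe that projections of affine subspaces remain affine subspaces, so that the counting bound on $|\proj_{j\period}(H)|$ can be translated to a dimension bound.
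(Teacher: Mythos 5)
Your proof is correct, and it is exactly the routine argument the paper has in mind: the paper states this claim without proof, noting only that it ``follows from the definition.'' Your fiber-counting induction, with the observation that coordinate projections of affine subspaces are affine so that cardinality bounds translate to dimension bounds, fills in precisely the intended details.
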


\smallskip \noindent {\bf Ultra-periodic subspaces.} For our result on pre-coding algebraic-geometric codes with subspace designs, we will exploit an even stronger property that holds for the subspaces output by the linear-algebraic list decoder. We formalize this notion below.

\begin{defn}[Ultra-periodic subspace]
\label{def:ultra-periodic}
	For positive integers $r,b,\period$ with $r<\period$, an affine subspace $H$ of $\F_q^{\dims}$ for $\dims=b\period$ is said to be
$(r,\period,b)$-{\em ultra periodic} if there exist vectors $a_{\ell}\in \F_q^{\period}$ and matrices $B_{\ell} \in \F_q^{\period \times \period}$ with $\dim(\mathrm{ker}(B_\ell))\le r$ for $\ell=1,2,\dots,b$, such that every $\mv{x} \in H$ satisfies the following equations for  $\ell=1,2,\dots,b$:
\begin{equation}
\label{eqn:ultra-period}
\mv{a}_\ell +  \sum_{i=1}^{\ell} B_{\ell-i+1}  \cdot \proj_{(i-1)\period+1,i \period}(\mv{x})   = 0  \ .
\end{equation}
In other words, the space $H$ is defined by equations that have a lower-triangular ``Toeplitz"  block-diagonal structure, with the blocks on the diagonal being $B_1$, the blocks on the next lower diagonal being $B_2$, the next diagonal having $B_3$, and so on.

For ambient dimensions $\dims$ not necessarily divisible by $\period$,
we say that an affine subspace $H \subseteq \F_q^{\dims}$ is
$(r,\Delta)$-ultra periodic if there is exists a $(r,\period,b)$-ultra
periodic subspace $H' \subseteq \F_q^{b\period}$ for $b = \lceil
\frac{\dims}{\period} \rceil$ such that $H = \proj_{[1,\dims]}(H')$.
\end{defn}

%

We have the below observation that follows from the definition of ultra-periodicity.
\begin{obs}
\label{obs:ultra-period-scales}
If a subspace $H$ of $\F_q^{\dims}$ is $(r,\period)$-{\em ultra periodic}, then for every integer $\ell$, $1 \le \ell \le \frac{\dims}{\period}$, $H$ is $(\ell r, \ell \period)$-periodic.
\end{obs}
 Thus ultra-periodicity captures the fact that the subspace is periodic not only for blocks of size $\period$, but also for block sizes that are multiples of $\period$. Thus the subspace looks periodic in multiple ``scales" simultaneously. As with periodic subspaces, an ultra-periodic subspace is defined by
equations of the form \eqref{eqn:ultra-period}, and this is how we
will specify the subspace.

\section{Preliminaries on function fields and algebraic-geometric codes}\label{sec:FF}
For convenience of the reader, we start with some background on global function fields over finite fields. The reader may refer to \cite{stich-book,NX01} for detailed background on function fields and algebraic-geometric codes.

\subsection{General background on function fields}

For a prime power $q$, let $\F_q$ be the finite field of $q$ elements. An { algebraic function field}  over
$\F_q$ in one variable is a field extension $F \supset \F_q$ such
that $F$ is a finite algebraic extension of  $\F_q(x)$ for some
$x\in F$ that is transcendental over $\F_q$. The field $\F_q$ is called the full constant field of $F$ if the algebraic closure of $\F_q$  in $F$ is $\F_q$ itself. Such a function field is also called a global function field. From now on, we always denote by  $F/\F_q$  a function field $F$ with the full constant field $\F_q$.

\subsubsection{Valuations, Places, and Divisors}

A discrete valuation of $F/\F_q$ is a map from $F$ to $\ZZ\cup\{+
\infty\}$ satisfying certain properties (see \cite[Definition 1.19]{stich-book}). Then each discrete valuation $\nu$ from $F/\F_q$ to $\ZZ\cup\{+
\infty\}$ defines a valuation ring $O=\{f\in F:\; \nu(f)\ge 0\}$ that is a local ring \cite[Theorem 1.1.13]{stich-book}. The maximal ideal $P$ of $O$ is  given by $P=\{f\in F:\; \nu(f)> 0\}$ and it is called a {\it place}. We denote  the valuation $\nu$ and the local ring $O$ corresponding to $P$ by $\nu_P$ and $O_P$, respectively. The residue class field $O_P/P$, denoted by $F_P$, is a finite extension of $\F_q$. The extension degree $[F_P:\F_q]$ is called {\it degree} of $P$, denoted by $\deg(P)$. A place of degree one is called a {\it rational} place. For a nonzero function $z\in F$, the principal divisor  of $z$ is defined to be
${\rm div}(z)=\sum_{P\in\PP_F}\nu_P(z)P$. The zero and pole divisors of $z$ are defined to be ${\rm div}(z)_0=\sum_{\nu_P(z)>0}\nu_P(z)P$ and ${\rm div}(z)_{\infty}=-\sum_{\nu_P(z)<0}\nu_P(z)P$, respectively. Then we have $\deg({\rm div}(z))=0$, i.e, $\deg({\rm div}(z)_0)=\deg({\rm div}(z)_\infty)$.
For two functions $f,g\in F$ and a place $P$, we have $\nu_P(f+g)\ge \min\{\nu_P(f),\nu_P(g)\}$ and the equality holds if $\nu_p(f)\neq\nu_P(g)$ (note that $\nu_P(0)=+\infty$). This implies that $f+g\neq 0$ if $\nu_P(f)\neq\nu_P(g)$.

If $F$ is the rational function field $\F_q(x)$, then every discrete valuation of $F/\F_q$  is given by  either $\nu_{\infty}$ or $\nu_{p(x)}$ for an irreducible polynomial $p(x)$, where $\nu_{\infty}$ is defined by $\nu_{\infty}(f/g)=\deg(g)-\deg(f)$  and $\nu_{p(x)}(f/g)=a-b$ with $p(x)^a||f$ and  $p(x)^b||g$ for two nonzero polynomials $f,g\in\F_q[x]$. It is straightforward to verify that the degrees of places   corresponding to $\nu_{\infty}$ and  $\nu_{p(x)}$ are $1$ and $\deg(p(x))$, respectively.

Let $\PP_F$ denote the set of places of $F$. The divisor group, denoted by ${\rm Div}(F)$, is the free abelian group generated by all places in $\PP_F$. An element $G=\sum_{P\in\PP_F}n_PP$ of ${\rm Div}(F)$ is called a divisor of $F$, where $n_P=0$ for almost all $P\in\PP_F$. We denote $n_p$ by $\nu_P(G)$. The support, denoted by $\Supp(G)$, of $G$ is the set $\{P\in\PP_F:\; n_P\neq 0\}$. Thus, $\Supp(G)$ of a divisor $G$ is always a finite subset of $\PP_F$.

\subsubsection{Constant field extension}
One of our code constructions will be based on evaluations of functions at rational points over a \emph{subfield}. For this purpose, we will work with constant field extensions over $\F_{q^m}$ of a function field over a base field $\F_q$. We describe these now.

Let $F/\F_q$ be a function field. Fix an algebraic closure $\bar{F}$ of $F$. Then $\bar{F}$ contains the algebraic closure $\bar{\F}_q=\cup_{i=1}^{\infty}\F_{q^i}$ as well. Hence, for $m\ge 1$,  $\bar{F}$ contains the extension field $\F_{q^m}$ of $\F_q$.   The composite field $F_m:=\F_{q^m}\cdot F$ is defined to be the smallest subfield of $\bar{F}$ that contains both $F$ and $\F_{q^m}$. Then we have the following facts (see \cite[Propositions 3.6.1 and 3.6.3]{stich-book}):
\begin{itemize}
	\itemsep=0.5ex
\item[(i)] the full constant field of $F_m$ is $\F_{q^m}$;
\item[(ii)] each subset of $F$ that is linearly independent over $\F_q$ remains so over $F_m$;
\item[(iii)] $[F_m:\F_{q^m}(x)]=[F:\F_q(x)]$ for any $x\in F\setminus \F_q$;
\item[(iv)] a place $P$ of $F$ of degree $d$ splits into $\gcd(m,d)$ places of $F_m$ of degree $d/\gcd(m,d)$ (in the case of rational function fields, this means that an irreducible polynomial over $\F_q$ of degree $d$ is factorized into product of $\gcd(m,d)$ irreducible polynomials over $\F_{q^m}$ of degree $d/\gcd(m,d)$);
    \item[(v)] genus of $F_m$ is equal to genus of $F$.
\end{itemize}
A divisor $G=\sum_{P\in\PP_F}n_PP$ of $F$ can be viewed as the divisor $\sum_{P\in\PP_F}\sum_{P'|P}n_PP'$
of $F_m$. We still denote this divisor of $F_m$ by $G$. By (iv) of the above facts, a rational place $P$ of $F$ continues to be a rational place $P'$ of $F_m$. The valuation ring of $P's$ is the tensor product of $O_P$ with $\F_{q^m}$, i.e, $O_{P'}=O_P\otimes_{\F_q}\F_{q^m}$. If there is no confusion, we still denote $P'$ by $P$.

\subsubsection{Riemann-Roch spaces} For a divisor $G$ of $F/\F_q$, we define the \emph{Riemann-Roch space} associated with $G$ by
\[\mL(G):=\{f\in F^*:\; {\rm div}(f)+G\ge 0\}\cup\{0\},\]
where $F^*$ denotes the set of nonzero elements of $F$.
Then $\mL(G)$ is a finite dimensional space over $\F_q$ and its
dimension $\ell(G)$ is determined by the Riemann-Roch theorem which
gives
\[\ell(G)=\deg(G)+1-\g+\ell(W-G),\]
where $\g$ is the genus of $F$ and $W$ is a canonical divisor of degree $2\g-2$. Therefore, we
always have that $\ell(G)\ge \deg(G)+1-\g$ and the equality holds
if $\deg(G)\ge 2\g-1$ \cite[Theorems 1.5.15 and 1.5.17]{stich-book}.

Consider the finite extension $\F_{q^m}$ over $\F_q$ and the constant extension $F_m:=\F_{q^m}\cdot F$ over $F$. As a divisor $G$ of $F$ can be viewed as a divisor of $F_m$,  we can consider the Riemann-Roch space in $F_m$ given by
\[\mL_m(G):=\{f\in F_m^*:\; {\rm div}(f)+G\ge 0\}\cup\{0\}.\] Then it is clear that $\mL_m(G)$ contains  $\mL(G)$ and $\mL_m(G)$ is a finite dimensional vector space over $\F_{q^m}$. Furthermore, $\cL_m(G)$ is the tensor product of $\cL(G)$ with $\Fm$ (see \cite[Proposition 5.8 of Chapter II]{Sil85}). This implies that
\[\dim_{\Fm}(\cL_m(G))=\dim_{\F_q}(\cL(G))\]
and an $\F_q$-basis of $\cL(G)$ is also an $\Fm$-basis of $\cL_m(G)$.

\subsubsection{Automorphisms}  The automorphisms of the function field $F$ that fix $\F_q$ are denoted by $\Aut(F/\F_q)$. For an automorphism $\phi\in \Aut(F/\F_q)$ and
and a function $f\in F$, we denote by $f^{\phi}$ the action of $\phi$ on $f$.
 For a place $P$, define a map $\nu_{P^\phi}$ from $F$ to $\ZZ\cup\{+
\infty\}$ given by $f\mapsto \nu_P(f^{\phi^{-1}})$. Then one can show that $\nu_{P^\phi}$  indeed satisfies  the properties given in \cite[Definition 1.19]{stich-book} and hence it is a discrete valuation.  The valuation ring $O_{P^\phi}$ of $\nu_{P^\phi}$ is given by
{\small \[\{h\in F:\; \nu_{P^\phi}(h)\ge 0\}=\{h\in F:\; \nu_{P}(h^{\phi^{-1}})\ge 0\}\stackrel{ h=f^\phi}{=}\{f^\phi\in F:\; \nu_{P}(f)\ge 0\}=\{f^\phi:\; f\in O_P\}.  \]}
 and the maximal ideal of this valuation ring is $\{\phi(x):\; x\in P\}$. Therefore, this maximal ideal is a place of $F$, denoted by $P^\phi$. Moreover, $\phi$ induces an $\F_q$-isomorphism between the residue fields $F_P$ and $F_{P^\phi}$. Hence, we have $\deg(P)=\deg(P^\phi)$.

For a function $f$ and a rational place $P\in\PP_F$ with $\nu_P(f)\ge 0$, we denote by $f(P)$ the residue class of $f$ in the residue class field $F_P$ at $P$. If $\nu_P(f)\ge 0$ and $\nu_{P^{\phi}}(f)\ge 0$, then one has that $\nu_P(f^{\phi^{-1}})\ge 0 $.  Furthermore, there is an $\F_q$-isomorphism between $O_P$ and  $O_{P^\phi}$ given by $f\mapsto f^\phi$. This induces the identity map between $F_P=\F_q$ and $F_{P^\phi}=\F_q$. Hence,  $f(P)=f^\phi(P^\phi)$. Replacing $f$ by $f^{\phi^{-1}}$ gives $f(P^{\phi})=f^{\phi^{-1}}(P)$.

 For a divisor $G=\sum_{P\in\PP_F}m_PP$ we denote by $G^{\phi}$ the divisor $\sum_{P\in\PP_F}m_PP^{\phi}$. Therefore, we have
 \[\phi(\mL(G)):=\{f^{\phi}:\; f\in\mL(G)\}=\mL(G^{\phi}).\]

{\small Assume that $E/\F_q$ is a subfield of $F$ and $\phi$ is an
 automorphism of $\Aut(F/E)$. Then for a divisor $G$ of $F$ that is
 invariant under $\phi$, we have $\phi(\mL(G))=\mL(G)$.}

Next we consider the constant extension $F_m=\F_{q^m}\cdot F$. Let $\Gs$ be the Frobenius automorphism $\F_{q^m}/\F_q$, i.e., $\Gs(\Ga)=\Ga^q$ for any $\Ga\in\F_{q^m}$. Then $\Gs$ can be extended to  an  automorphism of $\Aut(F_m/F)$ given by $\Gs(f)=f$ for any $f\in F$ and  $\Gs(\Ga)=\Ga^q$ for any $\Ga\in\F_{q^m}$. If $P$ is a rational place of $F$, then $P$ remains to be a rational place $P'$ of $F_m$ and hence $\Gs(O_{P'})=\Gs(O_P\otimes_{\F_q}\F_{q^m})=O_P\otimes_{\F_q}\F_{q^m}=O_{P'}$. Thus, we have $(P')^\Gs=P'$.

\subsection{Algebraic-geometric codes} Let $\mP=\{P_1,P_2,\dots,P_N\}$ be a set of $N$ distinct rational places of a function field $F/\F_q$ of genus $\g$. Let $G$ be a divisor of $F$ with $\Supp(G)\cap\mP=\emptyset$. Then the
algebraic-geometric code defined by
\begin{equation}
\label{eq:ag-code-defn}
C(\mP,G):=\{(f(P_1),f(P_2),\dots,f(P_N)):\; f\in\mL(G)\}
\end{equation}
is an $\F_q$-linear code of length $N$. Furthermore, the dimension of $C(\mP,G)$ is equal to $\ell(G)$ if $N>\deg(G)$.

The (generalized) Reed-Solomon codes can be realized under the above framework of algebraic-geometric codes. More precisely speaking, the Reed-Solomon codes are algebraic-geometric codes based on rational function fields. Let us give the detail on construction of the Reed-Solomon codes
under the framework of algebraic-geometric codes.

Let $F=\F_q(x)$ be a rational function field. Let $\Ga_1,\Ga_2,\dots,\Ga_n$ be $n$ distinct elements of $\F_q$. Denote by $P_i$ the unique zero of $x-\Ga_i$ for $1\le i\le n$ and put $\mP=\{P_1,P_2,\dots,P_n\}$. Let $\Pin$ be the unique pole of $x$. Put $G=(k-1)\Pin$. Then the Riemann-Roch space $\mL(G)$ is the $\F_q$-space consisting of polynomials of degree less than $k$. By definition, we have
\begin{eqnarray*}
C(\mP,G)&=&\{(f(P_1),f(P_2),\dots,f(P_N)):\; f\in\mL(G)\}\\
&=&\{(f(\Ga_1),f(\Ga_2),\dots,f(\Ga_N)):\; f\in\F_q[x],\; \deg(f)\le k-1\}.
\end{eqnarray*}

The  codes considered in this paper are variations of the above algebraic-geometric codes, namely, folded algebraic-geometric codes and algebraic-geometric codes with evaluation points in a subfield.

A folded algebraic-geometric code is a code with each coordinate being a column vector $(f(P),f(P^\Gs),\dots,f(P^{\Gs^{m-1}}))^T\in\F_q^m$ for a function $f\in \mL(G)$, a rational place $P$ and an automorphism $\Gs\in\Aut(F/\F_q)$, where $T$ stands for transpose. This is a generalization of folded Reed-Solomon codes introduced in \cite{GR-FRS}. The main reason why a folded  algebraic-geometric code is used is that once a position is transmitted correctly, then one gets $m$ correct components $(f(P),f(P^\Gs),\dots,f(P^{\Gs^{m-1}}))$. Consequently, more interpolation equations are increased and list decoding radius is enlarged (see Lemma \ref{lem:x5.4}, for instance).

Similar to folded algebraic-geometric codes, introducing   algebraic-geometric codes with evaluation points in a subfield is for purpose of increasing list decoding radius as well. We choose $N$ rational places $P_1,P_2,\dots,P_N$ of a function field $F/\F_q$ and let $\Gs$ be the Frobenius automorphism  of $\F_{q^m}/\F_q$. Then  one has $P_i^\Gs=P_i$ for all $1\le i\le N$. Thus, once we have a correct position $f(P_i)$ for some function $f\in \mL_m(G)$, we get correct information for other $m-1$  elements $f(P_i)^{\Gs^j}=f^{\Gs^j}(P_i)$ for $i=1,2,\dots,m-1$. As a result,   list decoding radius is enlarged (see  Lemma \ref{lem:Q-is-good}, for instance).

\subsection{Background on Hermitian tower}
\label{subsec:herm-prelim}
In what follows, let $r$ be a prime power and let $q=r^2$. We denote by $\F_q$ the finite field with $q$ elements. The Hermitian function tower that we are going to use for our code construction was discussed in \cite{Shen93}. The reader may refer to \cite{Shen93} for the detailed background on the Hermitian function tower. The Hermitian tower is defined by the following recursive equations
\begin{equation}\label{eq:x2}x_{i+1}^r+x_{i+1}=x_i^{r+1},\quad i=1,2,\dots,e-1.\end{equation}
Put  $F_e=\F_q(x_1,x_2,\dots,x_{e})$ for $e\ge 2$. We will assume that $r \ge 2e$.

\subsubsection{Rational places}
The function field $F_e$ has $r^{e+1}+1$ rational places. One of these is the ``point at infinity" which is the unique pole $\Pin$ of $x_1$ (and is fully ramified). The other $r^{e+1}$ come from the rational places lying over the unique zero $P_\Ga$ of $x_1-\Ga$ for each $\Ga\in\F_q$. Note that for every $\Ga\in\F_q$, $P_\Ga$ splits completely in $F_e$, i.e., there are $r^{e-1}$ rational places lying over $P_\Ga$.
Intuitively, one can think of the rational places of $F_e$ (besides $\Pin$) as being given by $e$-tuples $(\Ga_1,\Ga_2,\dots,\Ga_e)\in \F_q^e$ that satisfy $\Ga_{i+1}^r+\Ga_{i+1}=\Ga_i^{r+1}$ for $i=1,2,\dots,e-1$. For each value of $\Ga \in \F_q$, there are precisely $r$ solutions to $\beta \in \F_q$ satisfying $\beta^r + \beta = \Ga^{r+1}$, so the number of such $e$-tuples is $r^{e+1}$ ($q=r^2$ choices for $\Ga_1$, and then $r$ choices for each successive $\Ga_i$, $2 \le i  \le e$).

\subsubsection{Riemann-Roch spaces}
For an integer $l$, we consider the Riemann-Roch space defined by
\[\cL(l\Pin):=\{h\in F_e\setminus\{0\}:\; \nu_{\Pin}(h)\ge -l\}\cup\{0\}.\]
By the Riemann-Roch theorem, its dimension $\ell(l\Pin)$ is at least $l-\g_e+1$ and furthermore,
\[ \ell(l\Pin)=l-\g_e+1 \quad \text{if} \quad l\ge 2\g_e-1 \  , \]
where $\g_e$ is the genus of the function field $F_e$ given by \eqref{eq:x4} below.

A basis over $\F_q$ of $\cL(l\Pin)$ can be explicitly constructed as follows
\begin{equation}\label{eq:x3}\left\{x_1^{j_1}\cdots x_e^{j_e}:\; (j_1,\dots,j_e)\in\ZZ^e_{\ge 0},\ \sum_{i=1}^ej_ir^{e-i}(r+1)^{i-1}\le l\right\}.
\end{equation}

We stress that evaluating elements of $\cL(l\Pin)$ at the rational places of $F_e$ (other than $\Pin$) is easy: we simply have to evaluate a linear combination of the monomials allowed in \eqref{eq:x3} at the tuples $(\Ga_1,\Ga_2,\dots,\Ga_e)\in \F_q^e$ mentioned above. In other words, it is just evaluating an $e$-variate polynomial at a specific subset of $r^{e+1}$ points of $\F_q^e$, and can be accomplished in polynomial time.

\subsubsection{Genus}
The genus $\g_e$ of the function field $F_e$ is given by
\begin{equation}
\label{eq:x4}
\g_e=\frac12\left(\sum_{i=1}^{e-1}r^e\left(1+\frac1r\right)^{i-1}-(r+1)^{e-1}+1\right)\le \frac {r^e}{2}\sum_{i=1}^e{e\choose i}\frac1{r^{i-1}} \le \frac{e r^e}{2} \sum_{i=1}^e \left(\frac{e}{r}\right)^{i-1} \le e r^e
\end{equation}
where the last step used $r \ge 2e$.

\subsubsection{A useful automorphism}
Let $\Gg$ be a primitive element of $\F_q$. Then for $i\ge 1$, one has $\Gg^{r(r+1)^{i}}=\Gg^{(r^2+r)(r+1)^{i-1}}=\Gg^{(1+r)(r+1)^{i-1}}=\Gg^{(r+1)^{i}}$. Consider the automorphism $\Gs\in{\rm Aut}(F_e/\F_q)$ defined by
\[\Gs:\; x_i\mapsto\Gg^{(r+1)^{i-1}}x_i \quad \mbox{for}\ i=1,2,\dots,e.\]
Indeed, $\Gs$ defines an automorphism $\Gs\in{\rm Aut}(F_e/\F_q)$ since after action of $\Gs$ the equation \eqref{eq:x2} becomes $(\Gg^{(r+1)^{i}}x_{i+1})^r+\Gg^{(r+1)^{i}}x_{i+1}=(\Gg^{(r+1)^{i-1}}x_i)^{r+1}$, i.e., 
$x_{i+1}^r+x_{i+1}=x_i^{r+1}$ by cancelling $\Gg^{(r+1)^{i}}$ in both the sides.
The order of $\s$ is $q-1$ and furthermore, we have the following facts:
\begin{itemize}
\item[(i)] Let $P_0$ be the unique common zero of $x_1,x_2,\dots,x_e$ (this corresponds to the $e$-tuple $(0,0,\dots,0)$), and $\Pin$ the unique pole of $x_1$. The automorphism $\Gs$ keeps $P_0$ and $\Pin$ unchanged, i.e., $P_0^{\s}=P_0$ and $\Pin^{\s}=\Pin$,
\item[(ii)] Let $\PP$ be the set of all the rational places which are neither $P_{\infty}$ nor zeros of $x_1$. Then $|\PP|=(q-1)r^{e-1}$. Moreover,  $\Gs$ divides $\PP$ into $r^{e-1}$ orbits and each orbit has $q-1$ places. For an integer $m$ with  $1\le m\le q-1$,  we can label $Nm$  distinct elements    $P_1,P_1^{\s},\dots,P_1^{\s^{m-1}},\dots,P_N,P_N^{\s},\dots,P_N^{\s^{m-1}}$ in $\PP$, as long as $N \le r^{e-1}\left\lfloor\frac{q-1}m\right\rfloor$.
\end{itemize}

\subsection{Background on Garcia-Stichtenoth tower}
\label{subsec:gs-prelim}
Again let $r$ be a prime power and let $q=r^2$. We denote by $\F_q$
the finite field with $q$ elements. The Garcia-Stichtenoth towers that
we are going to use for our code construction were discussed in
\cite{GS95,GS96}. The reader may refer to \cite{GS95,GS96} for the
detailed background on the Garcia-Stichtenoth function tower.  There
are two optimal Garcia-Stichtenoth towers that are equivalent. For
simplicity, we introduce the tower defined by the following recursive
equations \cite{GS96}
\begin{equation}\label{eq:x5}
x_{i+1}^r+x_{i+1}=\frac{x_i^{r}}{x_i^{r-1}+1},\quad i=1,2,\dots,e-1.
\end{equation}
Put  $K_e=\F_q(x_1,x_2,\dots,x_{e})$ for $e\ge 2$.

\subsubsection{Rational places}
The function field $K_e$ has at least $r^{e-1}(r^2-r)+1$ rational places. One of these is the ``point at infinity" which is the unique pole $\Pin$ of $x_1$ (and is fully ramified). The other $r^{e-1}(r^2-r)$ come from the rational places lying over the unique zero of $x_1-\Ga$ for each $\Ga\in\F_q$ with $\Ga^r+\Ga\not=0$. Note that for every $\Ga\in\F_q$ with $\Ga^r+\Ga\not=0$, the unique zero of $x_1-\Ga$ splits completely in $K_e$, i.e., there are $r^{e-1}$ rational places lying over the zero of $x_1-\Ga$. Let $\PP$ be the set of all the rational places lying over the zero of $x_1-\Ga$ for all $\Ga\in\F_q$ with $\Ga^r+\Ga\not=0$.
Then, intuitively, one can think of the $r^{e-1}(r^2-r)$ rational places in $\PP$ as being given by $e$-tuples $(\Ga_1,\Ga_2,\dots,\Ga_e)\in \F_q^e$ that satisfy $\Ga_{i+1}^r+\Ga_{i+1}=\frac{\Ga_i^{r}}{\Ga_i^{r-1}+1}$ for $i=1,2,\dots,e-1$ and $\Ga_1^r+\Ga_1\neq 0$. For each value of $\Ga \in \F_q$, there are precisely $r$ solutions to $\beta \in \F_q$ satisfying $\beta^r + \beta = \frac{\Ga^{r}}{\Ga^{r-1}+1}$, so the number of such $e$-tuples is $r^{e-1}(r^2-r)$ ($r^2-r$ choices for $\Ga_1$, and then $r$ choices for each successive $\Ga_i$, $2 \le i  \le e$).

\subsubsection{Riemann-Roch spaces}
As shown in \cite{SAKSD01}, every function of $K_e$ with a pole only at $\Pin$ has an expression of the form
\begin{equation}\label{eq:x6}x_1^a\left(\sum_{i_1=0}^{(e-2)r+1}
\sum_{i_2=0}^{r-1}\cdots\sum_{i_e=0}^{r-1}c_{\bi}h_1\frac{x_1^{i_1}x_2^{i_2}\cdots x_e^{i_e}}{\pi_2\dots\pi_{e-1}}\right),
\end{equation}
where $a\ge 0, c_{\bi}\in\F_q$, and for $1\le j<e$, $h_j=x_j^{r-1}+1$ and $\pi_j=h_1h_2\dots h_j$. Moreover, Shum et al. \cite{SAKSD01} present an algorithm running in time polynomial in $l$  that outputs a basis of  over $\F_q$ of $\cL(l\Pin)$  explicitly in the above form.

We stress that evaluating elements of $\cL(l\Pin)$ at the rational places of $\PP$  is easy: we simply have to evaluate a linear combination of the monomials allowed in \eqref{eq:x6} at the tuples $(\Ga_1,\Ga_2,\dots,\Ga_e)\in \PP$ (note that $h_i(P),\pi_j(P)\in\F_q^*$ for every $P\in \PP$). In other words, it is just evaluating an $e$-variate polynomial at a specific subset of $r^{e-1}(r^2-r)$ points of $\F_q^e$, and can be accomplished in polynomial time.

\subsubsection{Genus}
The genus $\g_e$ of the function field $K_e$ is given by
\[\g_e=\left\{\begin{array}{ll}
(r^{e/2}-1)^2&\mbox{if $e$ is even}\\
(r^{(e-1)/2}-1)(r^{(e+1)/2}-1)&\mbox{if $e$ is odd.}\end{array}
\right.\]
Thus the genus $\g_e$ is at most $r^e$. (Compare this with the $e r^e$ bound for the Hermitian tower; this smaller genus is what allows to pick $e$ as large as we want in the Garcia-Stichtenoth tower, while keeping the field size $q$ fixed.)

\subsubsection{A useful automorphism}
Let $\Gg$ be a primitive element of $\F_q$ and consider the automorphism $\Gs\in{\rm Aut}(K_e/\F_q)$ defined by
\[\Gs:\; x_i\mapsto\Gg^{r+1}x_i \quad \mbox{for}\ i=1,2,\dots,e.\]
Indeed, $\Gs$ defines an automorphism $\Gs\in{\rm Aut}(K_e/\F_q)$ since after action of $\Gs$ the equation \eqref{eq:x5} becomes $(\Gg^{r+1} x_{i+1})^r+\Gg^{r+1} x_{i+1}=\frac{(\Gg^{r+1}x_i)^{r}}{(\Gg^{r+1}x_i)^{r-1}+1}$, i.e, $x_{i+1}^r+x_{i+1}=\frac{x_i^{r}}{x_i^{r-1}+1}$ by cancelling $\Gg^{r+1}$ on both the sides (note the fact that $\Gg^{(r+1)r}=\Gg^{r+1}$ and $\Gg^{r^2-1}=1$).
The order of $\s$ is $r-1$ and furthermore, we have the following facts:
\begin{itemize}
\item[(i)] $\Gs$ keeps  $\Pin$ unchanged, i.e., $\Pin^{\s}=\Pin$;
\item[(ii)] Let $\PP$ be the set of all the rational places lying over $x_1-\Ga$ for all $\Ga\in\F_q$ with $\Ga^r+\Ga\not=0$.
Then $|\PP|=(r-1)r^{e}$. Moreover,  $\Gs$ divides $\PP$ into $r^{e}$ orbits and each orbit has $r-1$ places. For an integer $m$ with  $1\le m\le r-1$,  we can label $Nm$  distinct elements
\[ P_1,P_1^{\s},\dots,P_1^{\s^{m-1}},\dots,P_N,P_N^{\s},\dots,P_N^{\s^{m-1}} \]
 in $\PP$, as long as $N\le r^{e}\left\lfloor\frac{r-1}m\right\rfloor$.
\end{itemize}

\section{Local expansions and encoding}
\label{sec:local-exp-encoding}
%
Similar to the Laurent series expansion of a complex function $f(z)$ in the neighborhood of a complex number, one can write functions in a function field as a power series (with finitely many negative powers) around a place $P$, called the \emph{local expansion around $P$}. Local expansions play an important role in the encoding and decoding of the codes we construct, and we discuss them separately in this section.

\subsection{Local expansion at a place}
Let $F/\F_q$ be a function field and let
$P$ be a rational place. An element $t$ of $F$ is called a local parameter at $P$ if $\nu_p(t)=1$ (such a local parameter always exists) --- intuitively this is a function which has a simple zero at $P$, similar to how $(z-1)$ has a simple zero at $1$.
For a nonzero function $f\in F$ with $\nu_P(f)\ge
v$, we have
$\nu_P\left(\frac f{t^v}\right)\ge 0.$
Put
$f_v=\left(\frac f{t^v}\right)(P),$
i.e., $f_v$ is the value of the function $f/t^v$ at $P$.  Note that the function $f/t^v-f_v$ satisfies
$\nu_P\left(\frac f{t^v}-f_v\right)\ge 1,$
hence we know that
$ \nu_P\left(\frac {f-f_vt^v}{t^{v+1}}\right)\ge 0.$
Put
$f_{v+1}=\left(\frac{f-a_vt^v}{t^{v+1}}\right)(P).$
Then  $\nu_P(f-f_vt^v-f_{v+1}t^{v+1})\ge v+2$.

Assume that we have obtained a sequence $\{f_r\}_{r=v}^m$ ($m>v$)
of elements of $\F_q$ such that
$\nu_P(f-\sum_{r=v}^kf_rt^r)\ge k+1$
for all $v\le k\le m$.
Put
$f_{m+1}=\left(\frac{f-\sum_{r=v}^mf_rt^r}{t^{m+1}}\right)(P).$
Then  $\nu_P(f-\sum_{r=v}^{m+1}f_rt^r)\ge m+2$.
In this way we continue our construction of  $f_r$. Then we obtain
an infinite sequence $\{f_r\}_{r=v}^{\infty}$ of elements of $\F_q$ such
that
$
\nu_P(f-\sum_{r=v}^mf_rt^r)\ge m+1
$
for all $m\ge v$.
We summarize the above  construction in the formal expansion
\begin{equation}
\label{eq:x1}
f=\sum_{r=v}^{\infty}f_{r}t^r,
\end{equation}
which is called the \emph{local expansion} of $f$ at $P$.

It is clear that the local expansion of a function depends on the choice of the local
parameter $t$. Note that if a power series
$\sum_{i=v}^{\infty}a_it^i$ satisfies $\nu_P(f-\sum_{i=v}^ma_it^i)\ge
m+1$ for all $m\ge v$, then it is a local expansion of $f$. The above procedure shows that finding a local expansion at a rational place is very efficient as long as the computation of evaluations of functions at this place is easy.

If $f$ belongs to a Riemann-Roch space $\mL(G)$ with $\deg(G)=d$. Denote $\nu_P(G)$ by $v$,  then the first $d+1$ coefficients $a_{v},a_{v+1},\dots,a_{v+d}$ in \eqref{eq:x1} determines the function $f$. To see this, assume that $g$ is a function of $\mL(G)$ with the  first $d+1$ coefficients in its local expansion equal to those of $f$. Then we have $f-g\in\mL(G-(d+1)P)$ which is the zero vector space. This implies that $f=g$.

\subsection{Encodings using local expansion}\label{subsec:5.2local}

An algebraic-geometric code as defined in \eqref{eq:ag-code-defn} encodes messages which belong to a Riemann-Roch space $\cl(G)$. The most common instantiation, which suffices for most purposes, is to take $G= l \Pin$ for some rational place $\Pin$ (though of as the place at infinity).  For such spaces, one can compute bases for the Riemann-Roch spaces explicitly in many cases, including the Hermitian and Garcia-Stichtenoth towers as mentioned in Sections~\ref{subsec:herm-prelim} and \ref{subsec:gs-prelim}.
For $k$ linearly independent functions $g_1,g_2,\dots,g_k$ in $\cL(l \Pin)$, one can interpret a message vector $(a_1,\dots,a_k) \in \F_q^k$ as the function $f = \sum_{i=1}^k a_i g_i \in \cL(l \Pin)$ and then encode it.


For our decoding, we will actually recover the message $f \in \cL(l {\Pin})$ in terms of the coefficients of its local expansion around a rational place $P$
\begin{equation}\label{eq:f-expansion} f= x^{-\nu}(f_0 + f_1 x + f_2 x^2 + \cdots ) \end{equation}
where $x$ is a local parameter at $P$. The place $P$ may or may not
equal $P_\infty$ --- when we instantiate the algorithm of this section
for the Hermitian tower, we will use a place different than $P_\infty$
for $P$, whereas for the Garcia-Stichtenoth tower, we will use $P =
P_\infty$. The description of the algorithm and its analysis in this
section will be general and cover both cases. Let
$\nu_P({\Pin})={\nu}$ with $\nu=0$ if ${\Pin}\neq P$ and $\nu=l$ if
$P={\Pin}$. Realizing that one must work in this power series
representation is one of the key insights in this work behind the
extension of the linear-algebraic folded Reed-Solomon list decoding
algorithm~\cite{GW-tit13} to the algebraic-geometric setting.

Given this, we will find it convenient to let the message vector consist of $(f_0,f_1,\dots,f_{k-1})$ $ \in \F^k$ ($k$ being the dimension of the code), which we will then map to a function $f$ in an appropriate Riemann-Roch space. Here we denote the field by $\F$, to capture both $\F=\F_q$ and $\F=\F_{q^m}$ when we work with constant field extensions $F_m$ and seek functions $f \in \cL_m(l \Pin)$. Likewise, we use the common notation $\cL_\F(l \Pin)$ to denote $\cl(l \Pin)$ when $\F=\F_q$, and $\cl_m(l \Pin)$ when $\F=\F_{q^m}$.

If we seek a $k$-dimensional message space, it is natural to let the message functions belong to $\cL((k+\g-1)\Pin)$ which has dimension exactly $k$ by the Riemann-Roch theorem (when $k$ is at least the genus $\g$, which will always hold for our codes). However, we desire to index the messages of the code instead by the first $k$ coefficients $(f_0,f_1,\dots,f_{k-1})$ of the local expansion of the function $f$ at $P$. Therefore we require that for every $(f_0,f_1,\dots,f_{k-1})$ there is a $f \in \cL_{\F}(l {\Pin})$ whose local expansion at $P$ has the $f_i$'s as the first $k$ coefficients. We can ensure by taking a slightly larger value of $l$, namely $l = k+2\g-1$ as we argue below.
Since the genus will be much smaller than the code length, we can afford the resulting small loss in distance and list-decoding radius.

We will recover the message in terms of the coefficients of its local expansion at $P$.

\smallskip\noindent {\bf Restricting message functions using local expansions.}
In order to prune the subspace of possible solutions, we will pick a subcode that corresponds to restricting the coefficients to a carefully constructed subset of all possibilities. This requires us to index message functions in terms of the local expansion coefficients. However, not all $(k+2\g-1)$ tuples over $\F$ arise in the local expansion of functions in the $k$-dimensional subspace $\cL_\F((k+2\g-1)\Pin)$. Below we show that we can
find a $k$-dimensional subspace of $\cL_\F((k+2\g-1)\Pin)$ such that their top $k$ local expansion coefficients give rise to all $k$-tuples over $\F$.

\begin{lemma}
\label{lem:choice-basis}
There exist a set of functions $\{g_1,g_2,\dots,g_k\}$ in $\cL_\F((k+2\g-1)\Pin)$ such that the $k\times k$ matrix $A$ formed by taking the $i$th row of $A$ to be the first $k$ coefficients in the local expansion \eqref{eq:x1} for $g_i$ at $P$ is non-singular.
\end{lemma}
\begin{proof} Let $\{\psi_1,\psi_2,\dots,\psi_\g\}$ be a basis of $\cL_\F((k+2\g-1)\Pin-kP)$. Extend this basis to a basis $\{\psi_1,\psi_2,\dots,\psi_\g,g_1,g_2,\dots,g_k\}$ of $\cL_\F((k+2\g-1)\Pin)$. We claim that the functions $\{g_1,g_2,\dots,g_k\}$ are our desired functions.

Suppose that the matrix $A$ is obtained from expansion of functions $g_i$ and  it is singular. This implies that there exists elements $\{\lambda_i\}_{i=1}^k$ such that the function $\sum_{i=1}^k\lambda_i g_i$ has local expansion $\sum_{i=k}^{\infty}a_i T^i$ at $P$ for some $a_i\in \F$. Therefore, the function $\sum_{i=1}^k\lambda_i g_i$ belongs to the space $\cL_\F((k+2\g-1)\Pin-kP)$, i.e., $\sum_{i=1}^k\lambda_i g_i$ is a linear combination of $\psi_1,\psi_2,\dots,\psi_g$. This forces that all $\lambda_i$ are equal to $0$  since $\{\psi_1,\dots,\psi_g,g_1,g_2,\dots,g_k\}$  is linearly independent. This completes the proof.
\end{proof}

With the above lemma in place, we now describe our AG code in a manner convenient for pruning the possible local expansion coefficients.

\smallskip
\noindent {\bf Encoding.} Assume that we have found a set of functions $\{g_1,g_2,\dots,g_k\}$ of $\cL_\F((k+2\g-1)\Pin)$ as in Lemma \ref{lem:choice-basis}. After elementary row operations on the matrix $A$ defined in Lemma \ref{lem:choice-basis}, we may assume that $A$ is the $k\times k$ identity matrix, i.e., we assume that, for $1\le i\le k$, the function $g_i$ has local expansion $T^{i-1}+\sum_{j=k}^{\infty}\lambda_{ij}T^j$ for some $\lambda_{ij}\in\F$.
Now we encode
each message $(a_1,a_2,\dots,a_k)\in \F^k$ to the codeword $(f(P_1),f(P_2),\dots,f(P_N))$, where $f=\sum_{i=1}^ka_i g_i$.

Now define the map $\phi_{P} : \F^k \to \cL_\F((k+2\g-1)\Pin)$ by sending $(a_1,a_2,\dots,a_k)\in \F^k$ to  $\sum_{i=1}^ka_i g_i$. We record the above fact for easy reference below.

\begin{claim}
	\label{clm:x5.3}
	The map $\phi_{P} : \F^k \to \cL_\F((k+2\g-1)\Pin)$ is $\F_q$-linear and injective. Furthermore, we can compute a representation of this linear transformation using ${\rm poly}(N,\g)$ operations over $\F_q$, and the map itself can be evaluated  using ${\rm poly}(N,\g)$ operations over $\F_q$ provided that local expansion of the basis elements of $\cL((k+2\g-1)\Pin)$ at $P$ can be computed using ${\rm poly}(N,\g)$ operations over $\F_q$.
\end{claim}

\section{Folded algebraic-geometric codes and their list decoding}
\label{sec:hermitian}
In this section, we will describe a variation of algebraic-geometric codes, namely, folded algebraic-geometric codes and their list decoding. For convenience, we will focus on one-point algebraic-geometric codes though this is not in any way a necessary restriction for our approach.

\subsection{Folded algebraic-geometric codes}
Let $F/\F_q$ be a function field.
To construct our folded codes, we assume that there exists  a global function field $F$ with the full constant field $\F_q$ having the following property:
\begin{itemize}
\item [(i)]  There exists  an automorphism $\Gs$ in $\Aut(F/\F_q)$ of order at least $m$;
\item [(ii)] $F$ has $mN$ distinct rational places $P_1, P_1^{\Gs},\dots,  P_1^{\Gs^{m-1}}, P_2, P_2^{\Gs},\dots,
   P_2^{\Gs^{m-1}}, \dots, $\\ $P_N, P_N^{\Gs},\dots,  P_N^{\Gs^{m-1}}$;
    \item [(iii)] $F$ has a rational place ${\Pin}$  such that ${\Pin}$ is fixed under $\Gs$, i.e., ${\Pin}^{\Gs}={\Pin}$; and $P_i^{\Gs^j}\neq {\Pin}$ for all $1\le i\le N$ and $0\le j\le m-1$.
\end{itemize}

A folded algebraic geometric code can be defined as follows.
\begin{defn}[Folded AG codes]
\label{def:f-ag-code}{\rm
The  folded  code from $F$ with parameters $N,l,q,m$, denoted by ${\FD}(N,l,q,m)$,  encodes a message function $f \in \cL(l\Pin)$  as
\begin{equation}
\label{eq:x7} \pi:\quad
f \mapsto
\left(
\left[\begin{array}{c}  f(P_1) \\  f(P_1^{\s}) \\ \vdots \\  f(P_1^{\s^{m-1}})\end{array}\right],
\left[\begin{array}{c} f(P_2) \\  f(P_2^{\s}) \\ \vdots \\  f(P_2^{\s^{m-1}})\end{array}\right],
\ldots,
\left[\begin{array}{c} f(P_N) \\  f(P_N^{\s}) \\ \vdots \\  f(P_N^{\s^{m-1}})\end{array}\right]
\right)  \in \left( \F_{q}^m \right)^{N} \ .
\end{equation}
We will abuse notation and for clarity refer to the encoding map $\pi$ also as  ${\FD}(N,l,q,m)$.
}\end{defn}
Note that the folded code ${\FD}(N,l,q,m)$ has the alphabet $\F_q^m$ and it is $\F_q$-linear. Furthermore, ${\FD}(N,l,q,m)$ has the following parameters.
\begin{lemma}\label{lem:x5.1} If $l<mN$, then
the above code ${\FD}(N,l,q,m)$ is an $\F_q$-linear code with alphabet size $q^{m}$, rate at least $\frac{l-\g+1}{Nm}$, and minimum distance at least $N  - \frac{l}{m}$, where $\g$ is the genus of $F$.
\end{lemma}
\begin{proof} It is clear that the map $\pi$ in  (\ref{eq:x7}) is $\F_q$-linear and  the kernel of $\pi$ is
\[ \cL\Bigl( l\Pin-\sum_{i=1}^N\sum_{j=0}^{m-1}P_i^{\s^j}\Bigr) \]
 which is $\{0\}$ under the condition that $l<mN$. Thus, $\pi$ is injective. Hence, the rate is at least $\frac{l-\g+1}{Nm}$ by  the Riemann-Roch theorem. To see the minimum distance, let $f$ be a nonzero function in $\cL(l\Pin)$ and assume that $I$ is the support of $\pi(f)$. Then the Hamming weight ${\rm wt}_H(\pi(f))$ of $\pi(f)$ is $|I|$ and  $f\in \cL\left(l\Pin-\sum_{i\not\in I}\sum_{j=0}^{m-1}P_i^{\s^j}\right)$. Thus, $0\le \deg\left(l\Pin-\sum_{i\not\in I}\sum_{j=0}^{m-1}P_i^{\s^j}\right)=l-m(N-|I|)$, i.e., ${\rm wt}_H(\pi(f))=|I|\ge N-\frac{l}{m}$. This completes the proof.
\end{proof}

\subsection{Encoding of code using local expansions}
\label{subsec:foldedag-enc-le}

For our decoding, we will actually recover the message $f \in \cL(l {\Pin})$ in terms of the coefficients of its power series expansion around a rational place $P$
\begin{equation}
  \label{eq:temp-le}
  f= x^{-\nu}(f_0 + f_1 x + f_2 x^2 + \cdots )
  \end{equation}
where $x$ is a local parameter at $P$. The place $P$ may or may not
equal $P_\infty$ -- when we instantiate the algorithm of this section
for the Hermitian tower, we will use a place different than $P_\infty$
for $P$, whereas for the Garcia-Stichtenoth tower, we will use $P =
P_\infty$. The reason for different choice of $P$ is that we need an explicit and simple local parameter at $P$ such that this local parameter still has an explicit and simple form after action of automorphism.  The description of the algorithm and its analysis in this
section will be general and cover both cases by letting
$\nu_P({\Pin})={\nu}$ with $\nu=0$ if ${\Pin}\neq P$ and $\nu=l$ if
$P={\Pin}$. Realizing that one must work in this power series
representation is one of the key insights in this work behind the
extension of the linear-algebraic folded Reed-Solomon list decoding
algorithm~\cite{GW-tit13} to the algebraic-geometric setting.

As already mentioned in Section~\ref{sec:local-exp-encoding}, one can injectively map the top $k$ coefficients of the above local expansion \eqref{eq:temp-le} into functions in $\cL(l\Pin)$ for $l=k+2\g-1$.
We will now redefine a version of the folded algebraic-geometric code that maps $\F_q^k$ to $(\F_q^m)^N$ by composing the folded encoding \eqref{eq:x7} from the original Definition \ref{def:f-ag-code} with the map $\phi_{P} : \F_q^k \to \cL((k+2\g-1)\Pin)$ promised in Claim~\ref{clm:x5.3}.

\begin{defn}[Folded algebraic-geometric code using local expansion]
\label{def:folded-agcode}
The folded algebraic-geometric code  $\widetilde{\FD}(N,k,q,m)$ maps
\[  \mv{f} = (f_0,f_1,\dots,f_{k-1})\in \F_q^k \quad \mapsto \quad  {\FD}(N,k+2\g-1,q,m)(\phi_{P}(\mv{f})) \in (\F_q^m)^N \ , \]
where ${\FD}(\dots)$ is the folded AG code from Definition~\ref{def:f-ag-code}.
\end{defn}

\smallskip \noindent
The rate of the above code equals $k/(Nm)$ and its distance is at least $N-(k+2\g-1)/m$.

\subsection{List decoding folded algebraic-geometric codes}\label{subsec:LDAG}

We now present a list decoding algorithm for the above codes. The algorithm follows the linear-algebraic list decoding algorithm for folded Reed-Solomon codes.

Suppose a codeword (\ref{eq:x7}) encoded from $f\in  \cL((k+2\g-1){\Pin})$ was transmitted and received as
\begin{equation}
\label{eq:x8}
\mathbf{y} =
\left(
\begin{array}{ccccc}
y_{1,1} & y_{2,1} & & & y_{N,1}\\
y_{1,2} & y_{2,2} & & & \vdots\\
& & & \ddots & \\
 y_{1,m} & \cdots &&& y_{N,m}
\end{array}
\right),
\end{equation}
where some columns are erroneous.
Let $s \ge 1$ be an integer parameter associated with the decoder.
\begin{lemma}
\label{lem:x5.4}
Given a received word as in {\rm (\ref{eq:x8})},  we can find a nonzero linear polynomial in $F[Y_1,Y_2,\dots,Y_s]$ of the form
\begin{equation}
\label{eq:x9}
Q(Y_1,Y_2,\dots,Y_s) =
A_0 + A_1 Y_1 + A_2 Y_2 + \cdots + A_s Y_s \
\end{equation}
satisfying
\begin{equation}
\label{eq:x10}
Q(y_{i,j+1},y_{i,j+2},\cdots,y_{i,j+s}) =A_0(P_i^{\s^{j}})+A_1(P_i^{\s^{j}})y_{i,j+1}+\cdots+A_s(P_i^{\s^{j}})y_{i,j+s}= 0
\end{equation}
{for } $i=1,2,\dots,N$ { and }  $j =0,1,\dots,m-s$. The coefficients $A_i$ of $Q$ satisfy
$A_i \in \cL(\kappa {\Pin})$ for $i=1,2,\dots,s$ and $A_0\in \cL((\Gk+(k+2\g-1)){\Pin})$ for a ``degree" parameter $\Gk$ chosen as
\begin{equation}
\label{eq:x11}
 \Gk=\left\lceil \frac {N(m-s+1)- (k+2\g-1)+(s+1)(\g-1)+1}{s+1}\right\rceil .
\end{equation}
\end{lemma}
\begin{proof} Let $u$ and $v$ be dimensions of $\cL(\Gk {\Pin})$ and $\cL((\Gk+(k+2\g-1)){\Pin})$, respectively. Let $\{x_1,\dots,x_u\}$ be an $\F_q$-basis of $\cL(\Gk {\Pin})$ and extend it to an $\F_q$-basis $\{x_1,\dots,x_v\}$ of  $\cL((d+k+2\g-1){\Pin})$. Then $A_i$ is an $\F_q$-linear combination of $\{x_1,\dots,x_u\}$ for $i=1,2,\dots,s$ and $A_0$ is an $\F_q$-linear combination of $\{x_1,\dots,x_v\}$. Determining the functions $A_i$ is equivalent to determining the coefficients in the combinations of $A_i$. Thus, there are in total $su+v$ degrees of freedoms to determine $A_0,A_1,\dots,A_s$. By the Riemann-Roch theorem, the number of degrees of freedoms is at least $s(\Gk -\g+(k+2\g-1))+(\Gk+k+2\g-1)-\g+1$.

On the other hand, there are in total $N(m-s+1)$ equations in (\ref{eq:x10}). Thus, there must be one nonzero solution by the condition (\ref{eq:x11}), i.e., $Q(Y_1,Y_2,\dots,Y_s)$ is a nonzero polynomial.
\end{proof}

\begin{lemma}
\label{lem:x5.5}
If $f$ is a function in $\cL(l\Pin)$ whose encoding (\ref{eq:x7}) agrees
with the received word $\mathbf{y}$ in at least $t$ columns with
\[ t>\frac{\Gk+l}{m-s+1} \ , \]
 then $Q(f,f^{\s^{-1}},\dots,f^{\s^{-(s-1)}})$ is the zero function, i.e., \begin{equation}\label{eq:x12}A_0+A_1f+A_2f^{\s^{-1}}+\cdots+A_sf^{\s^{-(s-1)}}=0.\end{equation}
\end{lemma}
\begin{proof}  Since ${\Pin}={\Pin}^{\s}$, we have $f^{\s^i}\in \cL(l\Pin)$ for all $i\in \ZZ$. Thus, it is clear that $Q(f,f^{\s^{-1}},\dots,f^{\s^{-(s-1)}})$ is a function in $\cL((\Gk+l){\Pin})$.

Let us assume that $I\subseteq\{1,2,\dots,N\}$ is the index set such that the $i$th columns  of $\by$ and $\pi(f)$ agree if and only if $i\in I$.  Then we have $|I|\ge t$. For every $i\in I$ and $0\le j\le m-s$, we have by (\ref{eq:x10})
\begin{eqnarray*}
0&=&A_0(P_i^{\s^{j}})+A_1(P_i^{\s^{j}})y_{i,j+1}+A_2(P_i^{\s^{j}})y_{i,j+2}+\cdots+A_s(P_i^{\s^{j}})y_{i,j+s}\\
&=&A_0(P_i^{\s^{j}})+A_1(P_i^{\s^{j}})f(P_i^{\s^{j}})+A_2(P_i^{\s^{j}})f(P_i^{\s^{j+1}}))+\cdots+A_s(P_i^{\s^{j}})f(P_i^{\s^{j+s-1}})\\
&=&A_0(P_i^{\s^{j}})+A_1(P_i^{\s^{j}})f(P_i^{\s^{j}})+A_2(P_i^{\s^{j}})f^{\s^{-1}}(P_i^{\s^{j}})+\cdots+A_s(P_i^{\s^{j}})f^{\s^{-s+1}}(P_i^{\s^{j}})\\
&=&\left(A_0+A_1f+A_2f^{\s^{-1}}+\cdots+A_sf^{\s^{-s+1}}\right)(P_i^{\s^{j}}),\end{eqnarray*}
 i.e., $P_i^{\s^{j}}$ is a zero of $Q(f,f^{\s},\dots,f^{\s^{s-1}})$. Hence, $Q(f,f^{\s^{-1}},\dots,f^{\s^{-(s-1)}})$ is a function in $\cL\left((\Gk+l){\Pin}-\sum_{i\in I}\sum_{j=0}^{m-s}P_i^{\s^{j}}\right)$. Our desired result follows from the fact that \\ $\deg\left((\Gk+l){\Pin}-\sum_{i\in I}\sum_{j=0}^{m-s}P_i^{\s^{j}}\right)<0$.
\end{proof}

By Lemma \ref{lem:x5.5}, we know that all candidate functions $f$ in our list must satisfy  equation (\ref{eq:x12}). In other words,  we have to study the solution set of  equation (\ref{eq:x12}). The  method used in \cite{GR-FRS} for decoding the Reed-Solomon codes is to construct an irreducible polynomial $h(x)$ of degree $q-1$ such that every polynomial $f$ satisfies $f^{\s^{-1}}\equiv f^{q} \mod{h} $. Then the solution set of (\ref{eq:x7}) is the same as the solution set of the equation $ A_0+A_1f+A_2f^{q}+\cdots+A_sf^{q^{s-1}} \equiv 0 \mod{h}$ since $\deg(f)<q-1=\deg(h)$. Thus, there are at most $q^{s-1}$ solutions for equation (\ref{eq:x7}).
This method does not work for folded algebraic-geometric codes. To upper bound list size of a folded algebraic-geometric code,
 we require an automorphisms of $\Aut(F/F_q)$ with order proportional to the genus $\g$ of $F$. However, it was proved in \cite{MX17} that the order of an automorphisms of $\Aut(F/\F_q)$ is upper bounded $O(\g/\log \g)$.

In this paper, we will analyze the solutions of the equation (\ref{eq:x7}) by considering local expansions at a certain point. This local expansion method  guarantees  a structured list of exponential size. Through precoding by using the structure in the list, we will be able to obtain an explicit construction of subcodes of these codes with polynomial time list decoding.

\smallskip\noindent {\bf {\em Solving the functional equation for $f$.}}
Recall that our goal is to recover the top $k$ coefficients $(f_0,f_1,\dots,f_{k-1})$ of the local expansion $f=x^{-\nu}\sum_{j=0}^{\infty}f_{j}x^{j}$
at $P$, based on the functional equation \eqref{eq:x12} that $f$ satisfies.

We now prove that $(f_0,f_1,\dots,f_{k-1})$ for $f$ satisfying Equation \eqref{eq:x12} belong to a periodic subspace (in the sense of Definition~\ref{def:periodic-subspaces}) of not too large dimension.
\begin{lemma}
\label{lem:x5.6} Let $P$ and $\Pin$ be two rational places of $F$ ($P$ and $\Pin$ can be the same) and let $f\in  \cL((k+2\g-1){\Pin})$. Assume that $\Gs\in\Aut(F/\F_q)$ is an automorphism satisfying ${\Pin}^\Gs={\Pin}$. Let $x\in F$ be a local parameter at $P$ satisfying  $x^\Gs=\frac{x}{\xi}$ for an element $\xi\in\F_q^*$ of order $p$. Put $\nu=k+2\g-1$ if  $P=\Pin$ and $0$ otherwise.

Then the set of solutions $(f_0,f_1,\dots,f_{k-1}) \in \F_q^k$ such that $f = x^{-\nu}(f_0+f_1 x+ f_2 x^2 + \cdots) \in \cL((k+2\g-1){\Pin})$ obeys the equation
\begin{equation}
\label{eq:x13}
A_0+A_1f+A_2f^{\s^{-1}}+\cdots+A_sf^{\s^{-(s-1)}}=0,
\end{equation}
when the $A_i$'s obey the pole order restrictions of Lemma \ref{lem:x5.4} and at least one $A_i$ is nonzero, is an $(s-1,p)$-ultra periodic subspace of $\F_q^k$.

Further, there are at most $q^{Nm+s+1}$ possible choices of this subspace over varying choices of the $A_i$'s.
\end{lemma}

\begin{proof}
  Let $u=\min\{\nu_{P}(A_i):\; i=1,2,\dots,s\}$.
  Then we have  $\nu_{P}(A_0)=\nu_{P}(-\sum_{i=1}^sA_if^{-\Gs^{i-1}})\ge\min\{\nu_{P}(A_if^{-\Gs^{i-1}})):\; i=1,2,\dots,s\}\ge \min\{\nu_{P}(A_i)-\nu:\; i=1,2,\dots,s\}=u-\nu$.
  Each $A_i$ has a local expansion at $P$:
\[A_i=x^u\sum_{j=0}^{\infty}a_{i,j}x^{j}\]
for $i=1,\dots,s$, and $A_0=x^{u-\nu}\sum_{j=0}^{\infty}a_{0,j}x^{j}$ which can be efficiently computed from the basis representation of the $A_i$'s. From the definition of $u$, one knows that the polynomial
\[B_0(X):=a_{1,0}+a_{2,0}X+\cdots+a_{s,0}X^{s-1}\]
is nonzero.
Assume that at $P$, the function $f$ has a local expansion $x^{-\nu}\sum_{j=0}^\infty f_j x^j$. Then $f^{\s^{-i}}$ has a local expansion at $P$ as follows
\[f^{\s^{-i}}=\xi^{-i\nu}x^{-\nu} \sum_{j=0}^{\infty}\xi^{ij}f_{j}x^{j}.\]
By direct inspection, we see that for every $d\ge 0$, the coefficient of $x^{d+u-\nu}$ in the local expansion of $A_0+A_1f+A_2f^{\s^{-1}}+\cdots+A_sf^{\s^{-(s-1)}}$ equals 
\begin{equation}
\label{eq:x14}
0=B_0(\xi^{d-\nu})f_d+\sum_{j=1}^{d} B_j(\xi^{d-j-\mu}) f_{d-j} + a_{0,d},
\end{equation}
where similarly to $B_0(X)$, the degree $(s-1)$ polynomials $B_j(X)$, $j \ge 1$, are defined as
\[ B_j(X) = a_{1,j} + a_{2,j} X + \cdots + a_{s,j} X^{s-1}. \]
Hence, $f_d$ is uniquely determined by $f_0,\dots,f_{d-1}$  as long as $B_0(\xi^{d-\nu})\neq 0$.

Let $S: =\{0\le i \le p-1:\; B_0(\xi^i)=0\}$. Then it is clear that $|S|\le s-1$
since the order of $\xi$ is $p$ so the powers $\xi^i$ are distinct for $0 \le i\le p-1$, and $B_0(X)$ has degree at most $s-1$. Thus,  $B_0(\xi^{d-\nu})\neq 0$ if and only if $d -\nu \mod p \notin S$; and in this case $f_d$ is a fixed affine linear combination of $f_j$ for $0 \le j < d$.

Let $W$ be the solution space $(z_0,z_1,\dots,z_{p-1}) \in \F_q^p$  of the equation system
\begin{equation}
  \label{eq:x23}
  B_0(\xi^{d-\mu})z_d+\sum_{j=1}^{d}B_j(\xi^{d-\mu-j}) z_j =0 \text{ for } d=0,1,\dots,p-1  \ .
\end{equation}
The above argument shows that $W$ is a subspace of $\F_q^p$ of dimension at most $s-1$.


We now claim that the solutions to \eqref{eq:x14} for $0 \le d < k$
form an $(s-1,p)$-periodic subspace of $\F_q^k$ with $W \subset
\F_q^p$ as the recurring subspace. This is immediate by inspecting the
system of equations \eqref{eq:x14} satisfied by the $f_i$'s and the
system \eqref{eq:x23} defining the subspace $W$. Indeed, once the values of
$f_i$, $0 \le i < p(j-1)$ are fixed, the possible choices for the
$j$'th block of $p$ coordinates, $f_{p(j-1)},\cdots, f_{pj-1}$, lie in
an affine shift of $W$.
Further, this shift is an explicit affine combination of the $f_i$'s for $0 \le i < p(j-1)$ (i.e., the previous $j-1$ blocks).

A closer inspection of \eqref{eq:x14} reveals that the subspace is in fact $(s-1,p)$-\emph{ultra periodic}, and are defined by a system of equations with the periodic structure of \eqref{eqn:ultra-period} of Definition~\ref{def:ultra-periodic}.

Finally, we record the bound on the number of different possible
solution spaces (this will be useful when we prune these via h.s.e
sets later).  By the choice of $\kappa$ in \eqref{eq:x12}, the total
number of possible $(A_0,A_1,\dots,A_s)$ and hence the number of
possible functional equations \eqref{eq:x12}, is at most $q^{N(m-s+1)
  + s + 1} \le q^{Nm+s+1}$. Therefore, the number of possible
candidate solution spaces is also at most $q^{Nm+s+1}$.
\end{proof}

Combining Lemmas \ref{lem:x5.4} and \ref{lem:x5.5} together with some
simple calculations leads to the following statement concerning list
decoding folded algebraic-geometric codes. We will later instantiate
this with Hermitian and Garcia-Stichtenoth towers, and also combine
with appropriate hierarchical subspace evasive sets to prune the
periodic subspace of solutions into a small list size.

\begin{theorem}
  \label{thm:folded-ag-general-ld}
  Consider the folded algebraic-geometric code from Definition~\ref{def:folded-agcode} based on a function field $F/\F_q$ and automorphism $\sigma$. Let $P$ (possibly equal to $\Pin$) be a rational place for which $x^\sigma = x/\xi$ for some local parameter $x \in F$ at $P$ and $\xi$ of order $p \ge m$ in $\F_q^*$. Assume that local expansions of functions in $\cL((k+2\g-1)\Pin)$ at $P$ can be computed in polynomial time.

  Then one can find a representation \eqref{eqn:proj-period} of an $(s,p)$-periodic subspace of $\F_q^k$ containing all candidate messages $(f_0,f_1,\dots,f_{k-1})$ in polynomial time, when the fraction of errors $\tau = 1-t/N$ satisfies
\begin{equation}
\label{eq:x15}
 \tau \le \frac{s}{s+1}- \frac{s}{s+1} \frac{k}{N(m-s+1)} - \frac{3m}{m-s+1} \frac{\g}{mN} \ .
 \end{equation}
\end{theorem}

\section{List decoding algebraic-geometric codes with subfield evaluation points}
\label{sec:lin-rs}
In this section, we will present a linear-algebraic list decoding algorithm for algebraic-geometric (AG) codes based on evaluations of functions at rational points over a \emph{subfield}.

The strategy in this section is similar to that of folded
algebraic-geometric codes. For a folded algebraic-geometric code, once
a coordinate is received correctly, then we have correct information
on $f(P_i),f(P_i^\Gs)=f^{\Gs^{-1}}(P_i),\dots,
f(P_i^{\Gs^{m-1}})=f^{\Gs^{-m+1}}(P_i)$. For algebraic-geometric codes
in this section, we has a similar property. Namely, once we receive a
coordinate correctly, then we have correct information on
$f(P_i),f^{\Gs}(P_i),\dots, f^{\Gs^{m-1}}(P_i)$, where $\Gs$ is the
Frobenius automorphism of an extension field.

For simplicity, to illustrate the ideas in a self-contained way in the
setting of univariate polynomials, we begin with the case of
Reed-Solomon codes in Section~\ref{new-subsec:decoding-rs} . We then
extend it to a general framework for decoding (one-point)
algebraic-geometric codes based on constant field extensions in
Section~\ref{new-subsec:decoding-ag}. Later on in the paper, we will
instantiate the general framework to codes based on the
Garcia-Stichtenoth tower discussed in \ref{subsec:gs-prelim}.

\subsection{Decoding Reed-Solomon codes}
\label{new-subsec:decoding-rs}
Our list decoding algorithm will apply to Reed-Solomon codes with evaluation points in a subfield, defined below.
\begin{defn}\label{def:RS}
[Reed-Solomon code with evaluations in a subfield]
Let $\F_q$ be a finite field with $q$ elements, and $m$ a positive integer. Let $n,k$ be positive integers satisfying $1 \le k < n \le q$. The Reed-Solomon code $\rs^{(q,m)}[n,k]$ is a code over alphabet $\F_{q^m}$ that encodes a polynomial $f \in \F_{q^m}[X]$ of degree at most $k-1$ as
\[ f(X) \mapsto (f(\Ga_1),f(\Ga_2),\cdots,f(\Ga_n)) \]
where $\Ga_1,\Ga_2,\dots,\Ga_n$ are an arbitrary sequence of $n$ distinct elements of $\F_q$.
\end{defn}

Note that while the message polynomial has coefficients from
$\F_{q^m}$, the encoding only contains its evaluations at points in
the subfield $\F_q$. The above code has rate $k/n$, and minimum
distance $(n-k+1)$.

We now present a list decoding algorithm for the above Reed-Solomon
codes.  Suppose the codeword $(f(\Ga_1),f(\Ga_2),\cdots,f(\Ga_n))$ is
received as $(y_1,y_2,\dots,y_n) \in \F_{q^m}^n$ with at most $e =
\tau n$ errors (i.e., $y_i \neq f(\Ga_i)$ for at most $e$ values of
$i\in \{1,2,\dots,n\}$).  The goal is to recover the list of all
polynomials of degree less than $k$ whose encoding is within Hamming
distance $e$ from $y$. As is common in algebraic list decoders, the
algorithm will have two steps: (i) interpolation to find an algebraic
equation the message polynomials must satisfy, and (ii) solving the equation for the candidate message polynomials.

\medskip \noindent {\bf Interpolation step.} Let $1 \le s \le m$ be an integer parameter of the algorithm. Choose the ``degree parameter" $D$ to be
\begin{equation}
\label{eq:choice-of-D-rs}
 D=\left\lfloor \frac {n-k+1}{s+1}\right\rfloor .
\end{equation}

\begin{defn}[Space of interpolation polynomials]
Let $\mathcal{P}$ be the space of polynomials $Q \in \F_{q^m}[X,Y_1,Y_2,\dots,Y_s]$ of the form
\begin{equation}
\label{eq:form-of-Q-rs}
Q(X, Y_1,Y_2,\dots,Y_s) =
A_0(X) + A_1(X) Y_1 + A_2(X) Y_2 + \cdots + A_s( X) Y_s \ ,
\end{equation}
with each $A_i \in \F_{q^m}[X]$ and $\deg(A_0)\le D+k-1$ and $\deg(A_i)\le D$ for $i=1,2,\dots,s$.
\end{defn}

The lemma below follows because for our choice of $D$, the number of degrees of freedom for polynomials in $\mathcal{P}$ exceeds the number $n$ of interpolation conditions \eqref{eq:cond-on-Q-rs}. We include the easy proof for completeness.
\begin{lemma}
\label{lem:Q-rs}
There exists a nonzero polynomial $Q \in \mathcal{P}$ such that
\begin{equation}
\label{eq:cond-on-Q-rs}
Q(\Ga_i,y_i,y_i^q,y_i^{q^2},\cdots,y_i^{q^{s-1}}) = 0 \quad \text{for} \quad i=1,2,\dots,n \ .
\end{equation}
Further such a $Q$ can be found using $O(n^3)$ operations over $\F_{q^m}$.
\end{lemma}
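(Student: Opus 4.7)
The plan is a standard ``more unknowns than equations'' argument. First I would count the $\F_{q^m}$-dimension of the interpolation space $\mathcal{P}$: the coefficient vector of $A_0$ lives in a space of dimension $D+k$ (since $\deg A_0 \le D+k-1$), while each $A_i$ for $1 \le i \le s$ contributes $D+1$ coefficients. Hence
\[
\dim_{\F_{q^m}}(\mathcal{P}) = (D+k) + s(D+1) = (s+1)D + s + k .
\]
Next I would observe that each condition in \eqref{eq:cond-on-Q-rs} is a homogeneous $\F_{q^m}$-linear equation in the coefficients of the $A_j$'s (the points $\alpha_i$ and the symbols $y_i, y_i^q,\dots,y_i^{q^{s-1}}$ are simply scalars in $\F_{q^m}$), giving a system of $n$ linear constraints.

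I would then verify that the choice $D = \lfloor (n-k+1)/(s+1) \rfloor$ guarantees strictly more unknowns than constraints. Using $(s+1)D \ge n-k+1-s$, one gets
\[
(s+1)D + s + k \ge (n-k+1-s) + s + k = n+1 > n ,
\]
so the homogeneous linear system is underdetermined and admits a nonzero solution $Q \in \mathcal{P}$ by basic linear algebra. This is the whole existence argument.

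For the complexity claim, I would note that the interpolation reduces to solving an $n \times O(n)$ homogeneous linear system over $\F_{q^m}$ (since $(s+1)D + s + k = O(n)$ because $s \le m$ and $D \le n/(s+1)$), which can be done with $O(n^3)$ field operations via Gaussian elimination on the associated matrix; any vector in the kernel yields the desired $Q$. There is no real obstacle here---the lemma is a routine parameter count plus a linear-algebra appeal---so my only care would be bookkeeping the floor in the definition of $D$ correctly, and being explicit that the field of scalars for counting dimensions is $\F_{q^m}$ (the evaluation points $\alpha_i \in \F_q$ do not reduce the number of available coefficients, since the $A_j$ are allowed to have coefficients in the extension field).
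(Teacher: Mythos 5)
Your proposal is correct and follows the same argument as the paper: count the dimension of $\mathcal{P}$ as $(D+k)+s(D+1)$, check it exceeds the $n$ homogeneous linear interpolation constraints using the choice of $D$, and solve the resulting $O(n)\times O(n)$ system by Gaussian elimination in $O(n^3)$ field operations. The bookkeeping of the floor is handled correctly, so nothing is missing.
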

\begin{proof}
Note that $\mathcal{P}$ is an $\F_{q^m}$-vector space of dimension
\[ (D+k) + s (D+1) = (D+1) (s+1) + k - 1 > n, \]
where the last inequality follows from our choice \eqref{eq:choice-of-D-rs}.
The interpolation conditions required in the lemma impose $n$ homogeneous linear conditions on $Q$. Since this is smaller than the dimension of $\mathcal{P}$, there must exist a nonzero $Q \in \mathcal{P}$ that meets the interpolation conditions
\[
Q(\Ga_i,y_i,y_i^q,y_i^{q^2},\cdots,y_i^{q^{s-1}}) = 0 \quad \text{for} \quad i=1,2,\dots,n \ .
\]
Finding such a $Q$ amounts to solving a homogeneous linear system over $\F_{q^m}$ with $n$ constraints and at most $\mathsf{dim}(\mathcal{P}) \le n+s+2$ unknowns, which can be done in $O(n^3)$ time.
\end{proof}

Lemma \ref{lem:Q-soln-rs} below shows that any polynomial $Q$ given by Lemma \ref{lem:Q-rs} yields an algebraic condition that the message functions $f$ we are interested in list decoding must satisfy.

\begin{defn}[Frobenius action on polynomials]
For a polynomial $f \in \F_{q^m}[X]$ with $f(X) = f_0 + f_1 X + \cdots + f_{k-1} X^{k-1}$, define the polynomial $f^\sigma \in \F_{q^m}[X]$ as $f^\sigma(X) = f_0^q + f_1^q X + \cdots + f_{k-1}^q X^{k-1}$.

For $i \ge 2$, we define $f^{\sigma^i}$ recursively as $(f^{\sigma^{i-1}})^\sigma$.
\end{defn}

The following simple fact is key to our analysis.
\begin{fact}
\label{fact:frob}
If $\Ga \in \F_q$, then $f(\Ga)^{q^j} = (f^{\sigma^j})(\Ga)$ for all $j=1,2,\dots$.
\end{fact}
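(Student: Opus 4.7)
The plan is a direct coefficient-level computation exploiting two ingredients: the fact that the $q^j$-power map is a field homomorphism on $\F_{q^m}$ (since $q^j$ is a power of the characteristic), and the fact that elements of the subfield $\F_q$ are fixed by this map.

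First I would expand $f(\alpha)^{q^j}$ using the homomorphism property. Writing $f(X) = \sum_{i=0}^{k-1} f_i X^i$ with $f_i \in \F_{q^m}$ and $\alpha \in \F_q$, we have
\[
f(\alpha)^{q^j} \;=\; \Bigl(\sum_{i=0}^{k-1} f_i \alpha^i\Bigr)^{q^j} \;=\; \sum_{i=0}^{k-1} f_i^{q^j} \alpha^{i \cdot q^j},
\]
where the second equality uses that $x \mapsto x^{q^j}$ is additive and multiplicative on $\F_{q^m}$ (it is the $j$-fold composition of the Frobenius $x \mapsto x^q$, which is a ring homomorphism in characteristic $p$ for any prime power $q$ of $p$).

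Next I would invoke the subfield hypothesis: since $\alpha \in \F_q$, Fermat's little theorem for $\F_q$ gives $\alpha^q = \alpha$, hence $\alpha^{q^j} = \alpha$ and therefore $\alpha^{i \cdot q^j} = (\alpha^{q^j})^i = \alpha^i$. Substituting back yields
\[
f(\alpha)^{q^j} \;=\; \sum_{i=0}^{k-1} f_i^{q^j} \alpha^i.
\]

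Finally I would match this with the right-hand side of the claimed identity. By a trivial induction on $j$ from the definition $f^\sigma(X) = \sum_i f_i^q X^i$, one obtains $f^{\sigma^j}(X) = \sum_i f_i^{q^j} X^i$, so $f^{\sigma^j}(\alpha) = \sum_i f_i^{q^j} \alpha^i$, which coincides with the expression above. There is no real obstacle here; the only thing to be mindful of is that the Frobenius identity $(a+b)^{q^j} = a^{q^j} + b^{q^j}$ is applied inside $\F_{q^m}$ (where it is valid for any power of the characteristic), whereas the fixing property $\alpha^{q^j} = \alpha$ is used only for $\alpha$ in the smaller field $\F_q$.
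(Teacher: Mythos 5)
Your proof is correct and is exactly the standard computation the paper has in mind: the paper states this as a Fact without proof, and the intended justification is precisely the Frobenius-homomorphism expansion combined with $\alpha^{q^j}=\alpha$ for $\alpha\in\F_q$ and the (inductive) identity $f^{\sigma^j}(X)=\sum_i f_i^{q^j}X^i$. Nothing is missing.
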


\begin{lemma}
\label{lem:Q-soln-rs}
Suppose $Q \in \mathcal{P}$ satisfies the interpolation conditions \eqref{eq:cond-on-Q-rs}. Suppose $f \in \F_{q^m}[X]$ of degree less than $k$ satisfies $f(\Ga_i) \neq y_i$ for at most $e$ values of $i \in \{1,2,\dots,n\}$ with $e \le \frac{s}{s+1} ( n - k)$. Then
$Q(X, f(X), f^\sigma(X), f^{\sigma^2}(X), \cdots, f^{\sigma^{s-1}}(X)) = 0$.
\end{lemma}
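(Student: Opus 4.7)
The plan is to consider the univariate polynomial
\[ R(X) \;:=\; Q(X, f(X), f^{\sigma}(X), f^{\sigma^2}(X), \ldots, f^{\sigma^{s-1}}(X)) \;\in\; \F_{q^m}[X] \]
and show two things: (i) $R$ has low degree, namely $\deg(R) \le D+k-1$, and (ii) $R$ has many zeros, enough to force $R \equiv 0$. The degree bound is immediate from the shape of $Q$ prescribed by \eqref{eq:form-of-Q-rs}: since $A_0$ has degree at most $D+k-1$, and each $A_i$ for $i\ge 1$ has degree at most $D$ while $f^{\sigma^{i-1}}$ has degree less than $k$ (the Frobenius action on the coefficients preserves the degree), every term $A_i(X) f^{\sigma^{i-1}}(X)$ has degree at most $D + k - 1$.

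Next I would show that $R(\Ga_i) = 0$ for every index $i$ at which the received word agrees with the encoding of $f$. Fix such an $i$, so $y_i = f(\Ga_i)$. Since $\Ga_i \in \F_q$, Fact~\ref{fact:frob} gives $y_i^{q^j} = f(\Ga_i)^{q^j} = f^{\sigma^j}(\Ga_i)$ for all $j \ge 0$. Substituting into the interpolation condition \eqref{eq:cond-on-Q-rs} yields
\[ 0 \;=\; Q(\Ga_i, y_i, y_i^q, \ldots, y_i^{q^{s-1}}) \;=\; Q(\Ga_i, f(\Ga_i), f^{\sigma}(\Ga_i), \ldots, f^{\sigma^{s-1}}(\Ga_i)) \;=\; R(\Ga_i). \]
By hypothesis, the number of agreeing indices is at least $n - e$, so $R$ has at least $n-e$ distinct roots in $\F_q$.

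To conclude, I need to verify that $n - e > D + k - 1$, which by the degree bound on $R$ forces $R \equiv 0$. Using $D \le (n-k+1)/(s+1)$ and $e \le s(n-k)/(s+1)$,
\[ D + e \;\le\; \frac{n-k+1}{s+1} + \frac{s(n-k)}{s+1} \;=\; \frac{(s+1)(n-k) + 1}{s+1} \;=\; (n-k) + \frac{1}{s+1} \;<\; n-k+1, \]
so indeed $n - e > D + k - 1$ and therefore $R$ is the zero polynomial, which is exactly the claim. The only subtlety worth checking is the degree bound on $f^{\sigma^{i-1}}$; everything else is the same zero-counting argument that underlies the classical Guruswami--Sudan/Reed--Solomon list decoders, with the Frobenius identity of Fact~\ref{fact:frob} playing the role normally played by the folding automorphism $f(X) \mapsto f(\gamma X)$ in the folded-RS setting. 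I do not anticipate any real obstacle; the parameter choice \eqref{eq:choice-of-D-rs} has been calibrated precisely so that the inequality $D + e < n-k+1$ closes with room to spare.
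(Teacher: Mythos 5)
Your proposal is correct and follows essentially the same argument as the paper: define $\Phi(X) = Q(X,f(X),\dots,f^{\sigma^{s-1}}(X))$, bound its degree by $D+k-1$, use Fact~\ref{fact:frob} to show it vanishes at every agreement point, and verify via the choice of $D$ that the number of zeros exceeds the degree. The arithmetic checks out and there is nothing to add.
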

\begin{proof}
Define the polynomial $\Phi \in \F_{q^m}[X]$ by $\Phi(X) := Q(X, f(X), f^\sigma(X), f^{\sigma^2}(X), \cdots,$ $ f^{\sigma^{s-1}}(X))$. By the construction of $Q$ and the fact that $\deg(f) \le k-1$, we have $\deg(\Phi) \le D+k-1 \le \frac{n-k+1}{s+1} + k - 1 = \frac{n}{s+1} + \frac{s}{s+1} (k-1)$.

Suppose $y_i = f(\Ga_i)$. By Fact \ref{fact:frob}, we have $y_i^q = f(\Ga_i)^q =  (f^\sigma)(\Ga_i)$, and similarly $y_i^{q^j} = (f^{\sigma^j})(\Ga_i)$ for $j=2,3,\dots$. Thus for each $i$ such that $f(\Ga_i)=y_i$, we have
$$ \Phi(\Ga_i) = Q(\Ga_i, f(\Ga_i), f^\sigma(\Ga_i), \cdots, f^{\sigma^{s-1}}(\Ga_i)) = Q(\Ga_i, y_i, y_i^q, \cdots, y_i^{q^{s-1}}) = 0 \ .$$
Thus $\Phi$ has at least $n-e \ge \frac{n}{s+1} + \frac{s}{s+1} k$ zeroes. Since this exceeds the upper bound on the degree of $\Phi$, $\Phi$ must be the zero polynomial.
\end{proof}

\smallskip \noindent {\bf Finding candidate solutions.} The previous two lemmas imply that the polynomials $f$ whose encodings
differ from $(y_1,\cdots,y_n)$ in at most $\frac{s}{s+1}(n-k)$
positions can be found amongst the solutions of the functional
equation $A_0 + A_1 f + A_2 f^\sigma + \cdots + A_s f^{\sigma^{s-1}} =
0$.  We now prove that these solutions form a well-structured affine
space over $\F_q$.

\begin{lemma}
\label{lem:space-of-solns-rs}
For integers $1 \le s \le m$, the set of solutions $f=\sum_{i=0}^{k-1}f_iX^{i}\in\F_{q^m}[X]$ to the equation
\begin{equation}
\label{eq:affine-cond-on-f-rs}
A_0(X)+A_1(X) f(X) +A_2(X) f^{\s}(X)+\cdots+A_s(X) f^{\s^{s-1}}(X)=0
\end{equation}
when at least one of $\{A_0,A_1,\dots,A_s\}$ is nonzero is an affine subspace over $\F_q$ of dimension at most $(s-1)k$. Further, fixing an $\F_q$-basis of $\F_{q^m}$ and viewing each $f_i$ as an element of $\F_q^m$, the solutions are an $(s-1,m,k)$-periodic subspace of $\F_q^{mk}$. A representation of this periodic subspace (in the form \eqref{eqn:proj-period} from Definition \ref{def:periodic-rep}) can be computed in $\mathrm{poly}(k,m,\log q)$ time.
\end{lemma}
\begin{proof}
If $f,g$ are two solutions to \eqref{eq:affine-cond-on-f-rs}, then so is $\alpha f + \beta g$ for any $\alpha,\beta \in \F_q$ with $\alpha + \beta = 1$. So the solutions to \eqref{eq:affine-cond-on-f-rs} form an affine $\F_q$-subspace. We now proceed to analyze the structure of the subspace.

First, by factoring out a common powers of $X$ that divide all of $A_0(X),A_1(X),\dots,A_s(X)$, we can assume that at least one $A_{i^*}(X)$ for some $i^* \in\{0,1,\dots,s\}$ is not divisible by $X$, and has nonzero constant term. Further, if $A_1(X),\dots,A_s(X)$ are all divisible by $X$, then so is $A_0(X)$, so we can take $i^* > 0$.

Let us denote $A_i(X) = a_{i,0} + a_{i,1} X + a_{i,2} X^2 + \cdots$ for $i = 0,1,2,\dots,s$.
%
%
%
For $l=0,1,2,\dots,D$, define the linearized polynomial
\begin{equation}
\label{eq:def-of-Bj}
B_l(X) = a_{1,l}X + a_{2,l} X^q + a_{3,l} X^{q^2} + \cdots + a_{s,l} X^{q^{s-1}} \ .
\end{equation}
We know that $a_{i^*,0} \neq 0$, and therefore $B_0 \neq 0$. This implies that the solutions $\beta \in \F_{q^m}$ to $B_0(\beta) = 0$ is a $\F_q$-subspace, say $W$, of $\F_{q^m}$ of dimension at most $s-1$.

Fix an $i \in \{0,1,\dots, k-1\}$. Expanding the equation \eqref{eq:affine-cond-on-f-rs} and equating the coefficient of $X^i$ to be $0$, we get
\begin{equation}
\label{eq:f_i}
 a_{0,i} + B_i(f_0) + B_{i-1}(f_1) + \cdots + B_1(f_{i-1}) + B_0(f_i) = 0 \ .
\end{equation}
Therefore, for each $i=0,1,\dots,k-1$, $f_i$ must belong to a coset of the subspace $W + \theta_i$ where $\theta_i$ is an affine combination of $f_0,f_1,\dots,f_{i-1}$. It follows that the solutions $(f_0,f_1,\dots,f_{k-1})$ to \ref{eq:affine-cond-on-f-rs} viewed as a vector in $\F_q^{mk}$ (w.r.t any fixed $\F_q$-basis of $\F_{q^m}$) belongs to an
form an $(s-1,m,k)$-periodic subspace. The equations \eqref{eq:f-expansion} give the desired representation  of this periodic subspace.
\end{proof}

Combining Lemmas \ref{lem:Q-soln-rs} and \ref{lem:space-of-solns-rs}, we see that one can find an affine space of dimension $(s-1)k$ that contains the coefficients of all polynomials whose encodings differ from the input $(y_1,\dots,y_n)$ in at most a fraction $\frac{s}{s+1}(1-R)$ of the positions. Note the dimension of the message space of the Reed-Solomon code
$\rs^{(q,m)}[n,k]$ over $\F_q$ is $km$. The above lemma pins down the
candidate polynomials to a space of dimension $(s-1) k$. For $s \ll
m$, this is a lot smaller. In particular, it implies one can list decode in time sub-linear in the code size (the proof follows by taking $s = \lceil 1/\eps\rceil$ and $m > \frac{s}{\gamma}$).
\begin{cor}
For every $R \in (0,1)$, and $\eps, \gamma > 0$, there is a positive integer $m$ such that for all large enough prime powers $q$, the Reed-Solomon code $C=\rs^{(q,m)}[q,Rq]$ can be list decoded from a fraction $(1-R-\eps)$ of errors in $|C|^\gamma$ time, outputting a list of size at most $|C|^\gamma$.
\end{cor}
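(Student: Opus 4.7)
The plan is to instantiate the linear-algebraic decoder of Lemmas~\ref{lem:Q-rs}, \ref{lem:Q-soln-rs}, and \ref{lem:space-of-solns-rs} with integer parameters $s$ and $m$ chosen as constants depending only on $(R,\eps,\gamma)$. First pick $s = \lceil (1-R)/\eps \rceil$, so that $\frac{s}{s+1}(1-R) \ge 1-R-\eps$; with $n=q$ and $k = Rq$, this makes the decoding-radius threshold $\frac{s}{s+1}(n-k)$ of Lemma~\ref{lem:Q-soln-rs} at least $(1-R-\eps)q$. Then pick $m = \max\!\bigl(s,\, \lceil (s-1)/\gamma\rceil\bigr)$; the first term ensures $s \le m$ as required by Lemma~\ref{lem:space-of-solns-rs}, and the second will carry the list-size budget.

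The decoder has three stages. \emph{Interpolation:} invoke Lemma~\ref{lem:Q-rs} to compute a nonzero $Q \in \mathcal{P}$ in $O(q^3)$ operations over $\F_{q^m}$. \emph{Structure of solutions:} by Lemma~\ref{lem:Q-soln-rs}, every message $f$ of degree $<k$ whose encoding disagrees with $\mv{y}$ in at most $(1-R-\eps)q$ positions satisfies $A_0 + A_1 f + A_2 f^\sigma + \cdots + A_s f^{\sigma^{s-1}} = 0$; by Lemma~\ref{lem:space-of-solns-rs}, the coefficient vectors of all such $f$ lie in an explicitly computable $\F_q$-affine subspace $S \subseteq \F_q^{mk}$ of dimension at most $(s-1)k$, and a canonical representation of $S$ is produced in $\mathrm{poly}(q,m,\log q)$ time. \emph{Enumeration:} list every point of $S$ and retain those whose encoding lies within Hamming distance $(1-R-\eps)q$ of $\mv{y}$.

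For the size bound, $|S| \le q^{(s-1)k} = q^{(s-1)Rq}$, and since $m \ge (s-1)/\gamma$ one has $(s-1)Rq \le \gamma Rqm$, so $|S| \le q^{\gamma Rqm} = |C|^\gamma$. For the running time, interpolation, subspace construction, and each per-candidate Hamming check cost $\mathrm{poly}(q)$ (with polynomial degree depending only on the constant $m$); the enumeration therefore costs $|C|^\gamma \cdot \mathrm{poly}(q)$, which is $\le |C|^\gamma$ times a constant (and, absorbing into the exponent, at most $|C|^\gamma$ itself once we, if necessary, start from a marginally smaller target $\gamma' < \gamma$ in the parameter choice) for all sufficiently large prime powers $q$. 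There is no real obstacle: the corollary is essentially an accounting exercise on top of the three preceding lemmas. The only subtlety is the order of quantifiers: $m$ must be fixed as a constant depending only on $(R,\eps,\gamma)$ \emph{before} $q$ is taken large, so that the affine list occupies only a $\gamma$-fraction of the total $\F_q$-dimension $mk$ of the message space, making brute-force enumeration compatible with the $|C|^\gamma$ time budget.
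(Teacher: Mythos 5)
Your proposal is correct and follows essentially the same route as the paper, which proves the corollary by the one-line observation that taking $s = \lceil 1/\eps\rceil$ and $m > s/\gamma$ makes the $(s-1)k$-dimensional affine subspace of Lemma~\ref{lem:space-of-solns-rs} contain at most $q^{(s-1)k} \le q^{\gamma m k} = |C|^\gamma$ candidates, which can then be enumerated. Your slightly different choice $s = \lceil (1-R)/\eps\rceil$ and the remark about absorbing the $\mathrm{poly}(q)$ per-candidate overhead into the exponent are both fine and just make explicit the accounting the paper leaves implicit.
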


Since the dimension of the subspace guaranteed by Lemma
\ref{lem:space-of-solns-rs} grows linearly in $k$, we cannot afford to
list this subspace as the decoder's output for polynomial time
decoding. However, using the periodic structure of the subspace, one
can prune it by using a ``pre-code'' that only allows polynomials with
coefficients in subspace designs or h.s.e sets as we will see in later
sections.

\subsection{Decoding algebraic-geometric codes}
\label{new-subsec:decoding-ag}

We now generalize the Reed-Solomon algorithm from the previous subsection to
algebraic-geometric codes. The description in this section will be for
a general abstract AG code. So we will focus on the algebraic ideas,
and not mention complexity estimates. Later, we will focus
on a specific AG code based on Garcia-Stichtenoth function fields,
which will require a small change to the setup, and where we will also
mention computational aspects. We refer to Subsection \ref{subsec:5.2local} for encoding and will focus on a decoding algorithm.

\subsubsection{AG codes with evaluation points in a subfield}
Let $F/\F_q$ be a function field of genus $\g$. Let $\Pin, P_1,P_2,\dots,P_N$ be $N+1$ distinct $\F_q$-rational places. Let $\Gs\in{\rm Gal}(\Fm/\F_q)$ be the Frobenius automorphism, i.e, $\Ga^{\Gs}=\Ga^q$ for all $\Ga\in\Fm$. Then we can extend $\Gs$ to an automorphism in ${\rm Gal}(F_m/F)$, where $F_m$ is the constant extension $\Fm\cdot F$. Note that $P^{\Gs}=P$ for any place of $F$.

Consider the Goppa geometric code defined by
\begin{equation}
\label{eq:agcode-subfield}
C(m;l):=\{(f(P_1),f(P_2),\dots,f(P_N)):\;f\in\cL_m(l\Pin)\} \ .
\end{equation}

We have the following well-known result on the parameters of the above
algebraic-geometric codes.
\begin{lemma}
\label{lem:para}
The above code $C(m;l)$ is an $\Fm$-linear code over $\Fm$, rate at least $\frac{l-\g+1}{N}$, and minimum distance at least $N  -l$.
\end{lemma}
\subsubsection{Encoding of code using local expansions}
As with the case of folded AG codes, the decoding algorithm will
recover the message function $f$ via the coefficients of its local expansion at some place $P$. Therefore, we will identify the message symbols with local expansion coefficiently of the function $f$ and encode into a subcode of $C(m;l)$.

\begin{defn}[Subfield algebraic-geometric code using local expansion]
	\label{def:subfield-agsubcode}
	The folded algebraic-geometric code  $\widetilde{C}(m;k)$ maps
	\[  \mv{f} = (f_0,f_1,\dots,f_{k-1})\in \F_{q^m}^k \quad \mapsto \quad  
	(\phi_{P}(\mv{f})(P_1), \phi_{P}(\mv{f})(P_2), \dots, \phi_{P}(\mv{f})(P_N))\in \F_{q^m}^N \ , \]
	where $\phi_P(\cdot)$ is the map converting a local expansion into an associated function guaranteed by Claim~\ref{clm:x5.3}.
\end{defn}

\subsubsection{A list decoding algorithm}
We now present a list decoding algorithm for the above codes. The algorithm follows the linear-algebraic list decoding algorithm for RS codes. It is quite similar to that of folded algebraic-geometric codes. Suppose a codeword encoding $f \in \cL_m((k+2\g-1)\Pin)$ is transmitted and received as $\by=(y_1,y_2,\dots,y_N)$.

Given such a received word, we will interpolate a nonzero linear polynomial over $F_m$
\begin{equation}
\label{eq:form-of-Q}
Q(Y_1,Y_2,\dots,Y_s) =
A_0 + A_1 Y_1 + A_2 Y_2 + \cdots + A_s Y_s \
\end{equation}
where $A_i \in \cL_m(D\Pin)$ for $i=1,2,\dots,s$ and $A_0\in \cL_m((D+k+2\g-1)\Pin)$ with the degree parameter $D$ chosen to be
\begin{equation}
\label{eq:choice-of-D}
 D=\left\lfloor \frac {N-k+(s-1)\g+1}{s+1}\right\rfloor .
\end{equation}
If we fix a basis of $\cL_m(D\Pin)$ and extend it to a basis of $\cL_m((D+k+2\g-1)\Pin)$, then
the number of freedoms of $A_0$ is at least $D+k+\g$ and the number of freedoms of $A_i$  is at least $D-\g+1$ for $i\ge 1$. Thus, the total number of freedoms in the polynomial $Q$ equals
\begin{equation}
\label{eq:affine-cond-on-deg}
s(D-\g+1)+D+k+\g=(s+1)(D+1)-(s-1)\g-1+k>N.
\end{equation}
for the above choice (\ref{eq:choice-of-D}) of $D$.
The interpolation requirements on $Q \in F_m[Y_1,\dots,Y_s]$ are
the following:
\begin{equation}
\label{eq:interpolation-cond}
Q(y_i,y_i^{\Gs},\dots,y_i^{\Gs^{s-1}}) =A_0(P_i)+A_1(P_i)y_i+A_2(P_i)y_i^{\Gs}+\cdots+A_s(P_i)y_i^{\Gs^{s-1}}= 0
\end{equation}
{for } $i=1,2,\dots,N$. Thus, we have a total of $N$ equations to satisfy.
Since this number is less than the number of  freedoms in $Q$, we can conclude that a nonzero linear function $Q \in F_m[Y_1,\dots,Y_s]$ of the form
  (\ref{eq:form-of-Q}) satisfying the interpolation conditions
  (\ref{eq:interpolation-cond}) can be found by solving a homogeneous
  linear system over $\Fm$ with at most $N$ constraints and at least $s(D-\g+1)+D+k+\g$
  variables.

The following lemma gives the algebraic condition that the message functions $f \in \cL_m((k+2\g-1)\Pin)$ we are interested in list decoding must satisfy.

\begin{lemma}
\label{lem:Q-is-good}
If $f$ is a function in $\cL_m((k+2\g-1)\Pin)$ whose encoding  agrees
with the received word $\mathbf{y}$ in at least $t$ positions with
$t>{D+k+2\g-1}$,
 then \begin{equation}\label{eq:alg-eqn}Q(f,f^{\s},\dots,f^{\s^{s-1}})=A_0+A_1f+A_2f^{\s}+\cdots+A_sf^{\s^{s-1}}=0.\end{equation}
\end{lemma}
\begin{proof}
The proof proceeds by comparing the number of zeros of the function $Q(f,f^{\s},$ $\dots,f^{\s^{s-1}})=A_0+A_1f+A_2f^{\s}+\cdots+A_sf^{\s^{s-1}}$ with $D+k+2\g-1$. Note that $Q(f,f^{\s},\dots,$ $f^{\s^{s-1}})$ is a function in $\cL_m((D+k+2\g-1)\Pin)$.  If position $i$ of the encoding  of $f$ agrees with $\mathbf{y}$, then
 \begin{eqnarray*}
0&=&A_0(P_i)+A_1(P_i)y_i+A_2(P_i)y_i^{\Gs}+\cdots+A_s(P_i)y_i^{\Gs^{s-1}}\\
&=&A_0(P_i)+A_1(P_i)f(P_i)+A_2(P_i)(f(P_i))^{\s}+\cdots+A_s(P_i)(f(P_i))^{\s^{s-1}}\\
&=&A_0(P_i)+A_1(P_i)f(P_i)+A_2(P_i)f^{\s}(P_i)+\cdots+A_s(P_i)f^{\s^{s-1}}(P_i)\\
&=&(A_0+A_1f+A_2f^{\s}+\cdots+A_sf^{\s^{s-1}})(P_i)\end{eqnarray*}
 i.e., $P_i$ is a zero of $Q(f,f^{\s},\dots,f^{\s^{s-1}})$. Thus, there are  at least $t$ zeros for all the agreeing positions.  Hence, $Q(f,f^{\s},\dots,f^{\s^{s-1}})$ must be the zero function when $t >  D+k+2\g-1$.
\end{proof}

\begin{lemma}
\label{lem:subs-struc}
Let $P$ be a rational place of $F$ with a local parameter $x\in F$ ($P$ may or may not be the same as $\Pin$).
The set of solutions $f\in\cL_m((k+2\g-1)\Pin)$ to the equation
\[A_0+A_1f+A_2f^{\s}+\cdots+A_sf^{\s^{s-1}}=0\]
when at least one $A_i$ is nonzero has size at most $q^{(s-1)(k+2\g-1)}$.  Further, the possible first $k$ coefficients $(f_0,f_1,\dots,$ $f_{k-1})$ of $f$'s local expansion at $P$ belong to an $(s-1,m)$-ultra periodic affine subspace of $\F_q^{mk}$.
\end{lemma}
\begin{proof}
The argument is very similar to Lemma \ref{lem:x5.6}. Define $\nu=k+2\g-1$ if $P=\Pin$ and $0$ otherwise.
Let $u=\min\{\nu_{P}(A_i):\; i=1,2,\dots,s\}$. Then we have  $\nu_{P}(A_0)=\nu_{P}(-\sum_{i=1}^sA_if^{-\Gs^{i-1}})\ge\min\{\nu_{P}(A_if^{-\Gs^{i-1}})):\; i=1,2,\dots,s\}\ge \min\{\nu_{P}(A_i)-\nu:\; i=1,2,\dots,s\}=u-\nu$.
 Each $A_i$ has a local expansion at $P$:
\[A_i=x^u\sum_{j=0}^{\infty}a_{i,j}x^{j}\]
for $i=1,\dots,s$ and $A_0$ has a local expansion $A_0=x^{u-\nu}\sum_{j=0}^{\infty}a_{0,j}x^{j}$.

Assume that at $P$, the function $f$ has a local expansion (\ref{eq:x1}).
Then $f^{\s^{i}}$ has a local expansion at $P$ as follows
\[f^{\s^{i}}=x^{-\nu}\sum_{j=0}^{\infty}f_{j}^{q^i}x^{j}.\]

For $l=0,1,\dots$, define the linearized polynomial
\[ B_l(X):=a_{1,l}X+a_{2,l}X^q+\cdots+a_{s,l}X^{q^{s-1}}\]
From the definition of $u$, one knows that $B_0(X)$
is nonzero. For $d\ge 0$, equating the coefficient of $x^{d+u-\nu}$ in $A_0+A_1f+A_2f^{\s}+\cdots+A_sf^{\s^{s-1}}$ to equal $0$ gives us the condition
\begin{equation}
\label{eq:ultra-per}
a_{0,d} + B_d(f_0) + B_{d-1}(f_1) + \cdots + B_0(f_d) = 0 \ .
\end{equation}
Let $W = \{\Ga\in\Fm:\; B_0(\Ga)=0\}$.
 Then $W$ is an $\F_q$-subspace  of $\Fm$ of dimension at most $s-1$, since $B_0$ is a nonzero linearized polynomial of $q$-degree at most $s-1$. As in Lemma \ref{lem:space-of-solns-rs}, for each fixed $f_0,f_1,\dots,f_{d-1}$, the coefficient $f_d$ must belong to a coset of the subspace $W$. This implies that the coefficients $(f_0,f_1,\dots,f_{k+2\g-1})$ belong to an $(s-1,m,k+2\g-1)$-periodic subspace of $\F_q^{m(k+2\g-1)}$. In particular, there are at most $q^{(s-1)(k+2\g-1)}$ solutions $f \in \cL_m((k+2\g-1)\Pin)$ to \eqref{eq:interpolation-cond}.

The equation \eqref{eq:ultra-per} also shows that each group of $\iota$ successive coefficients $f_{d-\iota+1},$ $f_{d-\iota+2},\cdots,f_d$ belong to cosets of the same underlying $\iota (s-1)$ dimensional subspace of $\F_q^{m\iota}$. This implies that $(f_0,f_1,\dots,f_{k})$ in fact belong to an $(s-1,m)$-ultra periodic subspace.\footnote{This ultra-periodicity was also true for the Reed-Solomon case in Lemma \ref{lem:space-of-solns-rs}, but we did not state it there as we will not make use of this extra property for picking a subcode in the case of Reed-Solomon codes.}
\end{proof}

\noindent {\bf Decoding.} Recall that the first $k$ coefficients of the local expansion of $f \in \cL_m(l P_\infty)$ around $P$ is precisely the message that was encoded in the code $\widetilde{C}(m;k)$ of Definition~\ref{def:subfield-agsubcode}.

Therefore, combining Lemmas \ref{lem:Q-is-good} and \ref{lem:subs-struc}, and recalling the choice of $D$ in \eqref{eq:choice-of-D}, we can conclude the following result about list-decodability of our code construction.
\begin{theorem}
\label{thm:ag-final}
For the code $\widetilde{C}(m;k)$, we can find an $(s-1,m)$-ultra periodic subspace of $\F_q^{mk}$ that includes all messages whose encoding differs from a received word $\mv{y} \in \F_{q^m}^N$ in at most
\[ \frac{s}{s+1} (N - k) - \frac{3s+1}{s+1} \g \]
positions.
\end{theorem}

\section{Instantiating with Hermitian and Garcia-Sticthenoth towers}\label{sec:FGS}
In Sections \ref{sec:hermitian} and \ref{sec:lin-rs}, we discussed list decoding of  folded algebraic-geometric codes and algebraic-geometric  codes with subfield evaluation points. In this section, we instantiate the codes and list decoding algorithms described in Sections \ref{sec:hermitian} and \ref{sec:lin-rs} with two important and explicit towers, i.e.,  the Hermitian and Garcia-Sticthenoth towers.

\subsection{Folded Hermitian codes}
\label{newsubsec:FH}
In this subsection, let us instantiate the list decoding algorithm of folded algebraic geometric codes from general algebraic function fields with the Hermtian tower. We refer to Subsection \ref{subsec:herm-prelim} for detailed background on the Hermitian tower. Let  $r$ be a prime power and let $q=r^2$. We denote by $\F_q$ the finite field with $q$ elements. Let $F_e=\F_q(x_1,x_2,\dots,x_e)$ be  the Hermitian tower defined by \eqref{eq:x2}. Let $\Gg$ be a primitive element of $\F_q$. Consider the automorphism $\Gs\in{\rm Aut}(F_e/\F_q)$ defined by
\[\Gs:\; x_i\mapsto\Gg^{(r+1)^{i-1}}x_i \quad \mbox{for}\ i=1,2,\dots,e.\]
For an integer $m$ with $1\le m\le q-1$, let $\Pin$ and $P_i^{\Gs_j}$ for $i=1,2,\dots, N$ and $j=0,1,\dots m-1$ be the same as defined in Subsection \ref{subsec:herm-prelim}. \begin{defn}[Folded codes from the Hermitian tower]
\label{def:fh-F}
Assume that $m,l,N$ are positive integers satisfying $1\le m\le q-1$ and $l/m \le N \le r^{e-1}\left\lfloor\frac{q-1}m\right\rfloor$.
The  folded  code from $F_e$ with parameters $N,l,q,e,m$, denoted by ${\FH}_e(N,l,q,m)$,  encodes a message function $f \in \cL(l\Pin)$ a folded codeword given in \eqref{eq:x7}.
\end{defn}
When $e=1$, the  folded  code ${\FH}_1(N,l,q,m)$ is in fact a folded Reed-Solomon code introduced in \cite{GR-FRS}.
\begin{lemma}
The above code ${\FH}_e(N,l,q,m)$ is an $\F_q$-linear code over alphabet size $q^{m}$, rate at least $\frac{l-g_e+1}{Nm}$, and minimum distance at least $N  - \frac{l}{m}$.
\end{lemma}
\begin{proof}
It is clear that the map \eqref{eq:x7} is an $\F_q$-linear map. The dimension over $\F_q$ of the message space $\cL(l\Pin)$ is at least $l-g_e+1$ by the Riemann-Roch theorem, which gives the claimed lower bound on rate.
For the distance property, observe that if the $i$-th column is zero, then $f$ has $m$ zeros.  This implies that the encoding of a nonzero function $f$ can have at most $l/m$ zero columns since $f \in \cL(l\Pin)$.
\end{proof}

Let $P_0$ be the common zero of $x_1,x_2,\dots,x_e$.
For our decoding, we will actually recover the message $f \in \cL(l \Pin)$ in terms of the coefficients of its power series expansion around $P_0$
\begin{equation}\label{eq:x20} f= f_0 + f_1 x + f_2 x^2 + \cdots  \end{equation}
where $x := x_1$ is the local parameter at $P_0$ (which means that $x_1$ has exactly one zero at $P_0$, i.e., $\nu_{P_0}(x_1)=1$). 

With this in mind, we now define the encoding into the above folded Hermitian code using the map $\phi_{P_0}$ from Claim~\ref{clm:x5.3}. 

\begin{defn}[Folded Hermitian code using local expansion]
\label{def:folded-herm-F}
The folded Hermitian code  $\widetilde{\FH}_e(N,k,q,m)$ maps 
\[ \mv{f} = (f_0,f_1,\dots,f_{k-1})\in \F_q^k \quad \text{to} \quad  \FH_e(N,k+2g_e-1,q,m)(\phi_{P_0}(\mv{f})) \in (\F_q^m)^N \ . \]
\end{defn}

Given the local expansions of a basis of $\cL(l P_\infty)$ at $P_0$, computing the map $\phi_{P_0}$ to convert from local expansion to some representative function in $\cL(l P_\infty)$ can be done in polynomial time by simply solving a system of linear equations. We now turn to the task of computing the local expansion at $P_0$ of a basis for $\cl(l\Pin)$.

\begin{lemma}
\label{lem:ps-basis-F}
For any $n$, one can compute the first $n$ terms of the local expansion of the basis elements \eqref{eq:x20} at $P_0$ using ${\rm poly}(n)$ operations over $\F_q$.
\end{lemma}
\begin{proof}
By the structure of the basis functions in \eqref{eq:x3}, it is sufficient to find an algorithm of efficiently finding local expansions of $x_i$ at $P_0$ for every $i=1,2,\dots,e$. We can inductively find the local expansions of $x_i$ at $P_0$ as follows.

For $i=1$, $x_1$ is the local parameter $x$ of $P_0$, so $x$ is the local expansion of $x_1$ at $P_0$.

Now assume that we know the local expansion of $x_i=\sum_{j=1}^{\infty}c_{i,j}x^j$ at $P_0$ for some $c_{i,j}\in\F_q$. Then
 we have
 \[\sum_{j=1}^{\infty}c_{i+1,j}^rx^{jr}+\sum_{j=1}^{\infty}c_{i+1,j}x^j=x_{i+1}^r+x_{i+1}=x_i^{r+1}=\left(\sum_{j=1}^{\infty}c_{i,j}^rx^{jr}\right)\left(\sum_{j=1}^{\infty}c_{i,j}x^j\right).\]
 Note that $r$ is a power of the characteristic and hence $r$ can be pushed into infinite sums. By comparing the coefficients of $x^j$ in the above identity, we can easily solve $c_{i+1,j}$'s from $c_{i, j}$'s. More specifically, the coefficient of $x^j$ at the left of the identity is
 \[\left\{\begin{array}{ll}c_{i+1,j}&\mbox{if $r\not| j$}\\
c_{i+1,j}+c_{i+1,j/r}^r&\mbox{if $r| j$.}
 \end{array}
 \right.
\]
Thus, all $c_{i+1,j}$'s can be easily solved recursively.
\end{proof}

By instantiating Theorem \ref{thm:folded-ag-general-ld} with our code $\widetilde{\FH}_e(N,k,q,m)$, we obtain the following result.
\begin{theorem}
  \label{thm:folded-Hermitian-ld}
 One can find a representation of an $(s,q-1)$-periodic subspace\footnote{In fact, this subspace will be $(s,q-1)$-ultra periodic.}  of $\F_q^k$ containing all candidate messages $\mv{f} = (f_0,f_1,\dots,f_{k-1})$ in polynomial time, when the fraction of errors $\tau = 1-t/N$ in its encoding by $\widetilde{\FH}_e(N,k,q,m)$ satisfies
\begin{equation}
\label{eq:herm-error-frac-F}
 \tau \le \frac{s}{s+1}- \frac{s}{s+1} \frac{k}{N(m-s+1)} - \frac{3m}{m-s+1} \frac{\g_e}{mN} \ .
 \end{equation}
\end{theorem}

\subsection{Folded codes from the Garcia-Stichtenoth tower}
\label{subsec:FGS}
Compared with the Hermitian tower of function fields, the
Garcia-Stichtenoth tower of function fields yields folded codes with
better parameters due to the fact that the Garcia-Stichtenoth
tower is an optimal one in the sense that the ratio of number of
rational places against genus achieves the maximal possible value. The
construction of folded codes from the Garcia-Stichtenoth tower is
almost identical to the one from the Hermitian tower except for one
major difference: the redefined code from the Garcia-Stichtenoth tower
is constructed in terms of the local expansion at point $\Pin$, while
in the Hermitian case local expansion at $P_0$ is considered. For
convenience of the reader, we give a parallel description of folded
codes from the Garcia-Stichtenoth tower, while only sketching the
identical parts.  We refer to Subsection \ref{subsec:gs-prelim} for background on the Garcia-Stichtenoth tower.

Let  $r$ be a prime power and let $q=r^2$. We denote by $\F_q$ the finite field with $q$ elements. Let $K_e=\F_q(x_1,x_2,\dots,x_e)$ be  the Garcia-Stichtenoth tower defined by \eqref{eq:x5}.  Let $\Gg$ be a primitive element of $\F_q$ and consider the automorphism $\Gs\in{\rm Aut}(K_e/\F_q)$ defined by
\[\Gs:\; x_i\mapsto\Gg^{r+1}x_i \quad \mbox{for}\ i=1,2,\dots,e.\]
For an integer $m$ with $1\le m\le q-1$, let $\Pin$ and $P_i^{\Gs_j}$ for $i=1,2,\dots, N$ and $j=0,1,\dots m-1$ be the same as defined in Subsection \ref{subsec:gs-prelim}.

The folded codes from the Garcia-Stichtenoth tower are defined similarly to the Hermitian case.

\begin{defn}[Folded codes from the Garcia-Stichtenoth tower]
\label{def:fgs}
Assume that $m,k,N$ are positive integers satisfying $1\le m\le r-1$ and $l/m < N \le r^{e}\left\lfloor\frac{r-1}m\right\rfloor$.
The  folded  code from $K_e$ with parameters $N,l,q,e,m$, denoted by ${\FGS}_e(N,l,q,m)$,   encodes a message function $f \in \cL(l\Pin)$ a folded codeword given in \eqref{eq:x7}.
\end{defn}

Then we have a similar result on parameters of ${\FGS}_e(N,l,q,m)$.
\begin{lemma}
The above code ${\FGS}_e(N,l,q,m)$ is an $\F_q$-linear code over alphabet size $q^{m}$, rate at least $\frac{l-\g_e+1}{Nm}$, and minimum distance at least $N  - \frac{l}{m}$.
\end{lemma}

Similar to the the Hermitian case, we need to redefine the code in terms of local expansion at a point. In the Hermitian case, we use coefficients of its power series expansion around $P_0$ which has a simple local parameter $x_1$. However, for the Garcia-Stichtenoth tower we do not have such a nice point $P_0$. Fortunately, we can use point $\Pin$ to achieve our mission, i.e., $\Pin$ has a simple local parameter $\frac 1{x_e}$.

\begin{defn}[Folded Garcia-Stichtenoth code using local expansion]
	\label{def:fgs-code}
	The folded Garcia-Stichtenoth code (FGS code for short) $\widetilde{\FGS}_e(N,k,q,m)$ maps 
	\[ \mv{f} = (f_0,f_1,\dots,f_{k-1}) \in \F_q^k \quad \text{to} \quad \FGS_e(N,k+2\g_e-1,q,m)(\phi_{\Pin}(\mv{f})) \in (\F_q^m)^N \ . \]
\end{defn}
The rate of the above code equals $k/(Nm)$ and its distance is at least $N-(k+2\g_e-1)/m$.

As in the Hermitian case, we now turn to the task of computing the local expansion around $\Pin$ of a basis for $\cl(l\Pin)$, which then suffices to compute the map $\phi_{\Pin}$ efficiently. The local expansion of $f \in \cL(l \Pin)$ 
around $\Pin$ is of the form
\begin{equation}\label{eq:x21} f= T^{-l}(f_0 + f_1 T + f_2 T^2 + \cdots ) \end{equation}
where $T := \frac 1{x_e}$ is the local parameter at $\Pin$ (the function $x_e$ has exactly one pole at $\Pin$).

\begin{lemma}
\label{lem:ps-basis-GS}
For any $n$, one can compute the first $n$ terms of the local expansion \eqref{eq:x21} of a basis of $\cl(l \Pin)$ at $\Pin$ using ${\rm poly}(n)$ operations over $\F_q$.
\end{lemma}
\begin{proof}
First let $h$ be a nonzero function in $\F_q(x_1,x_2,\dots,x_e)$ with $\nu_{\Pin}(h)=v\in\ZZ$. Assume that the local expansion
  $h=T^{v}\sum_{j=0}^{\infty}a_jT^j$ is known.
To find the local expansion $\frac 1h=T^{-v}\sum_{j=0}^{\infty}c_jT^j.$ Consider the identity
  \[1= \left(\sum_{j=0}^{\infty}c_jT^j\right)\left(\sum_{j=0}^{\infty}a_jT^j\right).\]
  Then by comparing the coefficients of $T^i$ in the above identity, one has $c_0=a_0^{-1}$ and $c_i=-a_0^{-1}(c_{i-1}a_1+\cdots+c_0a_i)$ can be easily computed recursively for all $i\ge 1$.

Thus, by the structure of the basis functions in (\ref{eq:x6}), it is sufficient to find an algorithm efficiently finding local expansions of $x_i$ at $\Pin$ for every $i=1,2,\dots,e$. We can inductively find the local expansions of $x_i$ at $\Pin$ as follows. We note that $\nu_{\Pin}(x_i)=-r^{e-i}$ for $i=1,2,\dots,e$.

For $i=e$, $x_e$ has the local expansion $\frac 1T$ at $\Pin$.

Now assume that we know the local expansion of $x_i$. Then we can easily compute the local expansion of $x_i^r+x_i$ and hence the local expansion of $1/(x_i^r+x_i)$. Let us assume that $1/(x_i^r+x_i)$ has local expansion
$1/(x_i^r+x_i)=T^{r^{e-i+1}}\sum_{j=0}^{\infty}\Ga_{j}T^j$ at $\Pin$ for some $\Ga_{i}\in\F_q$.
Assume that $1/x_{i-1}$ has the local expansion
$1/x_{i-1}=T^{r^{e-i+1}}\sum_{j=0}^{\infty}\Gb_{j}T^j$. To find $\Gb_j$, we consider the identity
 \[ T^{r^{e-i+1}}\sum_{j=0}^{\infty}\Gb_{j}T^j+ T^{r^{e-i+2}}\sum_{j=0}^{\infty}\Gb_{j}^rT^{rj} =\frac1{x_{i-1}}+\left(\frac1{x_{i-1}}\right)^r=\frac1{x_i^r+x_i}=T^{r^{e-i+1}}\sum_{j=0}^{\infty}\Ga_{j}T^j.\]
 By comparing the coefficients of $T^{j+r^{e-i+1}}$ in the above identity, we have that $\Gb_0=\Ga_0$ and $\Gb_j$ can be easily computed recursively by the following formula for all $i\ge 1$.
 \[\Gb_j=\left\{\begin{array}{ll}\Ga_j&\mbox{if $r\not| j$}\\
\Ga_j-\Gb_{j/r-1}^r&\mbox{if $r| j$.}
 \end{array}
 \right.
\]
Therefore, the local expansion of $x_{i-1}$ at $\Pin$ can be easily computed.
\end{proof}

Similar to the Hermitian case, by instantiating Theorem \ref{thm:folded-ag-general-ld} with our code $\widetilde{\FGS}_e(N,k,q,m)$, we obtain the following result.
\begin{theorem}
  \label{thm:folded-GS-ld}
One can find a representation of the $(s,r-1)$-ultra periodic subspace
containing all candidate messages $(f_0,f_1,\dots,f_{k-1})$ in polynomial time, when the fraction of errors $\tau = 1-t/N$ in its encoding by $\widetilde{\FGS}_e(N,k,q,m)$ satisfies
\begin{equation}
\label{eq:fgs-error-frac}
 \tau \le  \frac{s}{s+1} \left( 1 - \frac{k}{N(m-s+1)} \right) - \frac{3m}{m-s+1} \frac{r^e}{mN} \ .
 \end{equation}
\end{theorem}

\subsection{Subfield evaluation codes from the Garcia-Stichtenoth tower}
\label{subsec:decoding-gs}

Let $r$ be a prime power and let $q=r^2$. For $e\ge 2$, let $K_e$ be the function field $\F_q(x_1,x_2,\dots,x_{e})$  given by Garcia-Stichtenoth tower
\eqref{eq:x5}, with genus $\g_e \le r^e$.

Put $F=K_e$ and $F_m=\Fm\cdot K_e$.  Let  $P_1,P_2,\dots,P_N$ be the rational points of $F$ besides the place $\Pin$; we have $N \ge r^e(r-1)$.
 Let $k$ be the desired dimension of the code (where $1 \le k < N - 2\g_e$) and let $l = k + 2 \g_e -1$.
We will now instantiate the code $C(m;l)$ defined in \eqref{eq:agcode-subfield} with the Garcia-Stichtenoth function field $F_m$ and $P_1,P_2,\dots,P_N$ as evaluation points.  (We hide the dependence on $e$ and $N$ in the specification of the code $C(m;l)$ as implicit.)

Let us call the resulting code $C_{\text{GS}}(m;l)$. As in the previous sections, we will encode into subcode $\widetilde{C}_{\text{GS}}(m;k)$ using as message vector the first $k$ coefficients of the local expansion around $\Pin$.  The Garcia-Stichtenoth code with subfield evaluation using local expansion, $\widetilde{C}_{\text{GS}}(m;k)$ is defined from $C_{\text{GS}}(m;l)$ as in Definition~\ref{def:subfield-agsubcode}.
	

By virtue of Theorem~\ref{thm:ag-final}, we can now conclude the following:
\begin{cor}
\label{cor:gs-decoding}
The code $C_{\mathrm{GS}}(m;k+2g_e-1 )$ can be list decoded from up to
$\frac{s}{s+1} (N - k) - \frac{3s+1}{s+1} g_e$ errors, pinning down the messages to an $(s-1,m)$-ultra periodic subspace of $\F_q^{mk}$.
\end{cor}

We conclude the section by incorporating the trade-off between $g_e$ and $N$, and stating the rate vs. list decoding radius trade-off offered by these codes, in a form convenient for improvements to the list size using subspace evasive sets and subspace designs (see Section \ref{sec:sd}). The claim about the number of possible solution subspaces follows since the subspace is determined by $A_0,A_1,\dots,A_s$, and for our choice of parameter $D$, there are at most $q^{O(mN)}$ choices of those.
\begin{theorem}
\label{thm:gs-decoding}
Let $q$ be the even power of a prime. Let $1 \le s \le m$ be integers, and let $R\in (0,1)$. Then for infinitely many $N$ (all integers of the form $q^{e/2} (\sqrt{q}-1)$), there is a deterministic polynomial time construction of an $\F_{q^m}$-linear code $\mathrm{GS}^{(q,m)}[N,k]$ of block length $N$ and dimension $k = R\cdot N$ that can be list decoded in $\mathrm{poly}(N,m,\log q)$ time from \[ \frac{s}{s+1} (N-k) -  \frac{3N}{\sqrt{q}-1} \]
 errors, pinning down the messages to one of $q^{O(mN)}$ possible $(s-1,m)$-ultra periodic $\F_q$-affine subspaces of $\F_q^{mk}$.
\end{theorem}

\section{Hierarchical subspace-evasive sets}
\label{sec:hse}

Let us first recall the notion of ``ordinary" subspace-evasive sets from \cite{Gur-ccc11}.

\begin{defn}
A subset $S \subset \F_q^k$ is said to be $(d,\ell)$-subspace-evasive if for all $d$-dimensional affine subspaces $H$ of $\F_q^k$, we have $|S \cap H| \le \ell$.
\end{defn}

We next define the notion of evasiveness w.r.t a collection of subspaces instead of all subspaces of a particular dimension.
\newcommand{\calF}{\mathcal{F}}
\begin{defn}
Let $\calF$ be a family of (affine) subspaces of $\F_q^k$, each of dimension at most $d$.
A subset $S \subset \F_q^k$ is said to be $(\calF,d,\ell)$-evasive if for all $H \in \calF$, we have $|S \cap H| \le \ell$.
\end{defn}

The key to pruning the list to a small size is the notion of a {\em hierarchical subspace-evasive set}, which is defined as a subset of $\F_q^k$ with the property that some of its prefixes are subspace-evasive with respect to $(s,\period,b)$-periodic subspaces.
We will show how the special subspace-evasive sets help towards pruning the list in our list decoding context in Section~\ref{sec:hse-pruning}.
\begin{defn}
Let $\calF$ be a family of $(s,\period,b)$-periodic subspaces  of $\F_q^k$ with $k=b\period$.
A subset $S \subset \F_q^k$ is said to be $(\calF,s,\period,b, L)$-h.s.e (for hierarchically subspace evasive for block size $\period$) if for every affine subspace $H \in \calF$,  the following bound holds for $j=1,2,\dots,b$:
\[ |\proj_{j\period}(S) \cap \proj_{j\period}(H)| \le L  \ . \]
\end{defn}

\begin{rmk}
\label{rmk:periodic-hse}
For h.s.e based pruning, a property weaker than $(s,\period)$-periodicity of $H$ suffices. Namely, it is enough if for each prefix $a \in \F_q^{j\Delta}$, the extensions of $a$ in $H$ form an affine space of dimension $s$ (it is not necessary that this be a coset of the \emph{same} subspace of $\F_q^\period$ for every $j$). However, we stick with the periodicity assumption since it is available to us in the subspaces output by the list decoder, and is also necessary for the subspace design based pruning of the next section.
\end{rmk}


\subsection{Random sets are 
subspace evasive}

Our goal is to give a randomized construction of large h.s.e sets that works with high probability, with the further properties that one can index into elements of this set efficiently (necessary for efficient encoding), and one can check membership in the set efficiently (which is important for efficient decoding).

An easy probabilistic argument, see \cite{Gur-ccc11}, shows that a random subset of $\F_q^k$ of size about $q^{(1-\zeta)k}$ is $(d,O(d/\zeta))$-subspace evasive with high probability. As a warmup, let us work out the similar proof for the case when we have only to avoid a not too large family $\calF$ of all possible $d$-dimensional affine subspaces. The advantage is that the guarantee on the intersection size is now $O(1/\zeta)$ and independent of the dimension $d$ of the subspaces one is trying to evade.
%
\begin{lemma}
\label{lem:prob-subspace-avoiding}
Let $\zeta \in (0,1)$ and $k$ be a large enough positive integer.
Let $\calF$ be a family of affine subspaces of $\F_q^k$, each of dimension at most $d \le \zeta k/2$, with $|\calF| \le q^{c k}$ for some positive constant $c$.

Let $S$ be a random subset of $\F_q^k$ chosen by including each $x
  \in \F_q^k$ in $S$ with probability $q^{-\zeta k}$. Then with probability at least $1-q^{-ck}$,
  $S$ satisfies both the following conditions: (i) $|S| \ge
  q^{(1-2\zeta) k}$, and (ii) $S$ is
  $(\calF,d,4c/\zeta)$-evasive.
\end{lemma}   
\begin{proof}
  The first part follows by noting that the expected size of $S$ equals $q^{(1-\zeta)k}$ and a standard Chernoff bound calculation. For the second part, fix an affine subspace $H \subseteq \calF$ of dimension at most $d$, and a subset $T \subseteq H$ of size $t$, for some parameter $t$ to be specified shortly.
  The probability that $S \supseteq T$ equals $q^{-\zeta k t}$.
  By a union bound over the at most $q^{c k}$ choices for the affine subspace $H \in \calF$, and the at most $q^{d t}$ choices of $t$-element subsets $T$ of $H$, we get that the probability that $S$ is not  $(\calF,d,t)$-evasive is at most
$q^{c k +d t}  \cdot q^{-\zeta k t} \le q^{c k } q^{-\zeta k t/2}$ since $d \le \zeta k/2$.
Choosing $t = \lceil 4c/\zeta \rceil$, this quantity is bounded from above by $q^{-ck}$.
\end{proof}

  \subsection{Pseudorandom construction of large h.s.e subsets}
We next turn to the pseudorandom construction of large h.s.e subsets. Suppose, for some fixed subset $\calF$ of $(s,\period,b)$-periodic subspaces of $\F_q^k$ with $k=b\period$, we are interested  in an $(\calF,s,\period,b,L)$-h.s.e subset of $\F_q^k$ of size $\approx q^{(1-\zeta)k}$ for a constant $\zeta$, $1/\period < \zeta < 1/3$.
(Bwlow, we will ignore floors and ceilings in the description to avoid notational clutter; those are easy to accommodate and do not affect any of the claims.)

Denote $\period' =(1-\zeta)\period$, $b' = (1-\zeta) b$, and $k' = b' \period = (1-\zeta) k$.

The random part of the construction will consist of mutually independent, random univariate polynomials $P_1,P_2,\dots,P_{b'}$ and $Q$, where $P_j \in \F_{q^{j\period'}}[T]$ for $1\le j \le b'$  and $Q \in \F_{q^{k'}}[T]$ are random polynomials of degree $\degree$.\footnote{We will assume that representations of the necessary extension fields $\F_q^{i\period'}$ are all available. For this purpose, we only need irreducible polynomials over $\F_q$ of appropriate degrees, which can be constructed by picking random polynomials and checking them for irreducibility. Our final construction is anyway randomized, so the randomized nature of this step does not affect the results.}
The degree parameter will be chosen to be $\degree = \Theta(k)$.\footnote{The degree of $Q$ can in fact be just $O(1/\zeta)$, but for uniformity we fix the degree of all polynomials to be the same.}

The key fact we will use about random polynomials is the following, which follows by virtue of the $\lambda$-wise independence of the values of a random degree $\lambda$ polynomial.
\begin{fact}
\label{fact:poly-indep}
Let $P \in \mathbb{K}[T]$ be a polynomial of degree $\degree$ whose coefficients are picked uniformly and independently at random from the field $\mathbb{K}$.
For a fixed subset $T \subseteq \mathbb{K}$ with $|T| \le \degree$, the values $\{P(\alpha)\}_{\alpha \in T}$ are independent random values in $\mathbb{K}$.
\end{fact}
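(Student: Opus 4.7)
The plan is to deduce the fact from the standard property that a uniformly random polynomial of degree at most $\degree$ gives rise to $(\degree+1)$-wise independent evaluations, which in turn follows from the invertibility of Vandermonde matrices. I would proceed as follows.

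First, identify the random polynomial $P \in \mathbb{K}[T]$ with its coefficient vector $\mv{c} = (c_0,c_1,\dots,c_\degree) \in \mathbb{K}^{\degree+1}$, which by hypothesis is uniformly distributed. Fix an arbitrary set $T = \{\alpha_1,\alpha_2,\dots,\alpha_t\} \subseteq \mathbb{K}$ of size $t \le \degree$ (note $t \le \degree < \degree+1$). The evaluation map
\[
  \Phi_T : \mathbb{K}^{\degree+1} \to \mathbb{K}^t, \qquad \mv{c} \mapsto \bigl(P(\alpha_1),P(\alpha_2),\dots,P(\alpha_t)\bigr),
\]
is $\mathbb{K}$-linear, and its matrix is the $t \times (\degree+1)$ Vandermonde-type matrix $V$ whose $(i,j)$ entry is $\alpha_i^{j}$ for $0 \le j \le \degree$.

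Next, argue that $V$ has full row rank $t$. Indeed, the $t$ distinct elements $\alpha_1,\dots,\alpha_t$ give rise to a standard $t \times t$ Vandermonde minor (taking the first $t$ columns) whose determinant is $\prod_{i<j}(\alpha_j-\alpha_i) \neq 0$. Therefore $\Phi_T$ is a surjective $\mathbb{K}$-linear map. Consequently every fiber $\Phi_T^{-1}(\mv{y})$ has the same cardinality $|\mathbb{K}|^{\degree+1-t}$, so when $\mv{c}$ is uniform on $\mathbb{K}^{\degree+1}$, the pushforward $\Phi_T(\mv{c})$ is uniform on $\mathbb{K}^t$. Uniformity on the product space is equivalent to the coordinates being mutually independent and each uniform, which is exactly the claim.

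There is no serious obstacle here: the entire content is the non-vanishing of Vandermonde determinants, a standard fact. The only small care needed is to verify that the hypothesis $|T| \le \degree$ is sufficient (it is, since any bound $|T| \le \degree+1$ works, and we are using a strictly weaker inequality). One could even have stated the fact with $|T| \le \degree+1$, but the version given is all that will be needed in the applications that follow.
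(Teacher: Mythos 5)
Your proof is correct and is exactly the standard argument the paper has in mind: the paper states this as a Fact without proof, attributing it to the well-known $\degree$-wise (indeed $(\degree+1)$-wise) independence of evaluations of a random degree-$\degree$ polynomial, which is precisely the Vandermonde full-rank/uniform-fiber argument you spell out.
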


We remark that this property of low-degree polynomials was also the basis of the pseudorandom construction of subspace evasive sets in \cite{GW-tit13}. However, since we require the h.s.e property, and need to exploit the periodicity of the subspaces we are trying to evade (which can have large dimension), the construction here is more complicated, and needs to use several polynomials $P_j$'s evaluated in a nested fashion, and one further polynomial $Q$ to further bring down the list size to a constant (this final use of $Q$ is similar in spirit to the construction in \cite{GW-tit13}). We remark that the construction presented here is a bit simpler and cleaner than the one in the conference version~\cite{GX-stoc12}, and comes with efficient encoding automatically by construction. In contrast, the construction in \cite{GX-stoc12} required some additional work in order to allow for efficient encoding.


In what follows we assume that, for $j=1,2,\dots,b'$, some fixed bases of the fields $\F_{q^{j \period'}}$ have been chosen, giving us some canonical $\F_q$-linear injective maps
\[ \rho_j  : \F_q^{j\period'} \to \F_{q^{j \period'}} \ . \]
Also, for $j=1,2,\dots,b'$, let
\[ \xi_j : \F_{q^{j\period'}}\to \F_q^{\zeta \period} \]
be some arbitrary $\F_q$-linear surjective map (thus $\xi_j$ just
outputs the first $\zeta \period$ coordinates of the representation of
elements of $\F_{q^{j\period'}}$ as vectors in $\F_q^{j\period'}$
w.r.t some fixed basis). Finally, let $\rho : \F_q^{k'} \to
\F_{q^{k'}}$ be some fixed $\F_q$-linear injective map, and $\xi :
\F_{q^{k'}} \to \F_q^{\zeta k}$ be an arbitrary $\F_q$-linear
surjective map.

We are now ready to describe our construction of h.s.e set based on the random polynomials $P_1,P_2,\dots,P_{b'},Q$.

\begin{defn}[h.s.e set construction]
\label{def:hse-set}
Given the polynomials $P_j \in \F_{q^{j\period'}}[T] $ for $i=1,2,\dots,b'$ and $Q \in \F_{q^{k'}}[T]$, define the subset  $\Gamma(P_1,P_2,\dots,P_b; Q)$ by
 \begin{align*}
  \Bigl\{ (y_1,z_1,y_2,z_2,\dots , y_{b'},z_{b'}; w) \in \F_q^k ~~ & \vline ~~ \text{ for } j=1,2,\dots,b': y_j \in \F_q^{\period'}, \\
  &  ~z_j = \xi_j(P_j(\rho_j(y_1\circ y_2 \circ \cdots \circ y_j)))  \in \F_q^{\zeta\period}; \text{ and} \\
  & ~ w =  \xi(Q(\rho(y_1,z_1,\dots, y_{b'},z_{b'}))) \in\F_q^{\zeta k}\Bigr\} \ .
  \end{align*}
  \end{defn}

By construction, once suitable representations of the extension fields are available by pre-processing and the choice of $P_1,\dots,P_{b'},Q$ is made, we can efficiently compute a bijective encoding map
$\hse: \F_q^{(1-\zeta)^2 k} \to \Gamma(P_1,P_2,\dots,P_b;Q)$. Indeed, we can view the input $\mv{y} \in \F_q^{b'\period'}$ as $(y_1,y_2,\dots,y_{b'})$ with $y_j \in \F_q^{\period'}$ and then compute the $z_j$'s and $w$ efficiently using $\mathrm{poly}(k)$ operations over $\F_q$ (recall that the degree of the polynomials is $\degree = \Theta(k)$).

We now move on to the main claim about the h.s.e property of our construction.

\begin{theorem}
\label{thm:hse-proof}
Let $c$ be a positive constant. Let $\zeta \in (0,1/3)$ and $s$ be a positive integer satisfying $s < \zeta \period/10$. Let $\calF$ be a subset of at most $q^{c k}$ $(s,\period, b)$-periodic subspaces of $\F_q^k$ for $k=b\period$ that is much bigger than $1/\zeta$. Suppose that the parameters satisfy the condition $q^{\zeta \period} \ge ( 2 q^2 c k)^{10/9}$. Then with probability $1-q^{-\Omega(k)}$ over the choice of random polynomials $\{P_i\}_{1 \le i \le b}$ and $Q$ each of degree $\lambda = \lceil ck \rceil$, the set $\Gamma(P_1,P_2,\dots,P_b;Q)$ from Definition~\ref{def:hse-set} is
\[ (\calF,s,\period,b,L) \mbox{-h.s.e ~~{\em and}~~ }  (\calF,sb, \ell) \text{-evasive} \]
 for $L = \lceil 2c k \rceil$ and $\ell = \lceil 4 c/\zeta \rceil$ (note that (i) $L\gg\ell$ as $k\gg 1/\zeta$; and (ii) $Q$ trims down the intersection size from $L$ to $\ell$).
\end{theorem}
\begin{proof}
Note that the first $k'=(1-\zeta)k$ symbols of vectors in $\Gamma(P_1,\dots,P_{b'};Q)$ only depend on the $P_j$'s.
We will first prove that with high probability over the choice of the $P_j$'s the following holds (call such a choice of $P_j$'s as \emph{good}):
\begin{quote}
  For every $H \in \calF$, $|\proj_{k'}(H) \cap \proj_{k'}(\Gamma)| < L$, where we denote $\Gamma$ as shorthand for $\Gamma(P_1,\dots,P_{b'};Q)$.
\end{quote}
Then, conditioned on a good choice of $P_j$'s, we will prove that with high probability over the choice of the random polynomial $Q$, $|H \cap \Gamma| < \ell$.
Together, these steps will imply that the set $\Gamma(P_1,P_2,\dots,P_{b'};Q)$ is $(\calF,sb, \ell) \text{-evasive}$. (Note that every subspace in $\calF$ has dimension at most $s b$ by Claim \ref{clm:periodic-subspace}.) We will return to the $(\calF,s,\period,b,L) \text{-h.s.e}$ property at the end of the proof.

Let us first establish the second step. Fix a good choice of $P_1,\dots,P_{b'}$, and suppose we pick $Q$ randomly. Fix a subspace $H \in \calF$.  Since $|\proj_{k'}(H) \cap \proj_{k'}(\Gamma)| < L$ (recall that $\proj_{k'}(\Gamma)$ only depends on the $P_j$'s and thus is already determined), the number of elements of $H$ that could possibly belong to $\Gamma$ (after the choice of $Q$) is at most $L \cdot q^{s (b-b')} = L q^{\zeta s b}$; indeed for each prefix belonging to $\proj_{k'}(\Gamma) \cap \proj_{k'}(H)$, there are most $q^{s (b-b')}$ extensions that can fall in $H$ since $H$ is $(s,\period,b)$-periodic. Further, the probability over the choice of $Q$ that any such fixed extension belongs to $\Gamma$ is at most $q^{-\zeta k}$, and any $\ell$ of these events are independent. (Note that for a fixed prefix, there can be at most one extension that falls in $\Gamma$, so for $\ell$ different strings to fall in $\Gamma$, their prefixes must be distinct and are mapped to independent locations by the random polynomial $\Gamma$.)
Therefore, the probability over the choice of $Q$ that  $|H \cap \Gamma| \ge \ell$ is at most $(L q^{\zeta s b})^\ell q^{-\zeta k \ell}$. By a union bound over all $H \in \calF$, we conclude that $|H \cap \Gamma| < \ell$ for every $H \in \calF$ simultaneously, except with probability at most
\[ q^{ck} L^\ell q^{\zeta (s -\period) b \ell} \le q^{ck} (ck)^\ell q^{-\zeta \period b \ell/2} \le q^{ck} q^{-\zeta k \ell/4}
\]
where in the first inequality we used $s \le \period/2$ and in the next one  $ck \le q^{\zeta k/4}$ both of which hold comfortably. For $\ell \ge 4 c/\zeta$, the above probability upper bound is at most $q^{-ck}$.

We now turn to the first step, on the $P_j$'s being good with high probability. Fix some $H \in \calF$; we will prove by induction on $j$ that
\begin{equation}
\label{eq:proj-ind}
|\proj_{j \period}(H) \cap \proj_{j \period}(\Gamma)| < L
\end{equation}
w.h.p over the choice of $P_1,P_2,\dots,P_j$, for $1 \le j \le b'$ (note that $\proj_{j \period}(\Gamma)$ only depends on $P_1,\dots,P_j$, so this event is well defined). For the base case $j=1$, $|\proj_{\period}(H)| \le q^s$ as $H$ is $(s,\period,b)$-periodic, and the probability that some $L$ of these $q^s$ elements belong to $\proj_{\period}(\Gamma)$ is at most $q^{sL}$ times the probability that $L$ distinct elements in $\F_{q^{\period'}}$ are mapped to specific values in $\F_q^{\zeta \period}$ by $\xi_1\circ P_1$,
which is at most $\bigl( q^{-\zeta \period}\bigr)^L$. So the overall probability that $|\proj_{\period}(H) \cap \proj_{\period}(\Gamma)| \ge L$ is at most $q^{(s -\zeta \period)L}$.

Now let $j \ge 2$ and assume $|\proj_{(j-1) \period}(H) \cap \proj_{(j-1) \period}(\Gamma)| < L$. By the $(s,\period,b)$-periodicity of $H$, for each of the (less than $L$) prefixes in $\proj_{(j-1) \period}(H) \cap \proj_{(j-1) \period}(\Gamma)$, there are at most $q^s$ extensions that fall in $\proj_{j \period}(H)$. Similarly to the argument used for second step above, the probability that some $L$ of these belong to $\proj_{j \period}(\Gamma)$ is at most $(L q^s)^L \cdot q^{-\zeta \period L}$. Thus, the probability that $|\proj_{j \period}(H) \cap \proj_{j \period}(\Gamma)| \ge L$ is at most $\bigl(L  \cdot q^{(s-\zeta \period)}\bigr)^L$.

Combining these arguments, we conclude that the probability over the choice of the $P_j$'s that $|\proj_{b' \period}(H) \cap \proj_{b' \period}(\Gamma)| \ge L$ is at most
\[ b' (L \cdot q^{(s -\zeta \period)})^L \le (2ck q^{-0.9\zeta \period})^L \le q^{-2L} \]
where the last step used the assumption that $q^{\zeta \period} \ge (2 q^2 c k)^{10/9}$.

Finally, since there are at most $q^{ck}$ subspaces $H \in \calF$, by a union bound we have that for all $H \in \calF$ simultaneously,  $|\proj_{k'}(H) \cap \proj_{k'}(\Gamma)| < L$ with probability at least $1- q^{ck} q^{-L} \ge 1-q^{-ck}$ over the choice of $P_1,\dots,P_{b'}$.

To finish the proof, we need to verify the $(\calF,s,\period,b,L) \text{-h.s.e}$ property. That is, we need to prove that w.h.p,
$|\proj_{j \period}(H) \cap \proj_{j \period}(\Gamma)| \le L$ for every $H \in \calF$ and $j=1,2,\dots,b$.  By \eqref{eq:proj-ind}, this holds for $j=1,2,\dots,b'$. By construction, the last $\zeta k$ symbols of any vector in $\Gamma$ is a function of the first $(1-\zeta)k = b' \period$ symbols, so
$|\proj_{j \period}(H) \cap \proj_{j \period}(\Gamma)| \le L$ also holds for $b' < j \le b$.
\end{proof}

\subsection{Efficient computation of intersection with h.s.e. subsets}
\label{sec:hse-pruning}
The key aspect which makes h.s.e subsets useful in our context to prune the affine space of candidate messages, and indeed motivated the exact specifics of the definition and aspects of its construction, is the following claim which shows that intersection of a $(s,\period,b)$-periodic subspace with our h.s.e set can be found efficiently.
\begin{lemma}(h.s.e.-intersection)
\label{lem:hse-intersection}
There is an algorithm running in time $\mathrm{poly}(k, q^{\zeta \period})$ that provides the following guarantee. Given as input the polynomials $P_1,\dots,P_{b'}$ and $Q$ underlying the construction of an
$(\calF,s,\period,b,L) \mbox{-h.s.e}$ and $(\calF,sb, \ell) \text{-evasive}$ set $\Gamma = \Gamma(P_1,\dots,P_{b'};Q)$ and an $(s,\period,b)$-periodic subspace $H \subseteq \F_q^k$ belonging to $\calF$, the algorithm computes the at most $\ell$ elements of $H \cap \Gamma$.
\end{lemma}
\begin{proof}
The proof essentially follows from the observations made in the proof of Theorem~\ref{thm:hse-proof}. First note  that $|H \cap \ \Gamma(P_1,\dots,P_{b'};Q)| \le \ell$ just follows from the $(\calF,sb, \ell) \text{-evasiveness}$ of $\Gamma$. To compute $H \cap \Gamma$, the algorithm iteratively computes the intersections $\proj_{j\period}(H) \cap \proj_{j\period}(\Gamma)$ for $1 \le j \le b'$. As $\Gamma$ is
$(\calF,s,\period,b,L) \mbox{-h.s.e}$, this intersection has size at most $L$. To compute $\proj_{j\period}(H) \cap \proj_{j\period}(\Gamma)$, the algorithm runs over the at most $q^s$ possible extensions of each element of $\proj_{(j-1)\period}(H) \cap \proj_{(j-1)\period}(\Gamma)$ that can belong to $\proj_{j \period}(H)$ (due to the $(s,\period,b)$-periodicity of $H$), and checks which ones also belong to $\proj_{j\period}(\Gamma)$. The complexity amounts to $q^{O(s)}$ evaluations of degree $O(k)$ polynomials, and thus takes $q^{O(\zeta \period)} \mathrm{poly}(k)$ time. To compute $H \cap \Gamma$ from $\proj_{b'\period}(H) \cap \proj_{b' \period}(\Gamma)$, we recall the earlier observation that the construction of $\Gamma$ implies that there is a unique extension of an element in $\proj_{b' \period}(\Gamma)$ that belongs to $\Gamma$.
\end{proof}

We conclude this section by recording in convenient form all necessary properties of our h.s.e set construction, which follow from Theorem~\ref{thm:hse-proof} and Lemma~\ref{lem:hse-intersection}.  (We can remove the restriction that $k$ is a multiple of $\period$ by constructing a subspace in $\F_q^{k^{\#}}$ for $k^{\#} = \period \lceil \frac{k}{\period}\rceil$ and dropping the last $k^{\#}-k$ coordinates, so we remove that restriction in the final statement below on h.s.e sets.)

\begin{theorem}
	\label{thm:hse-set-final} Let $c$ be a constant.
	Let $\zeta \in (0,1)$, and $\period, s,k$ be positive integers satisfying $s < \zeta \period/10$ and $k \le q^{\zeta \period/2}$.
	Let $\calF$ be a family of at most $q^{ck}$ $(s,\period)$-periodic subspaces of $\F_q^k$. Then there is $\mathrm{poly}(k,\log q)$ time randomized construction of an
	 injective map $\hse : \F_q^{(1-\zeta)^2 k} \to \F_q^k$ such that:
	\begin{enumerate}
		\item Given $\mv{x} \in \F_q^{(1-\zeta)^2 k}$, $\hse(\mv{x})$ can be computed using $\mathrm{poly}(k)$ operations over $\F_q$.
		\item With probability at least $1-q^{-\Omega(k)}$ over the construction of $\hse$, the following holds: for every $H \in \calF$, the set $\{ \mv{x} \in \F_q^{(1-\zeta)^2 k} \mid \hse(\mv{x}) \in H\}$ has size at most $O(c/\zeta)$, and further can be computed in $\mathrm{poly}(k,q^{\zeta \period})$ time.
	\end{enumerate}
\end{theorem}

\section{Subspace designs}
\label{sec:sd}
The linear-algebraic list decoder discussed in the previous sections
pins down the coefficients of the message to a periodic subspace.
 We already saw, in Section~\ref{sec:hse}, an approach using h.s.e. sets to prune the periodic subspace to a small list. In this section, we will develop an alternate approach based a special collection of
subspaces, which we call a {\em subspace design}, for pruning the periodic subspaces.
Further, we will extend the construction to a ``cascaded" variant that enables more effective pruning of ultra-periodic subspaces. The advantage of using subspace designs is they can be explicitly constructed, a feature which is (currently) lacking for h.s.e sets.

We begin with the definition of the central object of study in this section, subspace designs, introduced in the conference version~\cite{GX-stoc13}.\footnote{While we were not aware of it when we coined this term to refer to our subspace collections, subspace designs were used to denote the $q$-analogs of combinatorial designs~\cite{BKW-subs-design}. We apologize for unknowlingly repeating this nomenclature in our (very different) context.}
\begin{defn}[Subspace design]
Let $\dm$ be a positive integer, and $q$ a prime power.
For positive integers $r < \dm$ and $d$, an $(r,d)$-\sd\ in $\F_q^\dm$ is a collection $H$ of subspaces of $\F_q^\dm$ such that for every $r$-dimensional subspace $W \subset \F_q^\dm$, we have
\begin{equation}
\label{eq:subs-design-cond}
\sum_{H \in \mathcal{H}} \dim(W \cap H) \le d \ .
\end{equation}
The cardinality of a \sd\ $\mathcal{H}$ is the number of subspaces in its collection, i.e., $|\mathcal{H}|$. If all subspaces in $\mathcal{H}$ have the same dimension $t$, then we refer to $t$ as the {\em dimension} of the \sd\ $\mathcal{H}$.
\end{defn}

\noindent Note that the condition \eqref{eq:subs-design-cond} in particular implies for every $r$-dimensional subspace $W$,  at most $d$ of the subspaces in an $(r,d)$-subspace design non-trivially intersect it. This weaker property was subsequently called a ``weak subspace design" in \cite{GK-combinatorica} which gave explicit constructions of subspace designs following our original definition in \cite{GX-stoc13}. For our list decoding application, the stronger property \eqref{eq:subs-design-cond} is required. Note though that the weak subspace design property implies the stronger \eqref{eq:subs-design-cond} with the r.h.s upper bound $d$ replaced by $dr$.

\subsection{Subspace designs to prune periodic subspaces}
The usefulness of \sds\ defined above, in the context of pruning periodic subspaces, is captured by the following key lemma.

\begin{lemma}[Periodic subspaces intersected with a subspace design]
\label{lem:sd-pruning}
Suppose $H_1,H_2,$ $\dots,H_b$ are subspaces in an $(r,d)$-\sd\ in $\F_q^\dm$, and $T$ is an $(r,\dm,b)$-periodic affine subspace of $\F_q^{\dm b}$ with recurring subspace $S\subseteq \F_q^{\dm }$. Then the set
\[ \mathcal{T} = \{ (\mv{f_1},\mv{f_2},\dots,\mv{f_b}) \in T \mid \mv{f_j} \in H_j \text{ for } j=1,2,\dots,b \} \]
is an affine subspace of $\F_q^{\dm b}$ of dimension at most $d$. Also, the
underlying subspace of $\mathcal{T}$ is contained in $\mathcal{S} \eqdef S^b \cap (H_1 \times H_2 \times \cdots \times H_b)$.
\end{lemma}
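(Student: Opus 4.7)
The assertion about the underlying subspace is immediate: $\mathcal{T}$ is the intersection of the affine subspace $T$ with the linear subspace $H_1 \times H_2 \times \cdots \times H_b$, so it is itself an affine subspace (or empty), and its underlying subspace is $S \cap (H_1 \times \cdots \times H_b)$ by the usual rules for intersections of affine spaces. Thus the only real content is the dimension bound $\dim(\mathcal{T}) \le d$.

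The plan is to bound $|\mathcal{T}|$ by $q^d$ by walking through the blocks $j=1,2,\dots,b$ one at a time and invoking the periodic structure at each step. Let $W \subseteq \F_q^\dm$ be the dimension-$\le r$ subspace guaranteed by Definition~\ref{def:periodic-subspaces}. Fix a prefix $(\mv{f_1},\dots,\mv{f_{j-1}})$ appearing in $\proj_{(j-1)\dm}(\mathcal{T})$. By the periodicity of $T$, the set of $\mv{f_j} \in \F_q^\dm$ for which $(\mv{f_1},\dots,\mv{f_{j-1}},\mv{f_j})$ extends to an element of $\proj_{j\dm}(T)$ is contained in some coset $W+\mv{v}$ where $\mv{v}$ depends only on the prefix. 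The additional requirement $\mv{f_j}\in H_j$ therefore confines $\mv{f_j}$ to $(W+\mv{v}) \cap H_j$, which is either empty or a coset of $W \cap H_j$ and hence has size at most $q^{\dim(W \cap H_j)}$.

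Multiplying these bounds across $j=1,2,\dots,b$ gives
\[ |\mathcal{T}| \;\le\; \prod_{j=1}^{b} q^{\dim(W \cap H_j)} \;=\; q^{\sum_{j=1}^b \dim(W \cap H_j)} \;\le\; q^d, \]
where the last inequality is the defining property of an $(r,d)$-\sd\ applied to the $\le r$-dimensional subspace $W$. Since $\mathcal{T}$ is an affine subspace over $\F_q$, its dimension equals $\log_q |\mathcal{T}|$, so $\dim(\mathcal{T}) \le d$, completing the proof.

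I do not foresee a real obstacle here: the only subtle point is to note that the ``contained in a coset'' language of Definition~\ref{def:periodic-subspaces} is enough, since $(W+\mv{v})\cap H_j$ still has cardinality at most $q^{\dim(W\cap H_j)}$ regardless of whether it is empty. Everything else is bookkeeping and a single application of the subspace-design inequality to the common ``periodicity'' subspace $W$.
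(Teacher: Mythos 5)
Your proof is correct and follows essentially the same route as the paper's: the paper phrases the block-by-block counting as a tree whose level-$j$ nodes have degree at most $q^{\dim(W\cap H_{j+1})}$, but the content — each prefix admits at most $q^{\dim(W\cap H_j)}$ extensions inside $(W+\mv{v})\cap H_j$, giving $|\mathcal{T}|\le q^{\sum_j \dim(W\cap H_j)}\le q^d$ — is identical. No gaps.
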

\begin{proof}
It is clear that $\mathcal{T}$ is an affine subspace, since its
elements are restricted by the set of linear constraints defining $T$
and the $H_j$'s. Also, the difference of two elements in $\mathcal{T}$
is contained in both the subspaces $S^b$ and $(H_1 \times H_2 \times
\cdots \times H_b)$, which implies that the underlying subspace of
$\mathcal{T}$ is contained in $\mathcal{S}$.

We will prove the bound on dimension by proving that $|\mathcal{T}|
\le q^d$. To prove this, we will imagine the elements of $\mathcal{T}$
as the leaves of a tree of depth $b$, with the nodes at level $j$
representing the possible projections of $\mathcal{T}$ onto the first
$j$ blocks. The root of this tree has as children the elements of the
affine space $\proj_{[1,\dm]}(T) \cap H_1$.  Let $W$ be the subspace
of $\F_q^\dm$ of dimension at most $r$ associated with the periodic
subspace $T$ (in the sense of Definition
\ref{def:periodic-subspaces}). Note that the underlying subspace of
the affine space $\proj_{[1,\dm]}(T) \cap H_1$ is contained in the
subspace $W \cap H_1$.

Continuing this argument, the children of an element $\mv{a} \in \F_q^{j \dm}$ at level $j$ will be $\mv{a}$ followed by the  possible extensions of $\mv{a}$ to the $(j+1)$'th block, given by
\[ \{ \proj_{[j\period+1,(j+1)\period]}(\mv{x}) \mid \mv{x} \in T \mbox{ and } \proj_{j\period}(\mv{x}) =\mv{a} \} \cap H_{j+1} \ .  \]
The periodic property of $T$ and the fact that $H_{j+1}$ is a subspace
implies that the possible extensions of $\mv{a}$ are given by a coset of
a subspace of $W \cap H_{j+1}$.  Thus the nodes at level $j$ have
degree at most $q^{\dim(W \cap H_{j+1})}$ for $j=0,1,\dots,b-1$. Since
the $H_j$'s belong to an $(r,d)$-\sd\, we have $\sum_{j=1}^b \dim(W
\cap H_j) \le d$. Therefore, the tree has at most $q^d$ leaves, which
is also an upper bound on $|\mathcal{T}|$.
\end{proof}

\subsection{Existence and probabilistic construction of subspace designs}
We now turn to the construction of \sds\ of large size and dimension. We first analyze the performance of a random collection of subspaces.

\begin{lemma}
\label{lem:sd-random}
Let $\eta > 0$ and $q$ be a prime power. Let $r,\dm$ be integers $\dm
\ge 8/\eta$ and $r \le \eta \dm/2$. Consider a collection
$\mathcal{H}$ of subspaces of $\F_q^\dm$ obtained by picking,
independently at random, $q^{\eta \dm/8}$ subspaces of $\F_q^\dm$ of
dimension $(1-\eta)\dm$ each. Then, with probability at least
$1-q^{-\dm r}$, $\mathcal{H}$ is an $(r,8r/\eta)$-\sd.
\end{lemma}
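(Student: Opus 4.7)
My plan is to fix an arbitrary $r$-dimensional subspace $W \subset \F_q^\dm$, establish a sharp tail bound on $T_W \eqdef \sum_{H \in \mathcal{H}} \dim(W \cap H)$ when the $H$'s are chosen independently at random, and then take a union bound over the at most $q^{r\dm}$ choices of $W$. The constant $8/\eta$ on the cardinality and the constant $8r/\eta$ on the sum are chosen so that this union bound still leaves room to spare $q^{-\Lambda r}$.

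The first technical ingredient is a single-subspace estimate: for a fixed $j$-dimensional subspace $U \subseteq W$, and $H$ a uniformly random $(1-\eta)\dm$-dimensional subspace of $\F_q^\dm$, I would compute
\[ \Pr[U \subseteq H] \;=\; \binom{\dm - j}{\,(1-\eta)\dm - j\,}_q \bigg/ \binom{\dm}{(1-\eta)\dm}_q \;=\; \prod_{i=0}^{j-1} \frac{q^{(1-\eta)\dm - i} - 1}{q^{\dm - i} - 1} \;\le\; q^{-j \eta \dm} \ . \]
Since the number of $j$-dimensional subspaces of $W$ is at most $\binom{r}{j}_q \le 4 q^{j(r-j)}$, a union bound yields
\[ \Pr[\dim(W \cap H) \ge j] \;\le\; 4 q^{j(r - j - \eta \dm)} \;\le\; 4 q^{-j \eta \dm / 2} \]
using the hypothesis $r \le \eta \dm / 2$ (which gives $r - j - \eta \dm \le -\eta\dm/2$ for each $j \ge 1$).

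For the second step, I would bound $\Pr[T_W \ge d]$ with $d = 8r/\eta$ by enumerating the possible "dimension profiles" $(j_1, j_2, \dots, j_{|\mathcal{H}|})$ with $j_i \ge 0$ and $\sum_i j_i = d$ (assuming $d$ is an integer, else take $\lceil d\rceil$). For each such profile, independence of the $H_i$'s gives
\[ \Pr[\dim(W \cap H_i) \ge j_i \;\; \forall i] \;=\; \prod_i \Pr[\dim(W \cap H_i) \ge j_i] \;\le\; 4^d \cdot q^{-d \eta \dm / 2}, \]
where the factor $4^d$ absorbs the per-coordinate constants (the exponent is $d$ rather than $|\mathcal{H}|$ because at most $d$ coordinates have $j_i \ge 1$). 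The number of profiles is at most $\binom{|\mathcal{H}| + d - 1}{d} \le (2|\mathcal{H}|)^d = (2 q^{\eta\dm/8})^d$. Multiplying, $\Pr[T_W \ge d] \le (8)^d\, q^{d\eta\dm/8 - d\eta\dm/2} = 8^d\, q^{-3 d \eta \dm/ 8} \le q^{-2 r \dm - r \dm}$ once we substitute $d = 8r/\eta$ and use $\dm \ge 8/\eta$ to swallow the $8^d$ factor.

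The union bound over the at most $q^{r(\dm - r)} \le q^{r\dm}$ choices of $W$ then gives failure probability at most $q^{r\dm} \cdot q^{-3 r\dm} \le q^{-r\dm}$, establishing the lemma. The main delicacy, and the only step that really requires the hypotheses $\dm \ge 8/\eta$ and $r \le \eta\dm/2$, is making sure the exponential decay rate $q^{-j\eta\dm/2}$ per unit of intersection dimension is fast enough to beat both (i) the $|\mathcal{H}|^d$ enumeration of profiles and (ii) the $q^{r\dm}$ enumeration of $W$'s, with room to spare. Everything else is bookkeeping of constants from Gaussian binomials.
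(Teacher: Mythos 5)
Your proof is correct and takes essentially the same route as the paper's: fix an $r$-dimensional $W$, enumerate the intersection-dimension profiles $(a_1,\dots,a_M)$ summing to $8r/\eta$, bound each profile's probability using independence together with the per-subspace containment estimate $\Pr[\dim(W\cap H)\ge j]\lesssim q^{j(r-\eta\dm)}\le q^{-j\eta\dm/2}$, and finish with a union bound over profiles and over the at most $q^{r\dm}$ choices of $W$. The only differences are cosmetic bookkeeping (you track the Gaussian-binomial constants explicitly, and your residual constant-factor slack at the end, e.g.\ needing $8^d\le q^{r\dm}$, is of the same harmless kind as the paper's own step $(M+\ell)^\ell\le M^{2\ell}$).
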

\begin{proof}
Let $\ell = 8 r/\eta$, and let $M=q^{\eta \dm/8}$ denote the number of randomly chosen subspaces.\footnote{For simplicity, we ignore the floor and ceil signs in defining integers; these can be easily incorporated.} Let $H_1,H_2,\dots,H_M$ be the subspaces in the collection $\mathcal{H}$.
Fix a subspace $W$ of $\F_q^\dm$ of dimension $r$. Fix a tuple of non-negative integers $(a_1,a_2,\dots,a_M)$ summing up to $\ell$. For each $j \in \{1,2,\dots,M\}$, the probability that $\dim(W \cap H_j) \ge a_j$ is at most $q^{r a_j} q^{-\eta \dm a_j}$. Since the choice of the different $H_j$'s are independent, the probability that $\dim(W \cap H_j) \ge a_j$ for every $j$ is at most $q^{(r-\eta \dm) \ell} \le q^{-\eta \dm \ell/2}$ (the last step uses $r \le \eta \dm/2$).

A union bound over the at most $q^{\dm r}$ subspaces $W \subset \F_q^\dm$ of dimension $r$, and the at most ${{\ell + M} \choose \ell}\le (M+\ell)^\ell \le M^{2\ell}$ choices of the tuples $(a_1,a_2,\dots,a_M)$, we get the probability that $\mathcal{H}$ is {\em not} an $(r,\ell)$-\sd\ is at most
\[ q^{\dm r} \cdot q^{-\eta \dm \ell/2} \cdot (q^{\eta \dm/8})^{2\ell} = q^{\dm r} \cdot q^{-\eta \dm \ell /4} \le q^{-\dm r} \]
where the last step uses $\ell \ge 8r/\eta$.
\end{proof}

Note that given a collection $\mathcal{H}$ of subspaces, one can
deterministically check if it is an $(r,d)$-\sd\ in $\F_q^\dm$ in
$q^{O(\dm r)} |\mathcal{H}|$ time by doing a brute-force check of all
$r$-dimensional subspaces $W$ of $\F_q^{\dm}$, and for each computing
$\sum_{H \in \mathcal{H}} \dim(W \cap H)$ using $|\mathcal{H}|
\dm^{O(1)}$ operations over $\F_q$.  Thus the above lemma
gives a $q^{O(\dm r)}$ time {\em Las Vegas} construction of an $(r,d)$-\sd\ with many
subspaces each of large dimension $(1-\eta)m$.

\begin{lemma}
\label{lem:sd-derandomized}
For parameters $\eta,r,\dm$ as in Lemma \ref{lem:sd-random}, for any
$b \le q^{\eta \dm/8}$, one can compute an $(r,8r/\eta)$-\sd\ in
$\F_q^\dm$ of dimension $(1-\eta)\dm$ and cardinality $b$ in $q^{O(\dm r)}$ {\em Las Vegas} time.
\end{lemma}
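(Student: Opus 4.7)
The plan is to combine the probabilistic guarantee of Lemma~\ref{lem:sd-random} with the deterministic verification procedure flagged in the paragraph immediately preceding this lemma, yielding a standard ``sample-and-check'' Las Vegas construction.

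First, I would sample a collection $\mathcal{H} = \{H_1, \ldots, H_b\}$ of $b$ subspaces of $\F_q^\dm$ independently, where each $H_i$ is a uniformly random $(1-\eta)\dm$-dimensional subspace (e.g., realized as the row-span of a uniformly random full-rank $(1-\eta)\dm \times \dm$ matrix over $\F_q$, with trivial rejection-resampling on any rank deficiency). This sampling step costs only $\mathrm{poly}(\dm,b)$ operations over $\F_q$.

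Second, I would verify that $\mathcal{H}$ is an $(r, 8r/\eta)$-\sd\ by the brute-force check noted before the lemma: enumerate all $r$-dimensional subspaces $W \subseteq \F_q^\dm$ (there are at most $q^{O(\dm r)}$ such subspaces), and for each one compute $\sum_{j=1}^{b} \dim(W \cap H_j)$ in $b \cdot \dm^{O(1)}$ field operations, accepting if and only if this sum never exceeds $8r/\eta$. The total verification cost is $q^{O(\dm r)} \cdot b \cdot \dm^{O(1)} = q^{O(\dm r)}$, since $b \le q^{\eta \dm/8}$ contributes at most $O(\dm r)$ to the exponent (using $r \ge 1$ and treating $\eta$ as a positive constant, as in the regime of interest).

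Third, I would bound the rejection probability by replaying the union bound in the proof of Lemma~\ref{lem:sd-random} with $M$ replaced by $b$. Setting $\ell = 8r/\eta$, the probability that $\mathcal{H}$ fails to be an $(r,\ell)$-\sd\ is at most
\[
q^{\dm r} \cdot q^{-\eta \dm \ell /2} \cdot b^{2\ell} \;\le\; q^{\dm r} \cdot q^{-\eta \dm \ell /2} \cdot q^{\eta \dm \ell /4} \;=\; q^{\,\dm r - \eta \dm \ell /4} \;\le\; q^{-\dm r},
\]
using $b \le q^{\eta \dm /8}$ and $\ell \ge 8r/\eta$. Hence each sample-and-verify iteration succeeds with probability at least $1 - q^{-\dm r}$, the expected number of iterations is $O(1)$, and the total expected Las Vegas running time is $q^{O(\dm r)}$. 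There is no serious obstacle here: the probabilistic bound and the verifier are both already in place, and the only content of the lemma is observing that (i) decreasing $M$ to any $b \le q^{\eta \dm/8}$ only shrinks the union-bound factor $M^{2\ell}$, so the analysis of Lemma~\ref{lem:sd-random} goes through verbatim, and (ii) the $q^{O(\dm r)}$-time verifier converts this existence statement into a Las Vegas construction.
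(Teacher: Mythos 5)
Your proposal is correct and follows essentially the same route as the paper, which leaves this lemma's proof implicit: the remark immediately preceding it already supplies the $q^{O(\dm r)}|\mathcal{H}|$-time brute-force verifier, and combining it with the failure probability bound of Lemma~\ref{lem:sd-random} (which only improves when the number of sampled subspaces is reduced to $b \le q^{\eta\dm/8}$) gives the sample-and-check Las Vegas construction exactly as you describe.
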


As noted in the conference version \cite{GX-stoc13} of this paper, the  construction can be derandomized
using the method of
conditional expectations to successively find good subspaces $H_i$ to add to the subspace design. However, as each step involves searching over all $(1-\eta)\dm$-dimensional subspaces of $\F_q^{\dm}$, the construction time would be $q^{O(\dm^2)}$ even for constructing subspace designs with few subspaces. For our application to reducing the list size for long algebraic-geometric codes (either folded or with rational points in a subfield), we will need subspace designs for ambient dimension $\dm$ growing at least logarithmically in the code length. The  $q^{O(\dm^2)}$ complexity will thus lead to a quasi-polynomial code construction time, as claimed in the conference version~\cite{GX-stoc13}. In fact, even the Las Vegas construction time of $q^{O(\dm r)}$ will be super-polynomial for the parameters used in the construction.

\subsection{Explicit subspace design constructions}

The question of explicit (polynomial time) constructions of subspace designs naturally arose following \cite{GX-stoc13} and was addressed in the follow-up work by Guruswami and Kopparty~\cite{GK-combinatorica}, who proved the following.

\begin{theorem}[Explicit subspace designs~\cite{GK-combinatorica}]
\label{thm:sd-explicit}
For every $\eta > 0$, integers $r,\dm$ with $r \le \eta \dm/4$, and prime powers $q$ satisfying $q^{\eta \dm/(2r)} > 2r/\eta$, for any $b \le q^{\eta \dm/(4r)}$, there exists an explicit $(r, r^2/\eta)$-\sd\ of cardinality $b$ and dimension $(1-\eta) \dm$, that can be constructed deterministically in time $\mathrm{poly}(b,q)$ time. In the case when $q > \dm$, one can explicitly construct an $(r,2r/\eta)$-\sd\ with the same parameters.
\end{theorem}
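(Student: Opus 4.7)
The plan is to construct each $H_i$ as the kernel of a high-multiplicity evaluation map on the space of low-degree univariate polynomials, and to bound $\sum_i \dim(W \cap H_i)$ via a Wronskian-style argument. Identify $\F_q^\dm$ with the space $\F_q[X]_{<\dm}$ of polynomials of degree less than $\dm$. In the easier regime $q > \dm$, choose $b$ distinct elements $\alpha_1,\ldots,\alpha_b \in \F_q$ (possible since $b \le q^{\eta\dm/(4r)} \le q$ under the hypothesis), and define
\[
H_i \;=\; \bigl\{ f \in \F_q[X]_{<\dm} \,:\, f \text{ has a zero of multiplicity } \geq \eta \dm \text{ at } \alpha_i \bigr\} ,
\]
so that each $H_i$ has dimension $(1-\eta)\dm$. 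Listing these kernels explicitly takes $\mathrm{poly}(b,q)$ time.

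To verify the design property, fix any $r$-dimensional subspace $W \subseteq \F_q[X]_{<\dm}$ and pick a basis $g_1,\ldots,g_r$. Form the Wronskian $\mathcal{W}(X) = \det\bigl(D^{(i-1)} g_j(X)\bigr)_{1 \le i,j \le r}$, using \emph{Hasse} derivatives so that linear independence of the $g_j$ forces $\mathcal{W}\not\equiv 0$ even in positive characteristic. Then $\deg \mathcal{W} \le r\dm$. If $d_i := \dim(W \cap H_i)$, then $d_i$ linearly independent elements of $W$ share a zero of order $\ge \eta\dm$ at $\alpha_i$; extending them to a basis of $W$ and expanding $\mathcal{W}$ shows that $\mathcal{W}$ vanishes at $\alpha_i$ to order at least $d_i \eta \dm - \binom{d_i}{2} \ge d_i \eta\dm/2$, assuming $d_i \le \eta\dm$. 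Summing the multiplicities at the distinct $\alpha_i$ and using $\deg \mathcal{W} \le r\dm$ yields $\sum_i d_i \le 2r/\eta$, which establishes the $(r,2r/\eta)$-sd guarantee for $q>\dm$.

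When $q \le \dm$, there are not enough $\F_q$-rational evaluation points, so one replaces individual field elements by closed points (places) of degree $s$ in $\F_q[X]$, i.e., monic irreducible polynomials $\pi_i(X)$ of degree $s$. Define $H_i$ as the $\F_q$-subspace of polynomials in $\F_q[X]_{<\dm}$ divisible by $\pi_i^{\lceil \eta \dm/s\rceil}$. Each such $H_i$ still has $\F_q$-codimension roughly $\eta \dm$, and the hypothesis $q^{\eta\dm/(2r)} > 2r/\eta$ guarantees enough irreducibles of the required degree $s = O(r)$ to accommodate $b \le q^{\eta\dm/(4r)}$ of them. The Wronskian argument is applied over the splitting field $\F_{q^s}$, and then converted back to an $\F_q$-statement; since a single $\F_{q^s}$-intersection of dimension $d$ corresponds to an $\F_q$-intersection of dimension at most $sd$, and one picks $s$ proportional to $r$, the bound degrades from $2r/\eta$ to $r^2/\eta$ as claimed.

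The main obstacle is the Wronskian step: in positive characteristic ordinary derivatives can annihilate nonzero polynomials, so one must work with Hasse derivatives and verify both (i) that $\mathcal{W}$ remains identically nonzero whenever $g_1,\ldots,g_r$ are linearly independent, and (ii) that the Leibniz-type multiplicity estimate $d_i \eta\dm - \binom{d_i}{2}$ survives unchanged. The second technical point is, in the $q \le \dm$ regime, to organize the bookkeeping when passing between $\F_{q^s}$-linear multiplicities at a degree-$s$ place and the corresponding $\F_q$-linear dimensions of $W \cap H_i$, so that the extra factor of $s$ incurred is only a single power of $r$, yielding the claimed $(r,r^2/\eta)$-design rather than something worse.
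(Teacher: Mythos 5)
This theorem is not proved in the paper at all --- it is imported verbatim from Guruswami--Kopparty \cite{GK-combinatorica} --- so your proposal can only be compared against that paper's argument. Your plan is in the right family of ideas (subspaces of $\F_q[X]_{<\dm}$ cut out by vanishing conditions, plus a Wronskian-type determinant whose degree controls $\sum_i \dim(W\cap H_i)$), but it has a genuine gap at exactly the step you flag as needing ``verification,'' and that step is not a verification: the claim is false. The classical Wronskian criterion, even with Hasse derivatives, fails in positive characteristic: $1$ and $X^p$ are linearly independent over $\F_p$, yet $D^{(1)}(X^p)=pX^{p-1}=0$, so the Hasse--Wronskian with row indices $0,1$ vanishes identically. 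What is true is that \emph{some generalized} Wronskian $\det(D^{(e_i)}g_j)$ with $0\le e_1<\dots<e_r$ is nonzero, but the required $e_i$ can be as large as the degrees, which destroys the multiplicity count. The hypothesis $q>\dm$ does not rescue the classical criterion (one would need $\mathrm{char}(\F_q)>\dm$; $q$ can be a large power of a small prime), and passing to $\F_{q^s}$ in your small-field variant changes nothing since the characteristic is unchanged. This is precisely why \cite{GK-combinatorica} abandon derivatives and work with the \emph{folded} Wronskian $\det\bigl(g_j(\gamma^{i-1}X)\bigr)$, defining $H_i$ by vanishing on consecutive blocks of a geometric progression $\{\gamma^{j}\alpha_i\}$ rather than by vanishing with multiplicity; the independence criterion for that determinant holds whenever $\gamma$ has large enough multiplicative order. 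Repairing your argument requires this different construction, not a more careful check of the standard one.

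Two further discrepancies with the actual proof are worth noting. First, the $r^2/\eta$ bound for small $q$ in \cite{GK-combinatorica} arises because the extension-field construction only yields a \emph{weak} subspace design, and converting weak to strong costs a factor $r$ since each individual intersection has dimension at most $r$ --- the paper says this explicitly in the paragraph following the theorem. Your proposed mechanism (a factor $s$ from degree-$s$ places with ``$s=O(r)$'') is different; indeed, if your Wronskian step were valid, your own bookkeeping (multiplicity at least $d_i(\eta\dm-sr)$ at each of the $s$ conjugate roots of $\pi_i$, summed against $\deg\mathcal{W}\le r\dm$) would give $2r/\eta$ for \emph{all} $q$, strictly better than the known theorem --- a signal that the argument is broken rather than improved. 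Second, the degree $s$ needed so that $q^s/s$ monic irreducibles suffice for $b\le q^{\eta\dm/(4r)}$ subspaces of codimension $\approx\eta\dm$ is $\Theta(\eta\dm/r)$, not $O(r)$.
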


We note a couple of senses in which the parameters offered by the
explicit construction are weaker than those guaranteed by the
probabilistic construction. First, the total intersection dimension
\eqref{eq:subs-design-cond} is $r^2/\eta$ rather than $O(r/\eta)$
(except when $q$ is large). This is because, for small fields, their
construction yields only a weak subspace design, incurring a factor
$r$ loss when passing to a subspace design. Second, the number of
subspaces in the design is smaller, roughly $q^{\Omega(\eta \dm/r)}$
instead of $q^{\Omega(\eta \dm)}$. Finally, there is a modest
restriction the field size $q$, and we need to pick $r,\dm$ suitably
to allow for fixed $q$. Fortunately, all these restrictions can be
accommodated for our application. We remark that a recent construction
of subspace designs based on cyclotomic function
fields~\cite{GXY-tams} gives an $(r,O(r
\log_q \dm/\eta))$-\sd\ over an \emph{arbitrary} field $\F_q$; for our application, however, the $r^2/\eta$
bound is more useful as $r \ll \dm$, and we cannot afford the
dependence on $\dm$ in the bound.

Let us now record a construction of a subspace that has large dimension and yet has
low-dimensional
intersection with every periodic subspace. The construction is based on the above \sds.
This form will be convenient for later use in Section~\ref{subsec:put-together-rs} for pre-coding Reed-Solomon codes with evaluation points in a subfield.

\begin{theorem}
\label{thm:final-sd-subspace}
Let $\eta \in (0,1)$ and $q$ be a prime power, and $r,\dm,b$ be integers such that $r\le \eta \dm/4$ and $b < q$.
Then, one can construct a subspace $V$ of $\F_q^{b \dm}$ of dimension at least $(1-\eta) b\dm$ in  deterministic $q^{O(\dm)}$ time such that for every $(r,\dm,b)$-periodic subspace $T \subset \F_q^{b\dm}$, $V \cap T$ is an $\F_q$-affine subspace of dimension at most $2r/\eta$.
\end{theorem}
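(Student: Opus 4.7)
The plan is to build $V$ as a direct product of the subspaces in an explicit subspace design; once that is done, the dimension bound on $V \cap T$ falls out of Lemma~\ref{lem:sd-pruning} with essentially no additional work.

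First I would invoke Theorem~\ref{thm:sd-explicit} with parameters $\eta, r, \dm$ to produce an explicit $(r, 2r/\eta)$-\sd\ $\mathcal{H} = \{H_1, H_2, \ldots, H_b\}$ in $\F_q^\dm$ whose subspaces each have dimension $(1-\eta)\dm$. The hypothesis $r \le \eta\dm/4$ from Theorem~\ref{thm:final-sd-subspace} supplies the analogous condition of Theorem~\ref{thm:sd-explicit}; the same hypothesis, combined with $b < q$, yields $b < q \le q^{\eta\dm/(4r)}$, putting $b$ in the allowed range for the cardinality. The construction is deterministic in $\mathrm{poly}(b,q)$ time, which is comfortably within the $q^{O(\dm)}$ budget.

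Next I would set $V := H_1 \times H_2 \times \cdots \times H_b \subseteq \F_q^{b\dm}$, viewing $\F_q^{b\dm}$ as $b$ concatenated blocks of length $\dm$ each and requiring the $j$-th block of a vector in $V$ to lie in $H_j$. This is an $\F_q$-linear subspace of dimension $\sum_{j=1}^b \dim(H_j) = (1-\eta)\,b\dm$, as required. For any $(r,\dm,b)$-periodic affine subspace $T \subseteq \F_q^{b\dm}$, the intersection $V \cap T$ is by construction precisely the set $\mathcal{T}$ appearing in the statement of Lemma~\ref{lem:sd-pruning} instantiated with $\mathcal{H}$, so that lemma immediately gives $\dim(V \cap T) \le 2r/\eta$.

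The only delicate point is ensuring that Theorem~\ref{thm:sd-explicit} delivers the sharper $(r, 2r/\eta)$-\sd\ rather than the weaker $(r, r^2/\eta)$ one; the sharper bound is proved there under the side hypothesis $q > \dm$, which is automatic here whenever $b \ge \dm$ (since $b < q$) and in particular in the regime needed for the cascaded subspace designs of Section~\ref{subsec:csd}. Modulo this routine parameter bookkeeping, the theorem is a one-line consequence of the direct-product construction and Lemma~\ref{lem:sd-pruning}.
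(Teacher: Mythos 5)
Your proposal is correct and follows essentially the same route as the paper's own proof: take $V = H_1 \times \cdots \times H_b$ for an explicit $(r,2r/\eta)$-subspace design from Theorem~\ref{thm:sd-explicit} and apply Lemma~\ref{lem:sd-pruning}. You even flag the same caveat the paper does (the sharper $2r/\eta$ intersection bound requires the $q>\dm$ regime of Theorem~\ref{thm:sd-explicit}), so there is nothing to add.
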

\begin{proof}
We will take $V = H_1 \times H_2 \times \cdots H_b$ where the $H_i$'s belong to a $(r,2r/\eta)$-\sd\ in $\F_q^\period$ of cardinality $b$ and dimension at least $(1-\eta)\dm$ as guaranteed by Theorem~\ref{thm:sd-explicit} when $q > \dm$. er Clearly $\dim(V) \ge (1-\eta)b\dm$ since each $H_i$ has dimension at least $(1-\eta)\dm$. The claim now follows using Lemma \ref{lem:sd-pruning}.
\end{proof}

\subsection{Cascaded subspace designs}
\label{subsec:csd}
\newcommand{\csd}{cascaded subspace design}
In preparation for our results about algebraic-geometric codes, whose block length $\gg q^m$
is much larger than the possible size of subspace designs in $\F_q^m$,
we now formalize a notion that combines several ``levels'' of subspace
designs. The definition might seem somewhat technical, but it has a
natural use in our application to list-size reduction for AG codes. Note that there is no ``consistency'' requirement between \sds\ at different levels other than the lengths and cardinalities matching.

\begin{defn}[Subspace designs of increasing length]
Let $l$ be a positive integer.  For positive integers $r_0 \le r_1 \le
\cdots \le r_l$ and $m_0\le m_1 \le \cdots \le m_l$ such that
$m_{\iota-1} | m_\iota$ for $1 \le \iota \le l$, an
$(r_0,r_1,\dots,r_l)$-\csd\ with length-vector $(m_0,m_1,\dots,m_l)$
and dimension vector $(d_0,d_1,\dots,d_{l-1})$ is a collection of $l$
\sds, specifically an $(r_{\iota-1},r_\iota)$-\sd\ in
$\F_{q^{m_{\iota-1}}}$ of cardinality $m_\iota/m_{\iota-1}$ and dimension
$d_{\iota-1}$ for each $\iota=1,2,\dots,l$.
\end{defn}

Note that the $l=1$ case of the above definition corresponds to an $(r_0,r_1)$-\sd\ in $\F_q^{m_0}$ of dimension $d_0$ and cardinality $m_1/m_0$.
In Lemma \ref{lem:sd-pruning}, we used the subspace $H_1 \times H_2 \times \cdots \times H_b$ based on a \sd\ consisting of the $H_i$'s to prune a periodic subspace. Generalizing this, we now define a subspace associated with a \csd\ based on the subspace designs comprising it.
\begin{defn}[Canonical subspace]
\label{def:csd}
Let $\mathcal{M}$ be a \csd\ with length-vector $(m_0,m_1,\dots,m_l)$ such that the $\iota$'th \sd\ in $\mathcal{M}$ has subspaces
\[ H^{(\iota)}_1, H^{(\iota)}_2,
\cdots, H^{(\iota)}_{m_\iota/m_{\iota-1}} \subset \F_q^{m_{\iota-1}} \ , \text{  for } 1 \le \iota \le l \ . \]
The {\em canonical subspace} associated with such a \csd, denoted
$U(\mathcal{M})$, is a subspace of $F_q^{m_l}$ defined as
  follows:
  \begin{quote}
  A vector $\mv{x} \in \F_q^{m_l}$ belongs to $U(\mathcal{M})$ if and only
  if for every $\iota \in \{1,2,\dots,l\}$, each of the $m_\iota$-sized blocks of
  $\mv{x}$ given $\proj_{[j m_\iota +1 , (j+1)m_\iota]}(\mv{x})$ for $0\le j < m_l/m_\iota$) belongs $H^{(\iota)}_1 \times H^{(\iota)}_2 \times \cdots
  \times H^{(\iota)}_{m_\iota/m_{\iota-1}}$.
\end{quote}
\end{defn}

In other words, we apply the construction of Lemma \ref{lem:sd-pruning} for (disjoint) intervals of length $m_\iota$ at each level $\iota \in \{1,2,\dots,l\}$.

\noindent The following simple fact, which follows by counting number of linear constraints imposed, gives a lower bound on the dimension of a canonical subspace.

\begin{obs}
\label{obs:csd-dim}
For a \csd\ $\mathcal{M}$ as above, if the $\iota$'th subspace design has dimension at least $(1-\xi_{\iota-1}) m_{\iota-1}$ for $1 \le \iota \le l$, then the dimension of the canonical subspace $U(\mathcal{M})$ is at least $\Bigl(1 - (\xi_0+\xi_1+\cdots+\xi_{l-1})\Bigr) m_l$.
\end{obs}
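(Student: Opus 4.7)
The plan is to bound the total number of linear constraints that define $U(\mathcal{M})$ as a subspace of $\F_q^{m_l}$, and then invoke the standard fact that the dimension of a subspace equals the ambient dimension minus the number of independent linear constraints imposed on it (and in any case is at least the ambient dimension minus the \emph{total} number of constraints).

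First I would translate the hypothesis into codimension: since the $\iota$'th subspace design consists of subspaces of $\F_q^{m_{\iota-1}}$ of dimension at least $(1-\xi_{\iota-1}) m_{\iota-1}$, each subspace $H^{(\iota)}_j$ can be described as the kernel of some set of at most $\xi_{\iota-1} m_{\iota-1}$ linear forms on $\F_q^{m_{\iota-1}}$. Next I would count the constraints contributed by level $\iota$ to the definition of $U(\mathcal{M})$. By Definition~\ref{def:csd}, at level $\iota$ we partition $\mv{x}\in \F_q^{m_l}$ into $m_l/m_\iota$ disjoint blocks of length $m_\iota$, and within each such block we further partition it into $m_\iota/m_{\iota-1}$ sub-blocks of length $m_{\iota-1}$, requiring the $j$'th sub-block to lie in $H^{(\iota)}_j$. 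The total number of $m_{\iota-1}$-sized sub-blocks used at level $\iota$ is therefore
\[ \frac{m_l}{m_\iota} \cdot \frac{m_\iota}{m_{\iota-1}} \;=\; \frac{m_l}{m_{\iota-1}}, \]
and each contributes at most $\xi_{\iota-1} m_{\iota-1}$ linear constraints on $\mv{x}$, for a total of at most $\xi_{\iota-1}\, m_l$ linear constraints from level $\iota$.

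Finally I would sum over the levels $\iota=1,2,\dots,l$ to obtain that $U(\mathcal{M})$ is cut out by at most
\[ \sum_{\iota=1}^{l} \xi_{\iota-1}\, m_l \;=\; (\xi_0+\xi_1+\cdots+\xi_{l-1})\, m_l \]
linear constraints on $\F_q^{m_l}$, and conclude that
\[ \dim U(\mathcal{M}) \;\ge\; m_l - (\xi_0+\xi_1+\cdots+\xi_{l-1})\, m_l \;=\; \bigl(1-(\xi_0+\xi_1+\cdots+\xi_{l-1})\bigr) m_l, \]
which is the desired bound. There is no real obstacle here: the only thing to be careful about is that the constraints from different levels and different sub-blocks are applied to disjoint or overlapping coordinates consistently, but since we are only using the trivial union bound on the number of linear constraints (independence is not needed for a lower bound on dimension), this goes through without any subtlety.
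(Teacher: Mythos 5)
Your proof is correct and follows exactly the route the paper intends: the paper gives no explicit proof beyond remarking that the observation ``follows by counting number of linear constraints imposed,'' and your constraint count (each level $\iota$ contributes $\frac{m_l}{m_{\iota-1}} \cdot \xi_{\iota-1} m_{\iota-1} = \xi_{\iota-1} m_l$ constraints, summed over levels) is precisely that count carried out in detail.
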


The following is the crucial claim about pruning ultra-periodic subspaces using (the canonical subspace of) a \csd. It generalizes Lemma \ref{lem:sd-pruning} which corresponds to the $l=1$ case.

\begin{lemma}
\label{lem:csd-pruning}
Suppose $\mathcal{M}$ is a $(r_0,r_1,\dots,r_l)$-\csd\ with length-vector $(m_0,m_1,\dots,m_l)$. Let $T$ be a $(r_0,m_0)$-ultra periodic affine subspace of $\F_q^{m_l}$. Then the dimension of the affine space $T \cap U(\mathcal{M})$ is at most $r_l$.
\end{lemma}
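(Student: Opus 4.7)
The plan is to induct on the number of levels $l$. For the base case $l = 1$, ultra-periodicity of $T$ at $\ell = 1$ says that $T$ is itself $(r_0, m_0, m_1/m_0)$-periodic; the canonical subspace equals $U(\mathcal{M}) = H_1^{(1)} \times \cdots \times H_{m_1/m_0}^{(1)}$, and Lemma~\ref{lem:sd-pruning} immediately yields $\dim(T \cap U(\mathcal{M})) \le r_1$.

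For the inductive step, the strategy is to absorb the innermost (level-$1$) subspace design. Define $T_1 = T \cap U_1$, where $U_1 \subseteq \F_q^{m_l}$ enforces that the $i$-th $m_0$-sub-block lies in $H^{(1)}_{((i-1) \bmod (m_1/m_0))+1}$, and let $\mathcal{M}'$ be the truncated $(l-1)$-level \csd\ on levels $2, \ldots, l$, an $(r_1, r_2, \ldots, r_l)$-\csd\ with length-vector $(m_1, \ldots, m_l)$. Since $T \cap U(\mathcal{M}) = T_1 \cap U(\mathcal{M}')$, it suffices to prove that $T_1$ is $(r_1, m_1)$-ultra periodic in $\F_q^{m_l}$; the inductive hypothesis applied to $T_1$ and $\mathcal{M}'$ then gives $\dim(T \cap U(\mathcal{M})) \le r_l$.

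To verify this ultra-periodicity, fix $\ell'$ with $1 \le \ell' \le m_l/m_1$. By ultra-periodicity of $T$ at scale $\ell' m_1$, there is a uniform (prefix-independent) witness $\tilde{W}_{\ell'} \subseteq \F_q^{\ell' m_1}$ of dimension at most $\ell' r_0 m_1/m_0$ such that the extensions of any $j\ell' m_1$-prefix in $T$ lie in a coset of $\tilde{W}_{\ell'}$. The crucial structural observation is that $\tilde{W}_{\ell'}$ is itself $(r_0, m_0, \ell' m_1/m_0)$-periodic with the same $m_0$-scale witness $W$ of $T$: for any two differences $\mv{z}, \mv{z}' \in \tilde{W}_{\ell'}$ sharing the first $i$ $m_0$-sub-blocks, the coset shift determining the $(i+1)$-th sub-block (an affine function of the preceding sub-blocks via the canonical representation of Definition~\ref{def:periodic-rep}) agrees for $\mv{z}$ and $\mv{z}'$, so $\mv{z}_{i+1} - \mv{z}'_{i+1} \in \ker B = W$. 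Next, the $\ell'$-fold repetition $\{H^{(1)}_{((i-1) \bmod (m_1/m_0))+1}\}_{i=1,\ldots,\ell' m_1/m_0}$ of the level-$1$ design is an $(r_0, \ell' r_1)$-\sd\ of cardinality $\ell' m_1/m_0$ in $\F_q^{m_0}$, since for any $r_0$-dimensional subspace $V$, $\sum_i \dim(V \cap H^{(1)}_{((i-1) \bmod (m_1/m_0))+1}) = \ell'\sum_{j=1}^{m_1/m_0}\dim(V \cap H^{(1)}_j) \le \ell' r_1$. Applying Lemma~\ref{lem:sd-pruning} to $\tilde{W}_{\ell'}$ with this repeated design yields
\[
\dim\Bigl(\tilde{W}_{\ell'} \cap (H^{(1)}_1 \times \cdots \times H^{(1)}_{m_1/m_0})^{\ell'}\Bigr) \;\le\; \ell' r_1.
\]
Since this intersection depends only on $\ell'$ (not on the prefix), it witnesses the $(\ell' r_1, \ell' m_1, \lfloor m_l/(\ell' m_1)\rfloor)$-periodicity of $T_1$ for every admissible $\ell'$, establishing that $T_1$ is $(r_1, m_1)$-ultra periodic.

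The main obstacle is precisely the structural claim that $\tilde{W}_{\ell'}$ is itself $m_0$-scale periodic with base witness $W$. One might hope for the simpler identity $\tilde{W}_{\ell'} = W^{\ell' m_1/m_0}$, which would make the intersection with the level-$1$ constraint factor coordinate-wise; but this is false in general because prefix-dependent affine coset shifts in the canonical representation of $T$ create cross-block linkages inside $\tilde{W}_{\ell'}$. The weaker periodicity does hold, however, and it is exactly the hypothesis that Lemma~\ref{lem:sd-pruning} needs in order to extract the $\ell' r_1$ dimension bound — the bound that lets the induction propagate ultra-periodicity with the improved parameter $r_1$ across the next level of the cascade.
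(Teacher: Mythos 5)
Your proof is correct and follows the same inductive strategy as the paper's: peel off one level of the cascade at a time, intersect with the corresponding product of design subspaces, and apply Lemma~\ref{lem:sd-pruning} at each scale. In fact your argument is more careful than the paper's terse proof --- the explicit verification that $T_1 = T \cap U_1$ remains \emph{ultra}-periodic at scale $m_1$ (via the $m_0$-scale periodicity of the scale-$\ell' m_1$ witness $\tilde{W}_{\ell'}$ together with the observation that the $\ell'$-fold repetition of the level-one design is an $(r_0,\ell' r_1)$-subspace design) is precisely the step that the paper's ``continuing this argument'' leaves implicit, and it is needed to apply the next level of the cascade.
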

\begin{proof}
The idea will be to apply Lemma \ref{lem:sd-pruning} inductively, for increasing periods $m_0,m_1,$ $\dots,m_{l-1}$.
Since $T$ is $(r_0,m_0)$-ultra periodic, it is $(r_0,m_0)$-periodic and $((m_1/m_0) r_0,m_1)$-periodic. Using this together with Lemma \ref{lem:sd-pruning}, it follows that
\[ T \cap \{ \mv{x} \in \F_q^{m_l} \mid \proj_{[j m_1 +1 , (j+1)m_1]}(\mv{x}) \in  H^{(1)}_1 \times H^{(1)}_2 \times \cdots
  \times H^{(1)}_{m_1/m_{0}} \text{ for }  0\le j < m_l/m_1\} \]
  is an affine subspace that is $(r_1,m_1)$-periodic. Continuing this argument, the affine subspace of $T$ formed by restricting each $m_\iota$-block to belong to $H^{(\iota)}_1 \times H^{(\iota)}_2 \times \cdots
  \times H^{(\iota)}_{m_\iota/m_{\iota-1}}$ for $1 \le \iota \le j$ is $(r_j,m_j)$-periodic. For $j=l$, we get the intersection $T \cap U(\mathcal{M}) \subset \F_q^{m_l}$ will be $(r_l,m_l)$-periodic, which simply means that it is an $r_l$-dimensional affine subspace of $\F_q^{m_l}$.
\end{proof}

We conclude this section by constructing a canonical subspace that has low-dimensional intersection with ultra-periodic subspaces based on the explicit subspace designs of Theorem~\ref{thm:sd-explicit}.
This statement will be used in Section~\ref{subsec:put-together-ag} for pre-coding algebraic-geometric codes based on the Garcia-Stichtenoth tower.
\begin{theorem}
\label{thm:final-csd-subspace}
Let $q \ge 4$ be a prime power.
Let $\eta\in (0,1)$ and integers $\dm,r\ge 2$ satisfy
$\dm \ge c r \eta^{-1} \log (r/\eta)$ for a large enough (absolute) constant $c > 0$.
For all large enough multiples $\dims$ of $\dm$, we can construct a subspace $U$ of $\F_q^\dims$ of dimension at least $(1-\eta)\dims$ such that for every $(r,\dm)$-ultra periodic affine subspace $T \subset \F_q^\dims$, the dimension of the affine subspace $U \cap T$ is at most
$(r/\eta)^{2^{O(\log^* \dims)}}$.
The subspace $U$ can be constructed in deterministically in $\mathrm{poly}(\dims,q)$ time.
\end{theorem}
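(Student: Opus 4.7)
The plan is to apply Lemma~\ref{lem:csd-pruning} with a carefully designed cascaded subspace design $\mathcal{M}$ whose length-vector starts at $m_0=\dm$ and climbs to $m_l=\dims$ in $l=O(\log^*\dims)$ levels, taking $U:=U(\mathcal{M})$ as the canonical subspace. First, I would allocate a ``dimension loss'' budget $\xi_\iota = \eta/2^{\iota+1}$ for $0 \le \iota < l$, so that $\sum_\iota \xi_\iota < \eta$; by Observation~\ref{obs:csd-dim} this guarantees $\dim U \ge (1-\eta)\dims$. At level $\iota$ I invoke Theorem~\ref{thm:sd-explicit} over the extension field $\F_{q^{m_{\iota-1}}}$ with parameters $(r_{\iota-1},\xi_{\iota-1})$ to produce an $(r_{\iota-1},r_\iota)$-subspace design of dimension $(1-\xi_{\iota-1})m_{\iota-1}$ and cardinality $m_\iota/m_{\iota-1}$, where the iterate satisfies $r_\iota \le r_{\iota-1}^{\,2}/\xi_{\iota-1}$ (or $2r_{\iota-1}/\xi_{\iota-1}$ when the field is large). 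Unrolling this recursion from $r_0=r$ yields $r_l \le (r/\eta)^{2^l}\cdot 2^{O(l\cdot 2^l)} = (r/\eta)^{2^{O(l)}}$.

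Next I would choose the lengths to grow as fast as Theorem~\ref{thm:sd-explicit} permits, namely $m_\iota/m_{\iota-1} := \bigl\lfloor q^{\xi_{\iota-1} m_{\iota-1}/(4 r_{\iota-1})}\bigr\rfloor$. Since $\xi_{\iota-1}/r_{\iota-1}$ decreases only doubly-exponentially in $\iota$ while $m_{\iota-1}$ grows at least exponentially at each step, an easy induction shows that $m_\iota$ grows at least as fast as a tower of exponentials, and hence $l = O(\log^*\dims)$ levels suffice to overshoot $\dims$ (at which point I stop the cascade and restrict/truncate the last \sd\ so that $m_l = \dims$ exactly, which is possible since $\dims$ was assumed to be a large enough multiple of $\dm$). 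With $l = O(\log^*\dims)$, the final intersection dimension bound is $r_l \le (r/\eta)^{2^{O(\log^*\dims)}}$, as required.

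Having set up $\mathcal{M}$, the conclusion follows directly from Lemma~\ref{lem:csd-pruning}: for any $(r,\dm)$-ultra periodic $T\subset\F_q^\dims$, the affine space $T\cap U$ has dimension at most $r_l$. Each individual \sd\ is computed by Theorem~\ref{thm:sd-explicit} in time $\mathrm{poly}(m_\iota/m_{\iota-1},q^{m_{\iota-1}})=\mathrm{poly}(m_\iota,q)$, and with $l=O(\log^*\dims)$ levels the total construction time is $\mathrm{poly}(\dims,q)$.

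The main obstacle is to verify that the precondition $q^{\xi_{\iota-1} m_{\iota-1}/(2r_{\iota-1})} > 2r_{\iota-1}/\xi_{\iota-1}$ of Theorem~\ref{thm:sd-explicit} survives at every level $\iota$, while simultaneously keeping $r_\iota \le \eta m_\iota/4$ so the ``ambient dimension'' constraint $r_\iota \le \xi_\iota m_\iota/4$ remains valid. The base case $\iota=1$ is handled by the hypothesis $\dm\ge cr\eta^{-1}\log(r/\eta)$, which is tight precisely for this condition at level $1$. For the inductive step, $r_\iota$ grows roughly like $(r/\eta)^{2^\iota}$ while $m_\iota$ jumps by a factor of $q^{\Theta(\eta m_{\iota-1}/(2^\iota r_{\iota-1}))}$, an iterated exponential, so the gap $m_\iota/r_\iota$ strictly widens at each level and the preconditions hold comfortably from level $2$ onwards. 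A modest care in balancing the $\xi_\iota$'s versus $r_\iota$ is the only delicate part of the argument.
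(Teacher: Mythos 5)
Your proposal is correct and follows essentially the same route as the paper: a cascaded subspace design with loss budgets $\eta/2^{\Theta(\iota)}$ summing to less than $\eta$, the squaring recursion $r_\iota \le r_{\iota-1}^2/\xi_{\iota-1}$ from Theorem~\ref{thm:sd-explicit}, tower-of-exponentials growth of the $m_\iota$ giving $l = O(\log^* \dims)$ levels, and Lemma~\ref{lem:csd-pruning} for the final intersection bound. The only detail to tighten is the last level: $\dims$ being a multiple of $\dm$ does not make it a multiple of $m_{l-1}$, so one cannot simply ``truncate so that $m_l = \dims$ exactly''; instead (as the paper does) take $m_l$ to be the largest multiple of $m_{l-1}$ below $\dims$ and pad the remaining $\dims - m_l < m_{l-1} = O((\log_q\dims)^2)$ coordinates with zeros, which costs negligibly in dimension.
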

\begin{proof}
We will take $U$ to the canonical subspace $U(\mathcal{M})$ of an appropriate \csd\ $\mathcal{M}$. To this end, given our work so far, the main remaining task is to pick the parameters of $\mathcal{M}$ carefully. Let $\eta_\iota = \frac{\eta}{4 \cdot 2^{\iota}}$ for $\iota=0,1,2,\dots$.

Let $m_0=\dm$, $m_1 = m_0 \cdot \lfloor (r/\eta)^{c/4} \rfloor$, and for $\iota \ge 0$, $m_{\iota+1} = m_{\iota}\cdot q^{\lceil \sqrt{m_{\iota}}\rceil}$. Let $r_0=r$, and for $\iota\ge 0$, $r_{\iota+1} = \lceil r^2_{\iota}/\eta_{\iota} \rceil$.
For this choice of parameters, one can verify that (i) $r_\iota \le \eta_\iota m_\iota/4$, and
(ii) $q^{\eta_\iota m_\iota/(4 r_\iota)} \ge m_{\iota + 1}/m_{\iota}$ for all $\iota \ge 0$.
Indeed, to verify the first condition by induction, one only needs to check that $m_{\iota+1} \ge m_\iota^2$, which is true for $\iota=0$ for a large enough choice of $c$, and for $\iota \ge 1$, $m_{\iota+1}$ in fact grows exponentially in $\sqrt{m_\iota}$. For the second condition, for $\iota =0$ it follows from our assumption that $\dm \ge c r \eta^{-1} \log (r/\eta)$. For $\iota \ge 1$, it is implied by $r_\iota/\eta_\iota \ll \sqrt{m_\iota}/4$, which is true for $\iota=1$ for large enough $c$, and for $\iota > 1$ by induction since $r_\iota/\eta_\iota$ grows quadratically in each step, whereas $m_\iota$ grows exponentially.

We can therefore conclude by Theorem~\ref{thm:sd-explicit} that we can construct a $(r_{\iota},r_{\iota+1})$-\sd\ of cardinality $m_{\iota+1}/m_\iota$ in $\F_q^{m_\iota}$ of dimension $(1-\eta_\iota)m_\iota$.


Pick $l$ to the smallest integer so that $m_{l-1}\ge (\log_q \dims)^2$. Since $m_0 =\dm \ge 2$ and $m_{\iota+1} \ge q^{\sqrt{m_\iota}}$ for $1 \le \iota < l$, it is easy to see that that $l \le O(\log^* \dims)$ 
Redefine $m_{l-1}$ to equal $m'_{l-1}$ which is the smallest multiple of $m_{l-2}$ that is at least $(\log_q \dims)^2$. Since $m_{l-2} < (\log_q \dims)^2$, we have $(\log_q \dims)^2 \le m'_{l-1} < 2 (\log_q \dims)^2$. We also redefine $m_l$ to equal the largest multiple $m'_l$ of $m'_{l-1}$ that is at most $\dims$. This implies $\dims -m'_l < m'_{l-1}$. Note that $m'_{l-1} \le m_{l-2} q^{\lceil \sqrt{m_{l-2}}\rceil}$ and $m'_l \le q^{\sqrt{m'_{l-1}}}$. For notational simplicity, let us re-denote $m'_{l-1}$ and $m'_l$ by $m_{l-1}$ and $m_l$.

Thus for these parameters, we can construct an $(r_0,r_1,\dots,r_l)$-\csd\ $\mathcal{M}_l$ with length-vector $(m_0,m_1,\dots,m_l)$ and dimension-vector $(d_0,d_1,\dots,d_{l-1})$ where $d_\iota \ge (1-\eta/2^{\iota+2}) m_\iota$.

The construction time for \sds\ guaranteed by Theorem~\ref{thm:sd-explicit} implies that $\mathcal{M}_l$ can be constructed in $\mathrm{poly}(m_l,q)$ time.
We define the desired subspace $U \subset \F_q^\dims$ as $U(\mathcal{M}_l) \times 0^{\dims-m_l}$, i.e., $U$ consists of the vectors in the canonical subspace $U(\mathcal{M}_l) \subset \F_q^{m_l}$ padded with $\dims - m_l$ zeroes at the end. By Observation \ref{obs:csd-dim}, the dimension of $U$ is at least
\begin{eqnarray*} \left(1- \sum_{\iota=0}^{l-1} \frac{\eta}{4 \cdot 2^\iota}\right) m_l &\ge& (1-\eta/2)m_l
> (1-\eta/2) (\dims-m_{l-1})\\ & >& (1-\eta/2)\dims - 2 (\log_q \dims)^2 > (1-\eta) \dims \end{eqnarray*}
for large enough $\dims$. This proves that the subspace $U$ has dimension at least $(1-\eta)\dims$, and can be constructed deterministically in $\mathrm{poly}(q,\dims)$ time.

It remains to prove the claimed intersection property with ultra-periodic subspaces. Let $T$ be an arbitrary $(r,\dm)$-ultra periodic affine subspace of $\F_q^\dims$. By Lemma \ref{lem:csd-pruning}, $\proj_{m_l}(T) \cap U(\mathcal{M})$ is an affine subspace of $\F_q^{m_l}$ of dimension at most $r_l$. Clearly, the same dimension bound also holds for $T \cap U$ since the last $\dims - m_l$ coordinates for vectors in $U$ are set to $0$. The proof is complete by noting that for our choice of parameters, $r_l \le (2r/\eta)^{2^l}$ and $l \le O(\log^* \dims)$.
\end{proof}

\section{Pre-coding AG codes using h.s.e sets and subspace designs}
\label{sec:PL}

In this final section, we combine the algebraic list decoding results (for folded AG codes and AG codes with subfield evaluation points) with subspace evasive sets (h.s.e sets and subspace designs) to deduce our main results on optimal rate list-decodable codes (Theorem~\ref{thm:main-intro}). The idea is to pre-code the messages of the algebraic codes to belong to the subspace evasive sets, so that only a small number of candidates fall in the periodic (or ultrae-periodic) subspaces that arise in algebraic decoding and further they can be enumerated efficiently.

We stress that for our final code constructions either h.s.e sets or subspace designs can be used in combination with either the folded variant or the subspace evaluation variant. For concreteness though, below we focus on the following two combinations:
\begin{enumerate}
\item folded codes with h.s.e sets, and
\item subspace evaluation codes with subspace designs.
\end{enumerate}
We note that the use of h.s.e sets leads to smaller final list size but their construction is randomized. Subspace designs lead to slightly larger list size (which in particular grows, albeit very slowly, with the block length) but the advantage is that they can be explicitly constructed.

\subsection{Pruning with h.s.e sets}
\label{subsec:fag-hse}

We begin with pruning via h.s.e sets, applied to the folded Hermitian and folded Garcia-Stichtenoth codes from Sections~\ref{newsubsec:FH} and \ref{subsec:FGS} respectively. In particular, the combination of folded Garcia-Stichtenoth codes with h.s.e sets will give us our final main Monte Carlo code construction (part (i) of Theorem~\ref{thm:main-intro}). We start with the folded Hermitian case as a warmup.

\subsubsection{Combining folded Hermitian codes and h.s.e sets}
\label{sec:herm+se}

Instead of encoding arbitrary $\mv{f} \in \F_q^k$ by the folded Hermitian code (Definition \ref{def:folded-herm-F}), we can restrict the messages $\mv{f}$ to belong to the range of our h.s.e set, so that the affine space of solutions guaranteed by Lemma \ref{lem:space-of-solns-rs} can be efficiently pruned to a small list. The formal claim is below.

\begin{theorem}
	\label{thm:final-herm+se}
	Let $e\ge 2$ be an integer, $r \ge 2e$ be a large enough prime power, $q =r^2$, and $\zeta \in (1/q,1)$. Let $k \le q^{\zeta q/2}$ be a positive integer. Let $s,m$ be positive integers satisfying $1 \le s \le m \le q-1$ and $s < \zeta q/12$. Finally let $N$ be an integer satisfying $k + 2 e r^e \le N m \le (q-1) r^e$.
	
	Consider the code $C_1$ with encoding $E_1 : \F_q^{(1-\zeta)^2 k} \to (\F_q^m)^N$ defined as
	\[ E_1(\mv{x})=\widetilde{FH}_e(N,k,q,m)(\hse(\mv{x})) \ , \] 
	for a random map $\hse : \F_q^{(1-\zeta)^2 k} \to \F_q^k$ as constructed in Theorem~\ref{thm:hse-set-final} for a period size $\period=q-1$ and $b = \lceil \frac{k}{q-1} \rceil$.

	Then, the code $C_1$ code has rate $R=(1-\zeta)^2 k/(Nm)$, can be encoded in $\mathrm{poly}(N m q^{\zeta q})$ time, and with high probability over the choice of $\hse$, it is $(\tau,\ell)$-list decodable in time $\mathrm{poly}(N m q^{\zeta q})$ for $\ell \le O(1/(R\zeta))$ and
	\[ \tau = \frac{s}{s+1} \left( 1 - \frac{k}{N(m-s+1)} \right) - \frac{3m}{m-s+1} \frac{e r^e}{m N} \ . \]
\end{theorem}
\begin{proof}
	The claim about the rate is clear, and the encoding time follows from the time to compute $\hse$ recorded in Theorem~\ref{thm:hse-set-final}.
	
	By \eqref{eq:x4}, the genus $\g_e \le e r^e$, and so the condition on $N,m$ meets the requirement for the construction of the folded Hermitian tower based code in Definition \ref{def:fh-F}, and the claimed value of the error fraction $\tau$ satisfies \eqref{eq:herm-error-frac-F}. By Lemma \ref{lem:x5.6}, we know that the candidate messages found by the decoder lie in one of at most $q^{2Nm}$ possible  $(s,q-1)$-periodic subspaces of $\F_q^k$. 
	
	One can check that the conditions of Theorem \ref{thm:hse-set-final} are met for our choice of $\zeta,s,q,k,\period$.
	Appealing to Theorem \ref{thm:hse-set-final} with the choice $c = 2Nm/k =O(1/R)$, we conclude that, with high probability over the choice of $\hse$, there is a decoding algorithm running in time $\mathrm{poly}(N m q^{\zeta q})$ to list decode $C_1$ from a fraction $\tau$ of errors, outputting at most $O(1/(R\zeta))$ messages in the worst-case.
\end{proof}

\paragraph{\textsl{Choosing parameters.}}
Let $\eps > 0$ be a small positive constant, and a family of codes of length $N$ (assumed large enough) and rate $R\in (0,1)$ is sought. Pick $n$ to be a growing parameter.

By picking $s = \Theta(1/\eps)$, $m = \Theta(1/\eps^2)$,  $r = \lfloor \log n \rfloor$, $e  = \lceil \frac{\log n}{\log \log n} \rceil$, $\zeta = (\log n \log \log n)^{-1}$, $N = \lfloor\frac{(r^2-1)r^e}{m}\rfloor$, and $k$ proportional to $Nm$ in Theorem~\ref{thm:final-herm+se}, we can conclude the following.

\begin{cor}
	For any $R \in (0,1)$ and positive constant $\eps \in (0,1)$, there is a randomized construction of a family of codes of rate at least $R$ over an alphabet size $(\log N)^{O(1/\eps^2)}$  that are encodable and $(1-R-\eps, O(R^{-1} \log N \log \log N))$-list decodable in $\mathrm{poly}(N,1/\eps)$ time, where $N$ is the block length of the code.
\end{cor}

Our promised main result (Theorem~\ref{thm:main-intro}) achieves better parameters than the above, namely an alphabet size of $\exp(\tilde{O}(1/\eps^2))$ and list-size of $O(1/(R\eps))$.
This is based on the Garcia-Stichtenoth tower and is described next.

\subsubsection{Combining folded Garcia-Stichtenoth codes and h.s.e sets}
\label{sec:fgs+se}

Similarly to Section \ref{sec:herm+se}, we now show how to pre-code the messages of the FGS code with a h.s.e subset. Here we will work with a base field $\F_q$ whose size is fixed and independent of the code dimension $k$, which will lead both to constant alphabet size and constant list size. To accommodate the requirement that $k \le q^{O(\zeta \period)}$, we work with a larger ``period" size for the h.s.e sets to evade.
 Recall Observation~\ref{obs:ultra-period-scales} which implies that if $H$ is an $(s,\period,b)$-ultra periodic subspace of $\F_q^k$ for $k=b\period$, then $H$ is also $(su,\period u, b/u)$-periodic for every integer $u\ge 1$ with $u|b$. Thus we can scale up the period size using the ultra-periodicity of the subspace guaranteed by the decoder of Theorem~\ref{thm:folded-GS-ld}. Following Remark~\ref{rmk:periodic-hse}, we actually only need a weaker property to prune via h.s.e sets which can also be ensured without ultra-periodicity. But since we have the stronger property available, we make use of it (this is also for sake of uniformity with the pruning based on subspace designs of Section~\ref{subsec:subfield-subs-design} which can also be applied to the FGS code).


As in the Hermitian case, instead of encoding arbitrary $\mv{f} \in \F_q^k$ by the folded Garcia-Stichtenoth code, we will restrict the messages $\mv{f}$ to belong to the range of our h.s.e set. This will ensure that the affine space of solutions  can be efficiently pruned to a small list.

\begin{theorem}
  	\label{thm:final-fgs+hse}
	Let $r$ be a prime power, $q=r^2$, and $e\ge 2$ be an integer, and $\zeta \in (0,1)$.  Let $\period \le k$ be a multiple of $(r-1)$, say $\period = u (r-1)$ for a positive integer $u$.
	Let $k \le q^{\zeta \period/2}$ be a positive integer.

	Let $s,m$ be positive integers satisfying $1 \le s \le m \le r-1$ and $s < \zeta r/12$. Finally let $N$ be an integer satisfying $k + 2 r^e \le N m \le (r-1) r^e$.
	
	Consider the code $C_2$ with encoding $E_2 : \F_q^{(1-\zeta)^2 k} \to (\F_q^m)^N$ defined as
	\[ E_2(\mv{x})=\widetilde{FGS}_e(N,k,q,m)(\hse(\mv{x})) \ , \]
	
	for a random map $\hse : \F_q^{(1-\zeta)^2 k} \to \F_q^k$ as constructed in Theorem~\ref{thm:hse-set-final}
	for a period size $\period$ and $b = \lceil \frac{k}{\period} \rceil$.
	
	
	The code $C_2$ has rate $R=(1-\zeta)^2k/(Nm)$, can be encoded in $\mathrm{poly}(N m q^{\zeta \period})$ time, and w.h.p over the choice of $\hse$, it is $(\tau,\ell)$-list decodable in time $\mathrm{poly}(N m q^{\zeta \period})$ for $\ell \le O(1/(R\zeta))$ and
	\begin{equation}
	\label{eq:tau-fgs}
	\tau = \frac{s}{s+1} \left( 1 - \frac{k}{N(m-s+1)} \right) - \frac{3m}{m-s+1} \frac{r^e}{m N} \ .
	\end{equation}
\end{theorem}
\begin{proof}
	The proof is very similar to that of Theorem~\ref{thm:final-herm+se}.
	The claim about the rate is clear, and the encoding time follows from the time to compute $\hse$ recorded in Theorem~\ref{thm:hse-set-final}.
	
	The genus $\g_e$ is now upper bounded by $r^e$, and so the condition on $N,m$ meets the requirement for the construction of the folded  the Garcia-Stichtenoth tower based code in Definition \ref{def:fgs}, and the claimed value of the error fraction $\tau$ satisfies \eqref{eq:fgs-error-frac}. By Lemma \ref{lem:x5.6}, we know that the candidate messages found by the decoder lie in one of at most $q^{2Nm}$ possible  $(s,r-1, \lceil \frac{k}{r-1}\rceil)$-periodic subspaces.\footnote{Technically, it will belong to $\proj_k(W)$ of such a periodic subspace $W$, but we may pretend that there are $(r-1) \lceil k/(r-1)\rceil - k$ extra dummy coordinates which we decode. Or we can just assume for convenience that $r-1$ divides $k$.}
	Now by Observation \ref{obs:scale-up}, each of these subspaces is also $(su,\period, \lceil \frac{k}{\period}\rceil)$-periodic.
	One can check that the conditions of Theorem \ref{thm:hse-set-final} are met for our choice of $\zeta,s,q,k,\period$ and taking $su$ to play the role of $s$ (since $s < \zeta r/12$, we have $s u < \zeta \period/10$).
	
	Appealing to Theorem \ref{thm:hse-set-final} with the choice $c = 2Nm/k =O(1/R)$, we conclude that there is a decoding algorithm running in time $\mathrm{poly}(N m q^{\zeta \period})$ to list decode $C_2$ from a fraction $\tau$ of errors, outputting at most $O(1/(R\zeta))$ messages in the worst-case.
\end{proof}

\paragraph{\textsl{Choosing parameters.}}
Finally, all that is left to be done is to pick parameters to show how the above can lead to optimal rate list-decodable codes over a constant-sized alphabet which further achieve very good list-size.

Let $\eps > 0$ be a small positive constant, and a family of codes of length $N$ (assumed large enough) and rate $R\in (0,1)$ is sought. Pick $n$ to be a growing parameter.

Let us pick $s = \Theta(1/\eps)$, $m = \Theta(1/\eps^2)$,  $\zeta = \eps/12$, $r = \Theta(1/\eps)$, $q=r^2$, and $e = \lceil \frac{\log n}{\log r} \rceil$, $N = \lfloor\frac{(r-1)r^e}{m}\rfloor$, and $k = R N m (1+\eps)$. This ensures that (i) there are at least $n=Nm$ rational places and so we get a code of length at least $n/m=N$, (ii) the rate of the code $C_2$ is at least $R$, and (iii) the error fraction \eqref{eq:tau-fgs} is at least $1-R-\eps$.

The remaining part is to pick a multiple $\period$ of $(r-1)$ so that the $k \le q^{\zeta \period/2}$ condition is met. This can be achieved by choosing $u = \lceil \frac{\log n}{\log (1/\eps)} \rceil$ and $\period =  (r-1) u$.
With these choices, we can conclude the following, which is our main randomized code construction promised in part (i) of Theorem~\ref{thm:main-intro}.

\begin{theorem}[Main; Corollary to Theorem \ref{thm:final-fgs+hse} with above choice of parameters]
\label{thm:final-monte-carlo}
	For any $R \in (0,1)$ and positive constant $\eps \in (0,1)$, there is a Monte Carlo construction of a family of codes of rate at least $R$ over an alphabet size $\exp(O(\log(1/\eps)/\eps^2))$ that are encodable and $(1-R-\eps, O(1/(R\eps))$-list decodable in $\mathrm{poly}(N)$ time, where $N$ is the block length of the code.
\end{theorem}

It may be instructive to recap why the Hermitian tower could not give a result like the above one. In the Hermitian case, the ratio $\g_e/n$ of the genus to the number of rational places was about $e/r=e/\sqrt{q}$, and thus we needed $q > e^2$. Since the period $\period$ was about $q$, the running time of the decoder was bigger than $q^{\Omega(\zeta q)}$, whereas the length of the code was at most $q^{O(\sqrt{q})}$. This dictated the choice of $q \approx \log^2 n$, and then to keep the running time polynomial, we had to take $\zeta \approx (\log n \log\log n)^{-1}$.

\subsection{Pruning using subspace designs}
\label{sec:put-together}
\label{subsec:subfield-subs-design}
We now combine our the constructions of Reed-Solomon and Garcia-Stichtenoth codes with evaluation points in a subfield (from Section~\ref{new-subsec:decoding-rs} and Section~\ref{subsec:decoding-gs} respectively) with a pre-coding step that restricts the message coefficients to (respectively) the subspaces constructed in Theorem~\ref{thm:final-sd-subspace} (using subspace designs) and Theorem~\ref{thm:final-csd-subspace} (using cascaded subspace designs). These subcodes will then be list decodable with smaller list-size in polynomial time.

In particular, the combination of Garcia-Stichtenoth codes with cascaded subspace design will give us our final main deterministic code construction (part (ii) of Theorem~\ref{thm:main-intro}).

\subsubsection{Subcodes of Reed-Solomon codes}
\label{subsec:put-together-rs}
We begin with the case of Reed-Solomon codes as considered in Section~\ref{new-subsec:decoding-rs}.  For a finite field
$\F_q$, constant $\eps > 0$, integers $n,k,m,s$ satisfying $1 \le k <
n \le q$ and $1 \le s \le \eps m/12$, we will define subcodes of
$\mathrm{RS}^{(q,m)}[n,k]$.  Below for a polynomial $f \in
\F_{q^m}[X]$ with $k$ coefficients $f_0,f_1,\dots,f_{k-1}$, we denote
by $\mv{f_0},\mv{f_1},\dots,\mv{f_{k-1}}$ the representation of these
coefficients as vectors in $\F_q^m$ by fixing some $\F_q$-basis of
$\Fm$.

\newcommand{\wRS}{\widehat{\mathsf{RS}}}

Define the subcode $\wRS$ of $\mathrm{RS}^{(q,m)}[n,k]$ consisting of the encodings of $f \in \F_{q^m}[X]$ such that $(\mv{f_0},\mv{f_1},\dots,\mv{f_{k-1}}) \in V$ for a subspace $V \subseteq \F_q^{mk}$ guaranteed by Theorem \ref{thm:final-sd-subspace}, when applied with the parameter choices
\[ \dm =m; \quad b=k; \quad r=s-1; \quad \eta = \eps \ . \]
Note that $\wRS$ is an $\F_q$-linear code over the alphabet $\Fm$ of rate
$(1-\eps)k/n$, and it can be constructed in deterministic $q^{O(m^2)}$ time, or Las Vegas $q^{O(ms)}$ time.\footnote{It can also be constructed in Monte Carlo $(q/\eps)^{O(1)}$ time by randomly picking  subspaces for the \sd\ used to construct $V$ in Theorem \ref{thm:final-sd-subspace}.}
\begin{theorem}
\label{thm:sd-subcode-rs}
Given an input string $\mv{y} \in \F_{q^m}^n$, a basis of an affine subspace of dimension at most $O(s/\eps)$ that includes all codewords of the above subcode $\wRS$ that lie within Hamming distance $\frac{s}{s+1} (n-k)$ from $\mv{y}$ can be found in deterministic $\mathrm{poly}(n,\log q, m)$ time.
\end{theorem}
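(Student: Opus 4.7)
The plan is to simply combine the linear-algebraic RS list decoder of Section~\ref{subsec:decoding-rs} with the subspace-design-based restriction in the definition of $\wRS$, and read off the conclusion from Theorem~\ref{thm:final-sd-subspace}.

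First, I will run the interpolation step of Lemma~\ref{lem:Q-rs} on the received word $\mv{y}\in\F_{q^m}^n$ to compute, via a homogeneous $\F_{q^m}$-linear system of size $O(n)$, a nonzero polynomial $Q\in\mathcal{P}$ of the form \eqref{eq:form-of-Q-rs} with the degree parameter $D$ from \eqref{eq:choice-of-D-rs}. By Lemma~\ref{lem:Q-soln-rs}, every $f\in\F_{q^m}[X]$ of degree $<k$ whose RS encoding is within Hamming distance $\frac{s}{s+1}(n-k)$ of $\mv{y}$ satisfies the functional equation
\[ A_0(X) + A_1(X) f(X) + A_2(X) f^\sigma(X) + \cdots + A_s(X) f^{\sigma^{s-1}}(X) = 0 \ . \]
By Lemma~\ref{lem:space-of-solns-rs}, the set of coefficient vectors $(\mv{f_0},\dots,\mv{f_{k-1}})\in\F_q^{mk}$ arising from such solutions $f$ is an $(s-1,m,k)$-periodic affine subspace $T\subseteq\F_q^{mk}$, and a canonical representation of $T$ (as in Definition~\ref{def:periodic-rep}) is computable in $\mathrm{poly}(k,m,\log q)$ time.

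Next, I intersect $T$ with the subspace $V\subseteq\F_q^{mk}$ used to define $\wRS$. Since every codeword of $\wRS$ within the stated decoding radius has its coefficient vector in $T\cap V$, it suffices to output a basis of the affine space $T\cap V$. The parameters $\dm=m$, $b=k$, $r=s-1$, $\eta=\eps$ used to construct $V$ satisfy the hypothesis $r\le\eta\dm/4$ of Theorem~\ref{thm:final-sd-subspace} (since $s\le \eps m/12$), so Theorem~\ref{thm:final-sd-subspace} guarantees that $T\cap V$ is an affine subspace of dimension at most $2r/\eta = 2(s-1)/\eps = O(s/\eps)$, as required.

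Finally, the computation of the basis is standard linear algebra: the canonical representation of $T$ from Definition~\ref{def:periodic-rep} is a system of $\F_q$-linear equations (plus an affine shift) on the $mk$ coordinates, and $V$, being a deterministically precomputed $\F_q$-subspace, is also specified by an $\F_q$-linear system. Stacking these two systems and running Gaussian elimination over $\F_q$ yields a basis for $T\cap V$ in $\mathrm{poly}(mk,\log q) = \mathrm{poly}(n,m,\log q)$ time. There is no real obstacle in any of these steps; the only thing to watch is that the $O(s/\eps)$ dimension bound is guaranteed even though the ambient periodic subspace $T$ has dimension as large as $(s-1)k$, and this is precisely the content of Theorem~\ref{thm:final-sd-subspace} (which reduces to Lemma~\ref{lem:sd-pruning} applied to the explicit \sd\ underlying $V$).
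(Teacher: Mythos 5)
Your proposal is correct and follows essentially the same route as the paper's proof: compute the $(s-1,m,k)$-periodic subspace $T$ via Lemmas~\ref{lem:Q-rs}, \ref{lem:Q-soln-rs}, and \ref{lem:space-of-solns-rs}, then intersect with the precomputed subspace $V$ and invoke Theorem~\ref{thm:final-sd-subspace} for the $O(s/\eps)$ dimension bound. The paper's proof is just a terser version of the same argument, and your parameter check ($s-1 \le \eps m/4$ from $s \le \eps m/12$) is a correct detail the paper leaves implicit.
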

\begin{proof}
By Lemma \ref{lem:space-of-solns-rs}, we can compute the $(s-1,m,k)$-periodic subspace $T$ of messages whose Reed-Solomon encodings can be within Hamming distance $\frac{s}{s+1} (n-k)$ from $\mv{y}$. By Theorem \ref{thm:final-sd-subspace}, the intersection $T \cap V$ is
is an affine subspace over $\F_q$ of dimension $d = O(s/\eps)$. Since both steps involve only basic linear algebra, they can be accomplished using $\mathrm{poly}(n,m)$ operations over $\F_q$.
\end{proof}
By picking $s = \Theta(1/\eps)$ and $m = \Theta(1/\eps^2)$ in the above construction, we can conclude the following.
\begin{cor}
\label{cor:cap-achieving-rs-deterministic}
For every $R \in (0,1)$ and $\eps > 0$, and all large enough integers $n < q$ with $q$ a prime power, one can construct a rate $R$ $\F_q$-linear subcode of a Reed-Solomon code of length $n$ over $\Fm$, such that the code can be (i) encoded in $(n/\eps)^{O(1)}$ time and (ii) list decoded from a fraction $(1-\eps)(1-R)$ of errors in $(n/\eps)^{O(1)}$ time, outputting a subspace over $\F_q$ of dimension $O(1/\eps^2)$ including all close-by codewords.
The code can be constructed deterministically in $\mathrm{poly}(q)$ time.
\end{cor}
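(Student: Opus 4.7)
The plan is to instantiate Theorem~\ref{thm:sd-subcode-rs} with a careful but routine choice of parameters. Given $R \in (0,1)$ and $\eps > 0$, I would set $\eps^\ast := c_1 \eps$ for a sufficiently small absolute constant $c_1$, and take $s := \lceil 1/\eps^\ast \rceil$ and $m := \lceil 12 s/\eps^\ast \rceil = \Theta(1/\eps^2)$, which satisfies the constraint $1 \le s \le \eps^\ast m/12$ required by Theorem~\ref{thm:sd-subcode-rs}. The claim is vacuous unless $R + \eps < 1$, in which case I would set $k := \lfloor R n / (1-\eps^\ast) \rfloor \le n$, so that the $\F_q$-rate of the subcode $\wRS$ of $\mathrm{RS}^{(q,m)}[n,k]$, namely $(1-\eps^\ast) k/n$, is at least $R$.

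The decoding radius delivered by Theorem~\ref{thm:sd-subcode-rs} is $\frac{s}{s+1}(n - k)$, which as a fraction equals $\frac{s}{s+1}\bigl(1 - R/(1-\eps^\ast)\bigr) \ge (1 - O(\eps^\ast))(1 - R - O(\eps^\ast))$. A short arithmetic calculation shows this is at least $(1-\eps)(1-R)$ for $c_1$ chosen small enough (all lower-order $\eps$-terms are absorbed by the slack in $c_1$). The output is an $\F_q$-affine subspace of dimension $O(s/\eps^\ast) = O(1/\eps^2)$, matching the stated list bound, and both encoding (which is a linear map) and the decoding procedure of Theorem~\ref{thm:sd-subcode-rs} run in $\mathrm{poly}(n, m, \log q) = (n/\eps)^{O(1)}$ time.

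The final step is to verify the deterministic construction of the pre-coding subspace $V \subseteq \F_q^{mk}$. For this I would appeal to Theorem~\ref{thm:final-sd-subspace} with $\dm = m$, $b = k$, $r = s-1$, and $\eta = \eps^\ast$; the hypothesis $r \le \eta \dm/4$ reduces to $1/\eps^\ast \le 3/\eps^\ast$, and $b = k \le n < q$ is given by assumption on $n$. Theorem~\ref{thm:final-sd-subspace} then produces $V$ deterministically in $q^{O(\dm)} = q^{O(1/\eps^2)}$ time, which is $\mathrm{poly}(q)$ for each fixed $\eps$, as required. I do not anticipate any essential obstacles: the delicate point is merely aligning the $(1-\eps^\ast)$ rate loss and the $\frac{s}{s+1}$ decoding slackness so that they jointly preserve a radius fraction of at least $(1-\eps)(1-R)$, and this is handled entirely by the constant $c_1$.
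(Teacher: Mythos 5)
Your proposal is correct and follows essentially the same route as the paper, which proves the corollary simply by instantiating Theorem~\ref{thm:sd-subcode-rs} with $s = \Theta(1/\eps)$ and $m = \Theta(1/\eps^2)$; you have merely spelled out the routine parameter arithmetic. The only nitpick is that your constant $c_1$ cannot be absolute: since the target radius $(1-\eps)(1-R)$ leaves slack only $\eps(1-R)$ beyond $1-R$, you need $c_1 \lesssim 1-R$ (and likewise the remark that the claim is ``vacuous unless $R+\eps<1$'' is not quite right), but as $R$ is a fixed constant this is harmless and does not affect correctness.
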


We note that the above list decoding guarantee is in fact weaker than what is achieved for folded Reed-Solomon codes in \cite{GW-tit13}, where the codewords were pinned down to a dimension $O(1/\eps)$ subspace. We can improve the list size above to $\mathrm{poly}(1/\eps)$ using pseudorandom subspace-evasive sets
as in \cite{GW-tit13}, or to $\exp(\eps^{-O(1)})$ using the explicit subspace-evasive sets from \cite{DL-stoc12}. The main point of the above result is not the parameters but that an explicit subcode of RS codes has optimal list decoding radius with polynomial complexity.

\subsubsection{Subcodes of Garcia-Stichtenoth codes}
\label{subsec:put-together-ag}
We now pre-code the codes constructed in Section \ref{subsec:decoding-gs}.
For a finite field $\F_q$, constant $\eps > 0$, and integers $s,m$ satisfying
$1 \le s \le O(\eps m/\log(1/\eps))$ and $m \ge \Omega(1/\eps^2)$, we will define subcodes of $\mathrm{GS}^{(q,m)}[N,k]$ guaranteed by Theorem \ref{thm:gs-decoding}. Note that messages space of this code can be identified with $\F_q^{mk}$.

\newcommand{\wGS}{\widehat{GS}}
Define the subcode $\wGS$ of $\mathrm{GS}^{(q,m)}[N,k]$ consisting of the encodings of
a subspace $U \subseteq \F_q^{mk}$ guaranteed by Theorem \ref{thm:final-csd-subspace}, when applied with the parameter choices
\begin{equation}
\label{eq:params-choice-gs-csd}
 \eta=\eps; \quad r=s-1; \quad \dm=m; \quad \dims = km \ .
 \end{equation}
Note that $\wGS$ is an $\F_q$-linear code over the alphabet $\Fm$ of rate
$(1-\eps)k/N$. Also, it can be constructed in $\mathrm{poly}(k,m,q)$ time by virtue of the construction complexity of $U$.
\begin{lemma}
\label{lem:csd-subcode-gs}
Given an input string $\mv{y} \in \F_{q^m}^N$, a basis of an affine subspace of dimension at most
\[ (s/\eps)^{2^{O(\log^* (km)}} \]
that includes all codewords of the above subcode within Hamming distance $\frac{s}{s+1} (N-k) - 3N/(\sqrt{q}-1)$ from $\mv{y}$ can be found in deterministic $\mathrm{poly}(n,\log q, m)$ time.
\end{lemma}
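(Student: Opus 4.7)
The plan is to directly compose the linear-algebraic list decoder for $\mathrm{GS}^{(q,m)}[N,k]$ from Section~\ref{subsec:decoding-gs} with the intersection bound for canonical subspaces of cascaded subspace designs from Theorem~\ref{thm:final-csd-subspace}. The conceptual work has already been done in those sections; what remains is essentially a parameter match.

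First, I would invoke Corollary~\ref{cor:gs-decoding} (equivalently, Theorem~\ref{thm:gs-decoding}) on the received word $\mv{y}$. This produces, in deterministic $\mathrm{poly}(N,m,\log q)$ time, a canonical representation of an $(s-1,m)$-ultra periodic affine subspace $T \subset \F_q^{mk}$ containing the coefficient vectors (in the $h_i$-basis of Section~\ref{subsec:decoding-gs}) of every message polynomial whose $\mathrm{GS}^{(q,m)}[N,k]$-encoding lies within Hamming distance $\frac{s}{s+1}(N-k) - \frac{3N}{\sqrt{q}-1}$ of $\mv{y}$. Concretely, this step requires solving the interpolation linear system for a polynomial $Q$ of the form \eqref{eq:form-of-Q} and then extracting the ultra-periodic constraints on $(f_0,\ldots,f_{k-1})$ given by Lemma~\ref{lem:subs-struc-gs}.

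Next, since the subcode $\wGS$ only encodes coefficient vectors in the pre-coding subspace $U \subset \F_q^{mk}$, every valid close-by codeword corresponds to a message lying in the affine subspace $T \cap U$. Theorem~\ref{thm:final-csd-subspace}, instantiated with the parameters \eqref{eq:params-choice-gs-csd} (so $r=s-1$, $\eta=\eps$, $\dm=m$, $\dims=km$), guarantees precisely that $\dim(T \cap U) \le ((s-1)/\eps)^{2^{O(\log^* (km))}} \le (s/\eps)^{2^{O(\log^* (km))}}$. This is the claimed dimension bound. Once a canonical representation of $T$ and a basis for $U$ are in hand, a basis for the affine subspace $T \cap U$ is computed by routine linear algebra over $\F_q$ in $\mathrm{poly}(km,\log q)$ time.

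Putting the pieces together yields the deterministic $\mathrm{poly}(N,m,\log q)$ running time and the claimed dimension bound on the output affine subspace. The only real thing to double-check is that the parameter constraints required by Theorem~\ref{thm:final-csd-subspace}, namely $m \ge c(s-1)\eps^{-1}\log(s/\eps)$ and $q \ge 4$, are implied by the assumptions $s \le O(\eps m/\log(1/\eps))$ and $m \ge \Omega(1/\eps^2)$ made in the definition of $\wGS$; this is a mechanical verification and represents the only mild subtlety in the proof.
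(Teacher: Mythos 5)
Your proposal is correct and follows essentially the same route as the paper's proof: invoke Theorem~\ref{thm:gs-decoding} to pin the candidate messages to an $(s-1,m)$-ultra periodic subspace $T$, then apply Theorem~\ref{thm:final-csd-subspace} with the parameters \eqref{eq:params-choice-gs-csd} to bound $\dim(T\cap U)$, with both steps being polynomial-time linear algebra. Your extra remark about verifying the hypothesis $\dm \ge c r \eta^{-1}\log(r/\eta)$ from the standing assumptions on $s$ and $m$ is a reasonable (and correct) point of care that the paper leaves implicit.
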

\begin{proof}
By Theorem \ref{thm:gs-decoding}, we can compute the $(s-1,m)$-ultra periodic subspace $T$ of messages whose encodings are within Hamming distance $\frac{s}{s+1} (N-k) - 3 N/(\sqrt{q}-1)$ from $\mv{y}$. By Theorem \ref{thm:final-csd-subspace}, for the above choice of parameters \eqref{eq:params-choice-gs-csd}, the intersection $T \cap U$
is an affine subspace over $\F_q$ of dimension  $(s/\eps)^{2^{O(\log^* (km)}}$.
Since both steps involve only basic linear algebra, they can be accomplished using $\mathrm{poly}(N,m)$ operations over $\F_q$.
\end{proof}

By taking $q = \Theta(1/\eps^2)$, and choosing $s = \Theta(1/\eps)$ and $m = \Theta(\eps^{-2} \log(1/\eps))$ in the above lemma, we conclude our main result (stated informally as part (ii) of Theorem~\ref{thm:main-intro}) concerning the explicit construction of codes list decodable up to the Singleton bound over fixed alphabets and very slowly growing list-size.

\begin{theorem}[Main deterministic code construction]
\label{thm:cap-achieving-gs-deterministic}
For every $R \in (0,1)$ and $\eps > 0$, there is a deterministic polynomial time constructible family
of error-correcting codes of rate $R$ over an alphabet of size $\exp(O(\Ge^{-2} \log^2 (1/\Ge)))$ that can be list decoded in polynomial time from a fraction $(1-R-\Ge)$ of errors, outputting a list of size at most $\exp_{1/\Ge}\left(\exp_{1/\Ge}(\exp(O(\log^*N)))\right)$, where $N$ is block length of the code.
\end{theorem}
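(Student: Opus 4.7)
The plan is to apply Lemma \ref{lem:csd-subcode-gs} with a careful choice of parameters that simultaneously balances the alphabet size $q^m$, the fraction of correctable errors, and the final dimension bound. Concretely, I would set
\[
s = \lceil 4/\eps \rceil, \qquad q = \Theta(1/\eps^2) \quad (\text{an even prime power}), \qquad m = \Theta(\eps^{-2}\log(1/\eps)),
\]
and pick $k = \lfloor R(1+\eps/2) N\rfloor$, where $N$ is taken to be of the form $q^{e/2}(\sqrt q - 1)$ permitted by Theorem \ref{thm:gs-decoding}. The first sanity check is that these parameters meet the hypotheses of Lemma \ref{lem:csd-subcode-gs}, which require $1 \le s \le O(\eps m/\log(1/\eps))$ and $m \ge \Omega(1/\eps^2)$, and also those of Theorem \ref{thm:final-csd-subspace} applied with $\eta=\eps$, $r=s-1$, $\dm = m$, $\dims = km$: the condition $\dm \ge c r \eta^{-1} \log(r/\eta)$ becomes $m \ge c \eps^{-2} \log(1/\eps)$, which is exactly what our choice of $m$ ensures.

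Given that, the rate of $\wGS$ is $(1-\eps)k/N \ge R$ for large enough $N$, and the alphabet has size $q^m = \exp(O(\eps^{-2}\log^2(1/\eps)))$, matching the claimed bound. For the fraction of correctable errors, Lemma \ref{lem:csd-subcode-gs} guarantees decoding up to
\[
\frac{s}{s+1}(N-k) - \frac{3N}{\sqrt q - 1}
\]
errors. Dividing by $N$, using $s/(s+1) \ge 1 - \eps/4$, $k/N \le R(1+\eps/2) \le R + \eps/4$, and $3/(\sqrt q-1) \le \eps/4$ for our choice of $q$, a short calculation shows the fractional decoding radius is at least $1-R-\eps$.

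For the list size, Lemma \ref{lem:csd-subcode-gs} pins down the candidate messages to an $\F_q$-affine subspace of dimension at most $(s/\eps)^{2^{O(\log^*(km))}}$. With $s=\Theta(1/\eps)$ we have $s/\eps = \Theta(1/\eps^2)$, and since $m$ depends only on $\eps$, $\log^*(km) = \log^* N + O(1)$, so the dimension is at most $(1/\eps)^{2^{O(\log^* N)}}$. The number of codewords in this subspace is at most $q^{\text{dim}} = (1/\eps)^{O(1) \cdot (1/\eps)^{2^{O(\log^* N)}}}$, which after cleaning up constants is bounded by $\exp_{1/\eps}\bigl(\exp_{1/\eps}(\exp(O(\log^* N)))\bigr)$, as claimed. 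Polynomial time for both construction and decoding follows from the $\mathrm{poly}(k,m,q)$ construction complexity of the canonical subspace in Theorem \ref{thm:final-csd-subspace} and the $\mathrm{poly}(N, m, \log q)$ decoding time from Lemma \ref{lem:csd-subcode-gs}, each of which becomes $\mathrm{poly}(N)$ for fixed $\eps$.

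The hardest part of this final step is really just verifying that the three pressures on $m$, namely (i) $m$ large enough to invoke the subspace-design construction with parameter $r = s-1$, (ii) $q^m$ small enough to match the claimed alphabet bound, and (iii) the relation $s \le O(\eps m/\log(1/\eps))$, are mutually compatible; our choice $m = \Theta(\eps^{-2}\log(1/\eps))$ is essentially forced and it is reassuring that it works simultaneously. Everything else is a routine verification given Theorem \ref{thm:gs-decoding}, Theorem \ref{thm:final-csd-subspace}, and Lemma \ref{lem:csd-subcode-gs}.
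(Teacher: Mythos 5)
Your proposal is correct and follows exactly the paper's route: Theorem~\ref{thm:cap-achieving-gs-deterministic} is obtained by instantiating Lemma~\ref{lem:csd-subcode-gs} (and hence Theorem~\ref{thm:gs-decoding} and Theorem~\ref{thm:final-csd-subspace}) with $q=\Theta(1/\eps^2)$, $s=\Theta(1/\eps)$, $m=\Theta(\eps^{-2}\log(1/\eps))$, which is precisely what the paper does. The one quibble is your choice $k=\lfloor R(1+\eps/2)N\rfloor$: since the subcode has rate $(1-\eps)k/N$, this gives $R(1-\eps)(1+\eps/2)<R$, so you need $k\ge RN/(1-\eps)$ (e.g.\ $k=\lceil R(1+2\eps)N\rceil$) together with the usual constant-factor rescaling of $\eps$ in the decoding-radius calculation.
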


\medskip \noindent \textbf{\large Acknowledgments.} We are grateful to the anonymous reviewers for detailed and perceptive comments which led to significant improvements in the organization and presentation of the paper.

\providecommand{\bysame}{\leavevmode\hbox to3em{\hrulefill}\thinspace}
\providecommand{\MR}{\relax\ifhmode\unskip\space\fi MR }
\providecommand{\MRhref}[2]{%
  \href{http://www.ams.org/mathscinet-getitem?mr=#1}{#2}
}
\providecommand{\href}[2]{#2}


\end{document}